\newtheorem{lemma}{Lemma}[section]
\newtheorem{Definition}{Definition}[section]
\newtheorem{Example}{Example}
\newtheorem{theorem}{Theorem}[section]
\newcommand{\zhiyi}[1]{\textcolor{blue}{(Zhiyi: #1)}}
\newcommand{\Opt}{\mathrm{Opt}}
\renewcommand{\vec}[1]{\bm{#1}}
\newcommand{\samplenum}{m}
\newcommand{\type}{t}
\newcommand{\typenum}{T}
\newcommand{\typedist}{\mathcal{T}}
\newcommand{\typeset}{[\typenum]}
\newcommand{\val}{v}
\newcommand{\valuenum}{V}
\newcommand{\valuedist}{\mathcal{F}}
\newcommand{\valuedistset}{\mathcal{M}}
\newcommand{\empiricaldist}{\mathcal{E}}
\newcommand{\valueset}{[\valuenum]}
\newcommand{\price}{p}
\newcommand{\segment}{\sigma}
\newcommand{\segmentmap}{\mathcal{G}}
\newcommand{\segmentset}{\Sigma}
\newcommand{\segmentdist}{\mathcal{S}}
\newcommand{\objweight}{\lambda}
\newcommand{\Rev}{\mathrm{Rev}}
\newcommand{\CS}{\mathrm{CS}}
\newcommand{\SW}{\mathrm{SW}}
\newcommand{\twothird}{\tfrac{2}{3}}
\newcommand{\simplex}{\Delta}
\newcommand{\centroid}{\vec{\tau}}
\newcommand{\point}{x}
\newcommand{\pointbelow}{z}
\newcommand{\area}{X}
\newcommand{\areabelow}{Z}
\newcommand{\measure}{\mu}
\newcommand{\R}{\mathbb{R}}
\renewcommand{\Pr}{\mathbf{Pr}}
\newcommand{\E}{\mathbf{E}}
\newcommand{\poly}{\mathrm{poly}}
\newcommand{\twopartdef}[4]
{
	\left\{
	\begin{array}{ll}
		#1 &  #2 \\
		#3 &  #4
	\end{array}
	\right.
}
\begin{document}
\title{\Large Algorithmic Price Discrimination\thanks{To appear in 31st Annual ACM-SIAM Symposium on Discrete Algorithms (SODA), 2020.}}

\author{
Rachel Cummings \thanks{Georgia Institute of Technology. Email: rachelc@gatech.edu. R.C. supported in part by a Mozilla Research Grant, a Google Research Fellowship, and NSF grant CNS-1850187. Part of this work was completed while R.C. was visiting the Simons Institute for the Theory of Computing.}
\and
Nikhil R.\ Devanur \thanks{Microsoft Research. Email: nikdev@microsoft.com}
\and
Zhiyi Huang \thanks{The University of Hong Kong. Email: zhiyi@cs.hku.hk. This work is supported in part by an RGC grant HKU17203717E.}
\and
Xiangning Wang \thanks{The University of Hong Kong. Email: xnwang@cs.hku.hk}
}
\date{}

\begin{titlepage}
\thispagestyle{empty}
\maketitle
\begin{abstract}
\thispagestyle{empty}
We consider a generalization of the third degree price discrimination problem studied in \cite{bergemann2015limits}, 
 where an intermediary between the buyer and the seller 
 can design market segments 
 to maximize any linear combination of consumer surplus and seller revenue. 
Unlike in \cite{bergemann2015limits}, we assume that 
 the intermediary only has partial information about the buyer's value. 
We consider three different models of information, 
 with increasing order of difficulty. 
In the first model, we assume that the intermediary's information 
 allows him to construct 
 a probability distribution of the buyer's value. 
Next we consider the sample complexity model, 
 where we assume that the intermediary only sees samples from this distribution. 
Finally, we consider a bandit online learning model, 
 where the intermediary can only observe past purchasing decisions of the buyer, 
 rather than her exact value. 
For each of these models, we present algorithms 
 to compute optimal or near optimal market segmentation. 

\end{abstract}
\end{titlepage}


\section{Introduction}
\label{sec:intro}

Third degree price discrimination occurs when a seller uses auxiliary information about buyers to offer different prices to different populations, e.g., student and senior discounts 
for movie tickets. 
A modern version of this arises in the context of online platforms that match sellers and buyers. 
Here an intermediary observes information about 
buyers and may pass on some of this information to the seller to help him price discriminate. 
One natural example where price discrimination could be (and often is) used in practice is an \emph{ad exchange}, which matches buyers and sellers of online ad impressions. 
A buyer is an advertiser,
and a seller is a publisher, 
and the impression is sold via an auction 
where the seller sets a reserve price. 
The ad exchange commonly has additional data about the user viewing the impression or about the buyers. 
It could share some of this data with the seller before he sets the reserve price. 


The seminal work of \cite{bergemann2015limits}
shows the following surprising result in such a setting. 
Usually, there is a tradeoff between \emph{social welfare}
which is the value generated by the sale, and seller \emph{revenue}. 
Seller revenue is maximized by setting an appropriate price.  
Social welfare is maximized by selling the item to the buyer as long 
as his value for the item is $\ge 0$, but this generates 0 revenue for the seller. 
Almost magically, \cite{bergemann2015limits} show that 
an intermediary can 
\emph{segment the market} such that it not only maximizes social welfare, 
but also guarantees that the seller revenue doesn't change in the process.
This shows that price discrimination can be used to benefit the customer, 
contrary to the belief that it exploits the customer, 
thus making it palatable.

While this is a strong result,
it requires that the intermediary knows the buyer's exact value, 
which is a very strong assumption, 
and is often not satisfied in practice. 
What is more reasonable is that the intermediary can estimate a personalized probability distribution once the buyer is seen. For instance, if the intermediary observes that the buyer is a student, it may estimate a lower willingness-to-pay, but is unlikely to know the buyer's exact value. Realistically, the intermediary may wish to use machine learning techniques to estimate the personalized probability distribution for a new buyer based upon their observed characteristics and past market data. 
\emph{In this paper, we analyze the power of third degree price discrimination in this setting where the intermediary has only a noisy signal of a buyer's value. 
}

\vspace{-3pt}
\subsection{Model and Results}
The seller sells a single item, and 
there is a single buyer.
We consider value distributions with a finite support.
We assume that the intermediary observes finitely many types of buyers;
each type is associated with a different distribution over the values. 
We denote the set of values by $\valueset= \{ 1, 2, \dots, \valuenum \}$,  
the set of types by $\typeset= \{ 1, 2, \dots, \typenum \}$, and 
the distribution over values given a type $\type$ by $\valuedist(\type)$.
We denote the distribution over types by $\typedist$. 
Given this, the mechanism proceeds as follows.  This is illustrated in Figure \ref{fig.timeline}.
\begin{enumerate}
	\item  A segmentation is a pair of
	a segment set $\segmentset,$ and  
	a segment map $\segmentmap : \typeset \mapsto \simplex(\segmentset ),$
	where $\simplex(\cdot)$ denotes the set of all probability distributions over a given domain.   
	Once the intermediary decides on a segmentation, 
	it is revealed to the seller.
	\item When a buyer arrives, her type $\type$ and value $\val$ are drawn from the prior distribution. The intermediary observes only her type $\type$
	but not the value $\val$.
	\item Intermediary draws a segment $\segment$ from the distribution $\segmentmap(\type)$ 
	and reveals it to the seller.   
	\item On observing a segment $\segment$, the seller posts 
	the monopoly price $\price$ for the value distribution conditioned on observing $\segment$. 
	\item Buyer buys the item if and only if her value $ \val \ge \price$. 
\end{enumerate}

The model in \cite{bergemann2015limits} is the special case where 
the type set is identical to the value set, and 
the distributions $\valuedist(\type)$ are point masses.

\begin{figure*}
	\centering
	\includegraphics[width=.9\textwidth]{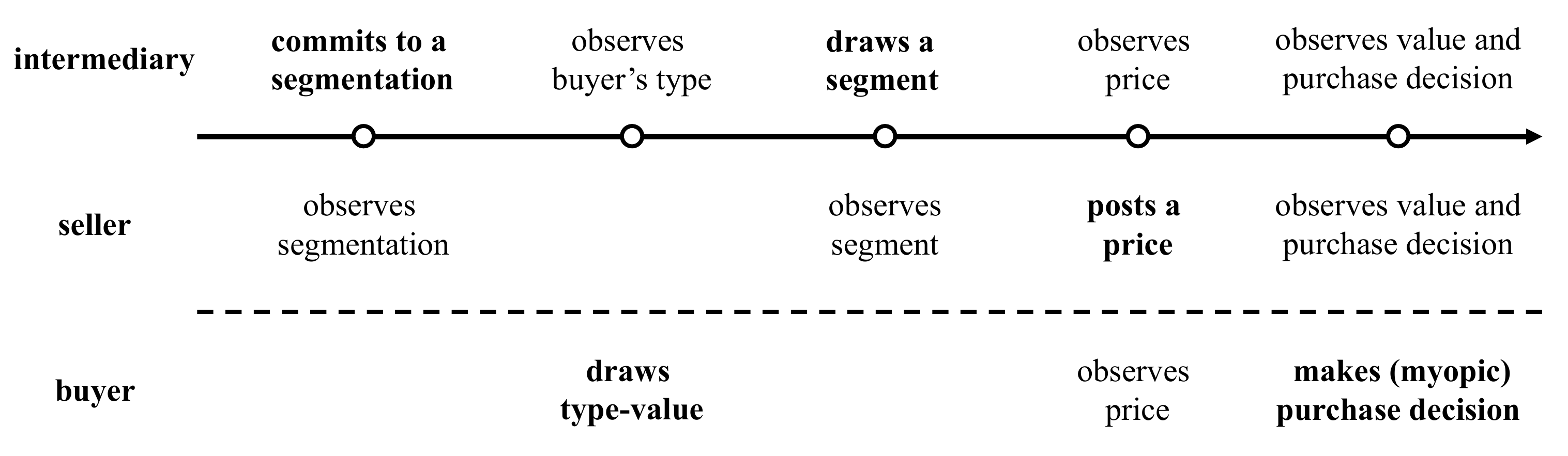}
	\caption{Time line of a single round.}\label{fig.timeline}
\end{figure*}

A price is said to be a \emph{monopoly price} if it maximizes seller revenue 
for a given distribution, 
and this revenue is called the \emph{monopoly revenue}. 
We call the marginal distribution over values $\valueset$ as the \emph{prior} distribution. 
The seller is always guaranteed at least the monopoly revenue for the prior distribution, 
since he can ignore the segment information and set a monopoly price. 

The intermediary's objective is some given positive linear combination of 
seller revenue and \emph{consumer surplus}. 
Consumer surplus is the expectation of the buyer's utility,
which is $v-p$ if the buyer  with value $v$ buys the item at price $p$, 
and is 0 otherwise.
Of particular interest is the special case of
maximizing consumer surplus alone. 
We consider three informational models of increasing difficulty for the intermediary, and show the following results.   

\paragraph{Bayesian:} The intermediary and the seller know the value-type distributions: $\valuedist(\type)$ for all $\type$, and $\typedist$. 
We show that the optimal segmentation can be computed using a linear program (LP). 
The range of achievable values for consumer surplus and revenue depend on the distribution, and one may not always be able to achieve the full consumer surplus as in \cite{bergemann2015limits}. 
Some other nice properties may not hold as well, see Appendix \ref{sec:example} for examples. 

\paragraph{Sample Complexity:} The intermediary and the seller  observe a batch of signal-value pairs sampled from the underlying distribution. We are interested in the number of samples required  to get an $\epsilon$ approximation. 
We first cosntruct a distribution for which no bounded function of $\epsilon$ is sufficient. The $\valuedist(\type)$'s in this example satisfy both boundedness 
and  regularity, which are standard assumptions in the sample complexity of mechanism design. 
This  points to further limitations on what such an intermediary can do: 
in case of noisy signals, the distribution cannot be arbitrary. 
Motivated by this, 
we identify a property about the distributions, 
which we call MHR-like,  
and show (via an algorithmic construction) that a polynomial number of samples are sufficient. This is the technically most challenging part of the paper and most of the focus in the main body of the paper is on this part. 

\paragraph{\bf Online Learning:} The intermediary must learn the segmentation online using only bandit feedback from the buyer's decision to purchase or not at the seller's chosen price.  
The last step of the timeline depicted in Figure \ref{fig.timeline} is modified in this setting so that the intermediary and seller only observe the buyer's purchase decision, not her value.
Here we give \emph{no-regret} learning algorithms.
Clearly, we need certain assumptions on the seller's behavior for any nontrivial result; 
there is not much we can do if the seller picks prices randomly all the time.
Our assumptions can accommodate natural no regret learning algorithms on the seller side, including the Upper-Confidence-Bound (UCB) algorithm and the Explore-then-Commit (ETC) algorithm.

\subsection{Contributions to the Sample Complexity of Mechanism Design}

Pioneered by \cite{BBHM05}, \cite{Elk07}, and \cite{DRY15}, and formalized by \cite{CR14}, the sample complexity of mechanism design, in particular, the revenue maximization problem, has been a focal point in algorithmic game theory in the last few years \cite{MR14, BSV16, DHP16, MR16, HT16, CD17, GN17, GW18, HuangMR/2018/SICOMP, GHZ19}.

This paper adds to the literature of sample complexity of mechanism design in two-folds.
The first one is conceptual: 
we formulate the first sample complexity problem from the viewpoint of an \emph{intermediary} rather than the seller, and for the task of designing \emph{information dispersion} rather than allocations and payments.
We show impossibility results for the general case and, more importantly, identify sufficient conditions under which we derive positive algorithmic results.

Conceptually new models often lead to new technical challenges.
Our second contribution is an algorithmic ingredient that tackles such a new challenge.
Let us start with a thought experiment:
consider a more powerful intermediary who knows the true distributions;
the seller, however, still acts according to some beliefs formed from the observed samples.
\emph{Does the problem become trivial?
Can the intermediary simply run the optimal segmentation w.r.t.\ the true distributions and expect near optimal outcomes?}

The answers turn out to be negative.
Consider a segment for which there are two prices $p^*$ and $p$, such that $p^*$ is the monopoly price with a sale probability close to $1$, while $p$ gets near optimal revenue with a sale probability close to $0$. 
If the intermediary includes this segment, however, the seller's beliefs may overestimate the revenue of $p$ and/or underestimate that of $p^*$ and, thus, deduce that $p$ is the monopoly price instead of $p^*$.
As a result, the resulting social welfare may be much smaller than what the intermediary expects from the true distributions.

This example shows that, unlike existing works on the sample complexity of mechanism design, where the difficulties arise purely from the learning perspective of the problem, our problem presents an extra challenge from the uncertainties in the seller's behavior due to his inaccurate beliefs. 

Intuitively, the intermediary would like to convert the optimal segmentation w.r.t.\ the true distributions into a more robust version, such that for any approximately accurate beliefs that the seller may have, the resulting objective is always close to optimal.
We will refer to this procedure as \emph{robustification}, and the result as the \emph{robustified segmentation}.

\section{Bayesian Model}
\label{sec:bayesian}
We start with an example through which we will illustrate the main ideas in this section. 

\begin{Example}
The example is parameterized by a noise level, $1-z \in [0,1]$. 
 The value set $\valueset = \{1,2,3\}$ is identical to the type set $\typeset$. 
Each type $\type \in \typeset$ corresponds to the distribution 
$\valuedist(\type)$: 
$$
\Pr_{\val \sim \valuedist(t)} [\val] = \twopartdef{ z }{\text{ if } \val = \type,  } { \tfrac{1-z}{2}}{\text{otherwise.}}
$$
When $z=1/3$, all $\valuedist(t)$'s equal the uniform prior, and no non-trivial segmentation is possible. At the other extreme, when $z=1$ each $\valuedist(t)$ is a pointmass at $t$, which is the \cite{bergemann2015limits} model. 
\end{Example}

\paragraph{Simplex View.} 
As observed by \cite{bergemann2015limits}, 
the key idea is to identify segments with probability distributions over $\valueset$. 
The only thing that matters given a segment $\segment$ is 
the posterior distribution on $\valueset$ 
conditioned on the intermediary choosing $\segment$. 
Since $\valueset$ is finite, 
it is easier to think of $\simplex(\valueset)$ as the unit simplex 
in the appropriate dimensions. 
Then, a segment $\segment$ is simply a point in this simplex. 
Further, all that matters for a segmentation is 
the distribution over $\segment$'s as observed by the seller, i.e., 
it is sufficient to specify a distribution over the simplex $\simplex(\valueset)$. 
The only constraint on this distribution is that its expectation must
equal the prior distribution over values, 
which is another point on the simplex, denoted by $\centroid$.

Going further, it is sufficient to only consider 
some special points on the simplex. 
We denote these special 
points by $x^S$, for a subset $S \subseteq \valueset$: 
this is the equal revenue distribution with support equal to the set $S$, i.e., 
these are distributions supported on $S$ such that 
$ p\cdot \Pr[v\ge p] $ is the same for all $p\in S$. 
These special points partition the simplex into 
regions $\area_\val$ for each $\val \in \valueset$: 
each distribution in $\area_\val$ is such that 
$\val$ is a monopoly price for it. 

\begin{figure*}
	\centering
	\begin{subfigure} {.43\textwidth}
		\includegraphics[width=\textwidth]{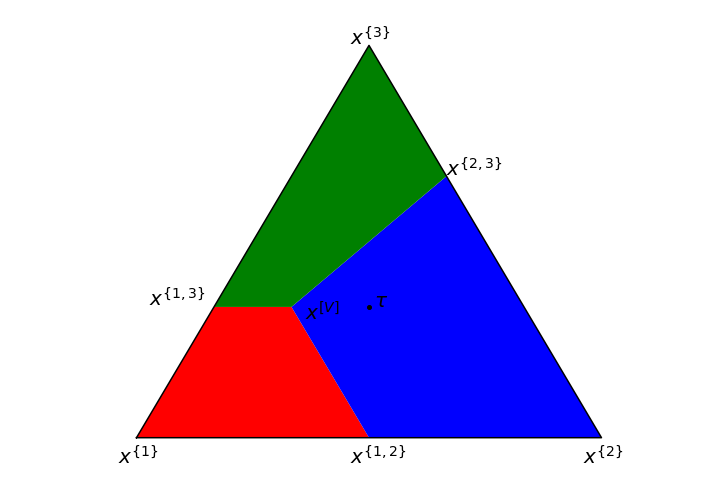}
		\caption{Partition of the simplex into $\area_\val$'s and\\ $\tau =$ the uniform distribution over $\valueset.$\\~ }
		\label{fig:simplex1} 
	\end{subfigure}
	\begin{subfigure}{.43\textwidth}
		\centering
		\includegraphics[width=\textwidth]{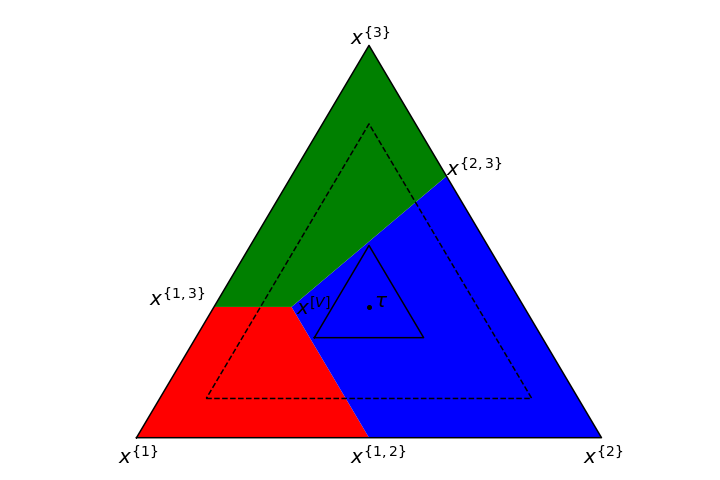}
		\caption{Convex hull of types for $z<1$. Solid inner triangle is for $z=0.49$; dashed one is for $z=0.8$.}
		\label{fig:simplex2} 
	\end{subfigure}	
	\caption{Simplex view for Example 1}
	\label{fig:simplex}
\end{figure*}

We now describe this through Example 1. 
Figure \ref{fig:simplex1} shows 
the unit simplex, and the points $x^S$ for all $S \subseteq \valueset$. 
The red region is $X_1$, blue is $X_2$ and green is $X_3$. 
The uniform distribution over $\valueset$ is represented by $\centroid$. 
An optimal segmentation with no noise (when $z=1$) 
corresponds to representing $\centroid$ as 
the following convex combination of the vertices of the blue polytope: 
\[ \centroid =   \tfrac{1}{6} x^{\{2\}} +  \tfrac{1}{6} x^{\{2,3\}} + \twothird x^{\valueset} .\]
This corresponds to the following segmentation with $\segmentset = \{\sigma_1,\sigma_2,\sigma_3\}$ corresponding to $x^{\{2\}} , x^{\{2,3\}}$ and $ x^{\valueset}$ resp. and $\segmentmap(1) = (0,0,1)$; $\segmentmap(2) = (1/2,1/6,1/3)$; and $\segmentmap(3) = (0,1/3,2/3)$. 

\cite{bergemann2015limits} consider segmentations that only consist of vertices of $X_{p^*}$ 
where $p^*$ is a monopoly price of the prior distribution 
(which is the blue region $X_2$ in Figure \ref{fig:simplex1}). 
They assume that ties for monopoly price are broken in favor of the lowest price, which is the lowest value in the support.
This implies that the item is always sold thus maximizing social welfare.

\paragraph{Generalization of the Simplex View.}
We extend this simplex view to our model. 
When the number of types $\typenum$ is at most the number of values $\valuenum$, 
and the type distributions are non-degenerate, 
we can continue to consider the simplex on the set of values, 
as we have done so far. 
This will be the case for Example 1.
A more general view is to consider the simplex on the set of types.  
Most of the intuition extends to this view, 
although geometrically the picture is somewhat different. 
(This is the view we use in the proofs; 
the simplex on values is used just for illustration.) 

The case when $z<1$ is depicted in Figure \ref{fig:simplex2}. 
The main difference from the previous picture is that 
we are not allowed to choose any point on the simplex for our segmentation. 
Instead, we are restricted to only choose the points in the convex hull of the $\valuedist(\type)$s, for all $\type \in \typeset$.
We denote this convex hull by $\simplex(\typeset)$, by abuse of notation.  
For $z = 0.49$, the figure shows that $\simplex(\typeset)$ is 
contained entirely inside the blue region. 
Thus no matter what segmentation is used, 
the seller always sets the monopoly price of 2; 
segmentation is therefore useless.  
For $z=0.8$, the figure shows that segmentation is possible because $\simplex(\typeset)$ intersects with all three regions, $X_1,X_2,X_3$. 


We introduce some notation now. 
Given any segmentation $(\segmentset, \segmentmap)$, 
this induces a distribution over segments, denoted $\segmentdist$, 
and a posterior distribution on the values $\valueset$ 
for each segment $\segment \in \segmentset$, 
which we abuse notation and denote by $\valuedist(\segment)$.
For any distribution $\valuedist$, let $\Rev(\valuedist)$ denote its monopoly revenue.
Let $\CS(\valuedist)$ denote the consumer surplus when the seller sets the monopoly price for distribution $\valuedist$.	
Our goal is to find a segmentation to maximize a linear combination of revenue and consumer surplus, i.e., for some parameter $\objweight\in [0,1] $, maximize:
%
\[
\E_{\segment \sim \segmentdist} \bigg[ \objweight \cdot \Rev\big(\valuedist(\segment)\big) +  (1-\objweight) \cdot \CS\big(\valuedist(\segment)\big) \bigg]
~.
\]

From now on, we let $\simplex(\typeset) = \{ \vec{\point} \in \R^T_+ : \| \vec{\point} \|_1 = 1 \}$ denote probability distributions over the types. 
We first formalize the claim that segmentation schemes correspond to 
probability distributions over $\simplex(\typeset)$ with a given expectation. 
The proofs in the section are deferred to Appendix~\ref{app:bayesian-proof}.

\begin{lemma}
	\label{thm:simplex-view}
	Let $\centroid$ denote the point in the simplex $\simplex(\typeset)$ corresponding to the distribution $\typedist$: $\centroid = (\Pr_{\typedist}[1], \Pr_{\typedist}[2], \dots, \Pr_{\typedist}[T])$. 
	There is a 1:1 correspondence between segmentations   $(\segmentset, \segmentmap)$ and probability distributions $\measure$ over $\simplex(\typeset)$ such that the expectation is $\centroid$, i.e., 
	\begin{equation}
	\label{eqn:centroid_constraint}
	\int \vec{\point} d\measure  = \centroid .
	\end{equation}
\end{lemma}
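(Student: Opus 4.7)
The plan is to construct explicit mutually-inverse maps between the two sides; the whole argument amounts to Bayes' rule, with the only subtlety being the implicit identification of segmentations that induce the same distribution over posteriors. Given a segmentation $(\segmentset, \segmentmap)$, I define the marginal distribution on segments by $\segmentdist(\segment) := \sum_\type \typedist(\type)\,\segmentmap(\type)(\segment)$ and, for each $\segment$ with $\segmentdist(\segment) > 0$, the posterior $\vec{\point}^{\segment} \in \simplex(\typeset)$ with coordinates $\point^{\segment}_\type := \typedist(\type)\,\segmentmap(\type)(\segment)/\segmentdist(\segment)$. Taking $\measure$ to be the push-forward of $\segmentdist$ under $\segment \mapsto \vec{\point}^{\segment}$, the $\type$-th coordinate of $\int \vec{\point}\, d\measure$ collapses to $\sum_\segment \typedist(\type)\segmentmap(\type)(\segment) = \typedist(\type) = \centroid_\type$, which is exactly \eqref{eqn:centroid_constraint}.

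Conversely, given $\measure$ satisfying \eqref{eqn:centroid_constraint}, I take $\segmentset$ to be the support of $\measure$, identifying each segment with the point $\vec{\point} \in \simplex(\typeset)$ it represents, and define $\segmentmap(\type)$ as the probability measure on $\segmentset$ with density $\point_\type/\typedist(\type)$ with respect to $\measure$ (types with $\typedist(\type) = 0$ are ignored). Then \eqref{eqn:centroid_constraint} gives $\int d\segmentmap(\type) = \centroid_\type/\typedist(\type) = 1$, so $\segmentmap(\type)$ is a valid distribution. The two maps are inverses: applied to this segmentation, the forward map gives marginal weight $\sum_\type \typedist(\type)\,(\point_\type/\typedist(\type))\,d\measure(\vec{\point}) = d\measure(\vec{\point})$ at each $\vec{\point}$ with posterior $\vec{\point}$ itself, so $\measure$ is recovered; and applying the reverse map after the forward map recovers the original segmentation up to merging any segments that share identical posteriors.

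The calculations are entirely routine, so the main obstacle is notational: pinning down the measure-theoretic formulation of $\segmentmap(\type)$ when $\measure$ is not discretely supported, and making the implicit equivalence of segmentations precise. In the finite-support case relevant to the LP discussed later in this section, one replaces the density $\point_\type/\typedist(\type)$ by the corresponding probability mass $\point_\type\,\measure(\{\vec{\point}\})/\typedist(\type)$, and the bijection becomes essentially tautological.
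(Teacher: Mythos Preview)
Your proposal is correct and follows essentially the same approach as the paper: both directions are exactly Bayes' rule, defining the posterior $\point^\segment_\type = \typedist(\type)\segmentmap(\type)(\segment)/\segmentdist(\segment)$ in one direction and $\segmentmap(\type)(\segment) = \point_\type \measure(\vec{\point})/\typedist(\type)$ in the other. If anything, your version is slightly more careful than the paper's, since you make explicit the pushforward construction and the fact that the bijection is really up to identifying segmentations that induce the same distribution over posteriors.
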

Using this lemma, 
we switch our design space to probability distributions $\measure$ that satisfy \eqref{eqn:centroid_constraint}. 
We use $\segment \in \segmentset$ and $\vec{\point} \in \simplex(\typeset)$ interchangably. 


We now partition the simplex $\simplex(\typeset)$ into $\valuenum$ areas, $\area_1, \area_2, \dots, \area_\valuenum$, one for each value/price in $\valueset$, such that the price $p$ is a monopoly price for any segment in $\area_p$.
For any distribution $\valuedist$ and any price $p \in \valueset$, let $\Rev(\valuedist, p)$ denote the revenue of price $p$ on distribution $\valuedist$, 
and $\CS(\valuedist, p)$ denote the consumer surplus. 
For any $p \in \valueset$, define:
\[
\area_p = \{ \vec{\point} : \Rev(\valuedist(\vec{\point}), p) \ge \Rev(\valuedist(\vec{\point}), p'), \forall p' \ne p \}
~.
\]
%
Since the revenue function $\Rev(\valuedist(\vec{\point}), p) = \sum_{\type} \point_\type \cdot \Rev(\valuedist(\type), p))$
is linear in $\vec{\point}$, 
the set $\area_p$ is the intersection of the simplex $\simplex(\typeset)$ 
and a polytope defined by $\valuenum - 1$ linear constraints.
Further, if we restrict our domain to points $\vec{\point} \in \area_p$, we have that
these are linear functions in $\vec{\point}.$
\[
\Rev(\valuedist(\vec{\point})) = \Rev(\valuedist(\vec{\point}), p) = \sum_{\type} \point_\type \cdot \Rev(\valuedist(\type), p)
~,\text{ and } 
\]
\[
\CS(\valuedist(\vec{\point})) = \CS(\valuedist(\vec{\point}), p) = \sum_{\type} \point_\type \cdot \CS(\valuedist(\type), p) ~.
~
\]
We next observe that this implies that it is sufficient to choose at most one point from each $\area_p$. 
The idea is that we can replace the distribution conditioned on $\vec{x} \in \area_p$ by its expectation. 

\begin{lemma}
    \label{lem:bayesian-finite-support}
	There is an optimal segmentation such that the distribution $\measure$ is supported on at most one point from each $\area_p$, i.e., a finite set of the form $\{ \vec{\point}_p \in \area_p,  \forall ~p \in \valueset\}$. 
\end{lemma}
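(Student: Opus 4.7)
The plan is to start with an arbitrary optimal measure $\measure$ and contract all its mass within each $\area_p$ to a single representative point—the conditional centroid of $\measure$ restricted to that region. The argument leans on two facts already stated in the excerpt: each region $\area_p$ is convex (being the intersection of $\simplex(\typeset)$ with a polytope defined by $\valuenum-1$ linear inequalities), and within $\area_p$ both $\Rev(\valuedist(\vec{\point}))$ and $\CS(\valuedist(\vec{\point}))$ are linear in $\vec{\point}$.

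Concretely, I would first fix a deterministic tie-breaking rule for monopoly prices (say, the smallest wins), so that the cover $\{\area_p\}_{p \in \valueset}$ becomes a genuine partition of $\simplex(\typeset)$. For each $p \in \valueset$, set $q_p = \measure(\area_p)$, and whenever $q_p > 0$ define
\[
\vec{\point}_p \;=\; \frac{1}{q_p}\int_{\area_p} \vec{\point}\,d\measure,
\]
which lies in $\area_p$ by convexity. Let $\measure'$ be the discrete measure that puts mass $q_p$ on $\vec{\point}_p$ for each $p$ with $q_p > 0$.

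Next I would verify that $\measure'$ is a valid segmentation via Lemma~\ref{thm:simplex-view}. By construction,
\[
\int \vec{\point}\,d\measure' \;=\; \sum_{p\,:\,q_p>0} q_p\,\vec{\point}_p \;=\; \sum_{p\,:\,q_p>0} \int_{\area_p} \vec{\point}\,d\measure \;=\; \int \vec{\point}\,d\measure \;=\; \centroid,
\]
so the centroid constraint \eqref{eqn:centroid_constraint} holds. To see that the objective is preserved, I invoke linearity on each region: for every $p$ with $q_p > 0$,
\[
\int_{\area_p}\!\bigl[\objweight \Rev(\valuedist(\vec{\point})) + (1-\objweight)\CS(\valuedist(\vec{\point}))\bigr]\,d\measure \;=\; q_p\bigl[\objweight \Rev(\valuedist(\vec{\point}_p)) + (1-\objweight)\CS(\valuedist(\vec{\point}_p))\bigr],
\]
which is exactly the contribution of $\vec{\point}_p$ under $\measure'$. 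Summing over $p$ yields the same total objective value, so $\measure'$ is also optimal while being supported on at most $\valuenum$ points, one from each $\area_p$, as required.

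The main—and fairly minor—obstacle is the treatment of boundaries between adjacent regions, where two or more prices tie as monopoly prices but generally give different consumer surpluses. Committing to a fixed tie-breaking rule up front sidesteps this cleanly: $\Rev$ is unaffected by the choice (both tied prices yield the same revenue there by definition), $\CS$ becomes a single well-defined linear function on each cell of the resulting partition, and the averaging step is well-posed. Everything else is just linearity of expectation applied region by region.
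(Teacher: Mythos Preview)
Your proof is correct and follows essentially the same approach as the paper: replace the mass in each region $\area_p$ by its conditional centroid, then invoke convexity of $\area_p$ and linearity of the objective on $\area_p$ to conclude that neither feasibility nor the objective changes. You are actually slightly more careful than the paper in handling tie-breaking on the boundaries and in explicitly checking the centroid constraint \eqref{eqn:centroid_constraint}.
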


Using this lemma, we now show that the following linear program (LP) captures the optimal segmentation. 
The variables are $\vec{\pointbelow}_p = \vec{\point}_p \cdot \measure(\vec{\point}_p )$. 
We denote by $\areabelow_p$ the region that is the convex hull of $\area_p$ and the origin.  
$\Rev$ and $\CS$ extend naturally to $\areabelow_p$s. 
\begin{align}
\max \quad & \sum_{p \in \valueset}  \objweight \cdot \Rev\big(\valuedist(\vec{\pointbelow}_p)\big) +  (1-\objweight) \cdot \CS\big(\valuedist(\vec{\pointbelow}_p)\big) \label{eq:LP} \\
\text{s.t.} \quad &\forall ~ p \in \valueset, \vec{\pointbelow}_p \in \areabelow_p 
\text{ and }
\sum_{p \in \valueset} \vec{\pointbelow}_p = \centroid 
~. \nonumber
\end{align}
\begin{theorem}
    \label{thm:bayesian}
    We can find an optimal segmentation in polynomial time by solving LP~\eqref{eq:LP}.
\end{theorem}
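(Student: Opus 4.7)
The plan is to reduce the segmentation optimization to LP~\eqref{eq:LP} via a change of variables, building on Lemma~\ref{lem:bayesian-finite-support}. By that lemma, I may restrict attention to segmentations whose support is $\{\vec{\point}_p : p \in \valueset\}$ with $\vec{\point}_p \in \area_p$, carrying weights $\pointweight_p = \measure(\vec{\point}_p) \ge 0$. By Lemma~\ref{thm:simplex-view}, the centroid constraint rewrites as $\sum_p \pointweight_p \vec{\point}_p = \centroid$; since each $\vec{\point}_p \in \simplex(\typeset)$ has unit $\ell_1$ norm, this automatically forces $\sum_p \pointweight_p = 1$, so $\measure$ is indeed a probability distribution. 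The objective becomes $\sum_p \pointweight_p \left[\objweight \Rev(\valuedist(\vec{\point}_p)) + (1-\objweight) \CS(\valuedist(\vec{\point}_p))\right]$.

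Next I perform the substitution $\vec{\pointbelow}_p := \pointweight_p \vec{\point}_p$. The first key observation is that as $\vec{\point}_p$ ranges over $\area_p$ and $\pointweight_p$ over $\R_{\ge 0}$, the product $\vec{\pointbelow}_p$ ranges over exactly $\areabelow_p$. This holds because $\area_p$ is the intersection of the simplex with the polyhedral cone $\{\vec{\point} \ge 0 : \Rev(\valuedist(\vec{\point}), p) \ge \Rev(\valuedist(\vec{\point}), p')\ \forall p' \ne p\}$, whose defining inequalities are linear and homogeneous in $\vec{\point}$; removing the normalization $\|\vec{\point}\|_1 = 1$ yields precisely this cone, which is $\areabelow_p$, described by $\valuenum + \typenum - 1$ linear inequalities. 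Under the substitution, the centroid constraint becomes $\sum_p \vec{\pointbelow}_p = \centroid$, matching the LP.

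For the objective, I use that on $\area_p$ the monopoly price is $p$, so $\Rev(\valuedist(\vec{\point}_p)) = \sum_\type (\vec{\point}_p)_\type \Rev(\valuedist(\type), p)$ is linear in $\vec{\point}_p$, and similarly for $\CS$. Multiplying by $\pointweight_p$ and substituting $\vec{\pointbelow}_p = \pointweight_p \vec{\point}_p$ yields $\pointweight_p \Rev(\valuedist(\vec{\point}_p)) = \sum_\type (\vec{\pointbelow}_p)_\type \Rev(\valuedist(\type), p)$, which is precisely the linear extension of $\Rev(\valuedist(\cdot))$ to $\areabelow_p$ used in LP~\eqref{eq:LP} (and likewise for $\CS$). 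Hence the two objectives agree pointwise under the change of variables; combined with Lemma~\ref{lem:bayesian-finite-support}, the LP optimum equals the optimal segmentation value. Since LP~\eqref{eq:LP} has $\valuenum \typenum$ variables and polynomially many linear constraints, it is solvable in polynomial time by any standard LP algorithm.

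The main subtlety to handle is the degenerate case $\pointweight_p = 0$, where $\vec{\point}_p$ is formally indeterminate but $\vec{\pointbelow}_p = 0 \in \areabelow_p$, which is harmless in both directions of the bijection. Conversely, to recover a segmentation from a feasible LP solution, I set $\vec{\point}_p := \vec{\pointbelow}_p / \|\vec{\pointbelow}_p\|_1$ and $\pointweight_p := \|\vec{\pointbelow}_p\|_1$ whenever $\vec{\pointbelow}_p \ne 0$ (and drop the atom otherwise); feasibility of the LP guarantees that the resulting weights sum to one and that the expectation equals $\centroid$, closing the equivalence.
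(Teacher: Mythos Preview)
Your proof is correct and follows essentially the same change-of-variables argument as the paper: set $\vec{\pointbelow}_p = \pointweight_p \vec{\point}_p$, use linearity of $\Rev$ and $\CS$ on each $\area_p$ to match objectives, and recover $\vec{\point}_p = \vec{\pointbelow}_p/\|\vec{\pointbelow}_p\|_1$, $\pointweight_p = \|\vec{\pointbelow}_p\|_1$ in the reverse direction. One small wording point: the paper defines $\areabelow_p$ as the convex hull of $\area_p$ and the origin (hence bounded), not the full cone you describe; this discrepancy is harmless because the centroid constraint $\sum_p \vec{\pointbelow}_p = \centroid$ with $\|\centroid\|_1 = 1$ and $\vec{\pointbelow}_p \ge 0$ already forces $\|\vec{\pointbelow}_p\|_1 \le 1$, so the LP feasible set and optimum are unchanged.
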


\section{Sample Complexity Model}
\label{sec:sample}

We scale the values to be in $(0, 1]$, i.e., $\valueset = \big\{ \frac{1}{\valuenum}, \frac{2}{\valuenum}, \dots, 1 \big\}$.
This treatment simplifies the notations in the proofs, and separates the two roles of $\valuenum$: the scale of the values (less interesting, always has the same degree as $\epsilon$), and the number of possible values.
To translate the bounds into the original scaling, replace $\epsilon$ with $\frac{\epsilon}{\valuenum}$ everywhere.
We further assume the type distribution to be uniform to simplify discussions. 
This is w.l.o.g.\ up to duplication of types.

Following standard notations in algorithmic mechanism design, we refer to the sale probability of a price as its quantile.
We will consider the revenue curve in the quantile space where the $x$ and $y$ coordinates are the quantile of a price and its revenue, respectively.


\subsection{Model and Results}

\paragraph{Intermediary:}
The intermediary has access to the value distributions of different types only in the form of $\samplenum$ i.i.d.\ samples per type.
She chooses a segmentation based on these samples, and then the chosen segmentation is evaluated on a freshly drawn type-value pair, i.e., the test sample.
The expectation of the objective is taken over the random realization of the $m$ samples per type as well as the test sample, and potentially the randomness in the choice of the segmentation.

\paragraph{Buyer:}
%
The buyer bids truthfully since the seller effectively posts a take-it-or-leave-it price.

\paragraph{Seller:}
%
We need to further define how the seller acts.
Consider the following candidate models:
\begin{enumerate}
    \item The seller knows the value distributions exactly. Hence, given the segmentation and the realized segment, which induces a mixture of the value distributions of different types, the seller posts the monopoly price of the mixture.
    \item The seller can access the same set of $\samplenum$ samples per type, and believes that the value distributions are the empirical distributions, i.e., the uniform distributions over the corresponding samples.
    Hence, she posts the monopoly price of the mixture of empirical distributions.
    \item The seller further has access to other sources of samples.
    \item The seller further has access to other sources of prior knowledge.
    %
\end{enumerate}

This is only a nonexclusive list of many potential models that are equally well-motivated in our opinion, depending on the actual applications.
\emph{Is there a unifying model that allows us to study all these settings in one shot and get non-trivial positive results?}

To this end, this paper considers the following overarching model (the subscript $S$ indicates that these variables are associated with the \emph{seller}):
\begin{quote}
    For $\epsilon_S = O \big( \samplenum^{-1/2} \log(\samplenum \valuenum) \big)$, the seller forms beliefs $\valuedist_S(\type)$'s, $\type \in \typeset$, such that for any type $\type$ the Kolmogorov-Smirnov distance between $\valuedist_S(\type)$ and $\valuedist(\type)$ is at most $\epsilon_S$, i.e., for any value $v \in \valueset$, $v$'s quantiles w.r.t.\ $\valuedist(\type)$ and $\valuedist_S(\type)$ differ by at most $\epsilon_S$.
    Then, she posts the monopoly price of the mixture of the beliefs.
\end{quote}

The choice of $\epsilon_S$ is based on a standard concentration plus union bound combination on the empirical distributions over the $\samplenum$ samples that the intermediary can access.
In other words, we assume that the seller's beliefs are at least as good as what could have been estimated using the intermediary's samples.
All aforementioned candidate models are special cases of ours.

We start with an impossibility result for general value distributions. 
See Section~\ref{sec:example} for details.

\begin{theorem}
    \label{thm:sample-impossibility}
    If the value distributions are allowed to have multiple monopoly prices whose social welfare differ by at least $\Omega(1)$, e.g., the uniform distribution over $\{\tfrac{1}{2}, 1\}$, no algorithm can obtain any $o(1)$-approximation using a bounded number of samples.
\end{theorem}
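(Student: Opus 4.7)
The plan is to reduce the impossibility to a minimal single-type instance, where the intermediary's segmentation is powerless, and argue that the seller's belief uncertainty alone forces an $\Omega(1)$ loss in the objective.

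First I would take a single type $\type$ whose value distribution $\valuedist(\type)$ equals the uniform distribution $D$ over $\{1/2, 1\}$. With only one type, the prior $\centroid$ is already a vertex of $\simplex(\typeset)$, so every feasible segmentation is supported on a single posterior $\valuedist(\segment) = D$. Consequently, whatever the algorithm outputs, each realized segment induces the same mixture $D$, and the intermediary has no signalling power at all --- the realized price is determined entirely by the seller's belief.

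Second, I would record the benchmark. Under the Bayesian model of Section~\ref{sec:bayesian} with the standard tie-breaking toward the lower price (as in \cite{bergemann2015limits}), the monopoly price on $D$ is $p = 1/2$, and the resulting social welfare is
\[
\tfrac{1}{2}\cdot\tfrac{1}{2} + \tfrac{1}{2}\cdot 1 = \tfrac{3}{4},
\]
with consumer surplus $\tfrac{1}{4}$. These are the optima for any objective of the form $\objweight\,\Rev + (1-\objweight)\,\CS$.

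Third, I would exhibit a seller belief inside the $\epsilon_S$-ball that flips the monopoly price. Take $\valuedist_S(\type)$ with $\Pr_S[1/2] = \tfrac{1}{2} - \tfrac{\epsilon_S}{2}$ and $\Pr_S[1] = \tfrac{1}{2} + \tfrac{\epsilon_S}{2}$; its Kolmogorov--Smirnov distance from $D$ equals $\epsilon_S/2 \le \epsilon_S$, so it is admissible. Under this belief, revenue at $p = 1$ strictly exceeds revenue at $p = 1/2$, so the seller posts $p = 1$. Since $\epsilon_S > 0$ for every finite $\samplenum$, such an adversarial belief exists regardless of how many samples the algorithm collects. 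The realized social welfare on the true $D$ at $p = 1$ is $\tfrac{1}{2}$, giving a gap of $\tfrac{1}{4}$ from the optimum; the analogous gap for the general objective is $(1-\objweight)/4$, which is $\Omega(1)$ for every $\objweight < 1$, and in fact drives the approximation ratio to $0$ when $\objweight = 0$ (since $\CS$ realizes $0$ while the optimum is $\tfrac{1}{4}$).

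The main obstacle is only conceptual: one must rule out any creative use of segmentation or randomization by the intermediary. In our model the seller's belief is a function solely of the type-mixture induced by each segment, and with a single type that mixture is forced to be $D$ no matter how the intermediary labels or randomizes. Hence no algorithm, with any bounded sample size $\samplenum$, can avoid the $\Omega(1)$ loss caused by the adversarial belief, establishing the claim.
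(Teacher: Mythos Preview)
Your argument is correct under the paper's formal seller model of Section~\ref{sec:sample}: with one type the intermediary is indeed powerless, and an adversarial belief within the $\epsilon_S$-ball always flips the monopoly price to $1$, forcing consumer surplus $0$ against a Bayesian benchmark of $1/4$.

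The paper's own argument (Appendix~\ref{sec:example}) takes a different route. Rather than the exact equal-revenue distribution, it uses a one-parameter family with $\Pr[\text{low value}] = \tfrac{1}{2} + \delta$ for arbitrarily small $\delta > 0$, so the low price is the \emph{unique} monopoly price and the benchmark $\CS \approx \tfrac{1}{2}$ is defined without invoking any tie-breaking convention. It then argues that a seller who uses the \emph{empirical} distribution---a specific, natural behavior rather than an adversarial belief---will see more high values than low values with probability close to $\tfrac{1}{2}$ when $\delta$ is small relative to $m^{-1/2}$, and hence post the high price. Thus for any sample budget $m$ one can pick $\delta$ so that an $\Omega(1)$ loss occurs with constant probability.

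The trade-off: your proof is shorter and plugs directly into the adversarial-belief framework the paper ultimately adopts, but it leans on the lower-price tie-breaking rule to pin down the benchmark (without it, the optimum on the exact uniform distribution is itself ambiguous). The paper's perturbation removes that dependence and, as a bonus, shows the impossibility already for a benign seller who merely trusts her samples---demonstrating that the obstacle is statistical, not just adversarial. If you want your version to be equally self-contained, replace $D$ by the $\delta$-tilted distribution and keep the rest of your reduction; the single-type powerlessness observation carries over verbatim.
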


\paragraph{(Discrete) MHR-like Distributions.}
Given the above impossibility result that relies on value distributions that have multiple monopoly prices whose respective values of social welfare are vastly different,
we intuitively need the value distributions to be unimodal and far from having a plateau.
The family of continuous monotone hazard rate (MHR) distributions, a standard family of distributions in the literature, has all the nice properties that we need, except that they are continuous.
They are unimodal since they have concave revenue curves in the quantile space (folklore).
In fact, their revenue curves in the quantile space are strongly concave near the monopoly price \cite[Lemma~3.3]{HuangMR/2018/SICOMP}.
They also admit other useful properties: the optimal revenue is at least a constant fraction of the social welfare \cite[Lemma~3.10]{DRY15}; and the monopoly price has a sale probability lower bounded by some constant \cite[Lemma~4.1]{HMS08}.

There is an existing notion of discrete MHR distributions by~\cite{BMP63} that mimics the functional form of the continuous version.
However, it loses some useful properties.
In particular, it contains some distributions that have two monopoly prices, e.g., the uniform distribution over $\{\tfrac{1}{2}, 1\}$, and as a result still suffers from the impossibility result.

Instead, we define a family of (discrete) MHR-like distributions directly from the aforementioned benign properties of continuous MHR distributions.
Hence, unlike the existing notion of discrete MHR distributions, our definition truly inherits the main features of continuous MHR distributions.
We remark that the constants $\frac{1}{4}$ and $\frac{1}{e}$ in the following definition are merely copied from the continuous counterparts; our results still hold asymptotically if they are replaced by other constants.

\begin{Definition}[MHR-like Distributions]
    \label{def:mhr-like-distributions}
    A discrete distribution $\valuedist$ is \emph{MHR-like} if it satisfies:
    \begin{enumerate}
        \item \label{con:concavity}
        (Concavity) Its revenue curve is concave in the quantile space.
        \item \label{con:strong-concavity}
        (Strong concavity near monopoly price) For its monopoly price $p^*$ and any other price $p'$, suppose their quantiles are $q^*$ and $q'$ respectively; then, we have:
        \[
            \Rev \big( p', \valuedist \big) \le \big( 1 - \tfrac{1}{4}(q^* - q')^2 \big) \cdot \Rev \big( p^*, \valuedist \big)
            ~.
        \]
        \item \label{con:monopoly-sale-prob}
        (Large monopoly sale probability) Its monopoly price's sale probability is at least $\frac{1}{e}$.
        \item \label{con:revenue-welfare-ratio}
        (Small revenue and welfare gap) Its monopoly revenue is at least $\frac{1}{e} \E_{v \sim \valuedist} [v]$.
    \end{enumerate}
\end{Definition}

The main difference of our MHR-like distribution and the notion in~\cite{BMP63} is property 2 in definition~\ref{def:mhr-like-distributions}. An MHR-like distribution can be made by discretizing an continuous MHR distribution, meanwhile ensuring that there is a gap between the optimal revenue and any sub-optimal ones.  

We show that polynomially many samples are sufficient for learning an $\epsilon$-optimal segmentation, with only the mild assumption on seller's behavior discussed earlier in the section.

\begin{theorem}
    \label{thm:sample-complexity-mhr-like}
    With $\samplenum = \poly \big( \epsilon^{-1}, \typenum, \log \valuenum \big)$ i.i.d.\ samples, we can learn a segmentation that is optimal up to an $\epsilon$ additive factor in $\poly \big( \epsilon^{-1}, \typenum, \valuenum \big)$ time.
\end{theorem}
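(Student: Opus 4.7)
The proof proceeds in three phases: (i)~estimate each type's value distribution from the $\samplenum$ samples; (ii)~solve a \emph{robustified} variant of LP~\eqref{eq:LP} on the empirical distributions; and (iii)~argue that the resulting segmentation is near-optimal on the true distributions even though the seller prices according to her own possibly different beliefs $\valuedist_S$. The four MHR-like properties are used throughout to control how sensitive the seller's chosen monopoly price is to perturbations of the underlying distribution.

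\paragraph{Sampling and empirical LP.} First I would have the intermediary form empirical distributions $\hat{\valuedist}(\type)$ from her $\samplenum$ samples per type. The DKW inequality and a union bound over the $\typenum$ types give that, with high probability, each $\hat{\valuedist}(\type)$ is within Kolmogorov--Smirnov distance $\epsilon_I = O(\samplenum^{-1/2}\log(\typenum\valuenum/\epsilon))$ of $\valuedist(\type)$; by the triangle inequality, $\valuedist_S(\type)$ lies within $\delta := \epsilon_I + \epsilon_S$ of both. Consequently, for any fixed price $p$ and any segment $\vec{\point} \in \simplex(\typeset)$, the quantity $\Rev(\valuedist(\vec{\point}),p)$ computed under any two of the three distributions differs by $O(\delta)$ additively, and similarly for consumer surplus.

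\paragraph{Robustification.} The key difficulty, flagged in the introduction, is that solving LP~\eqref{eq:LP} on $\hat{\valuedist}$ can produce a segment whose intended monopoly price $p$ only narrowly beats a competitor $p'$; under $\valuedist_S$ the seller may then post $p'$, causing a large swing in the objective if the two prices have very different sale probabilities. To rule this out, I would add to the LP, for every $p \in \valueset$ and every $p' \ne p$, the linear constraint that $p$ beats $p'$ on $\hat{\valuedist}(\vec{\pointbelow}_p)$ by a revenue margin of $\Theta(\sqrt{\delta}) \cdot \|\vec{\pointbelow}_p\|_1$. Because these constraints are linear, the LP retains $\poly(\typenum,\valuenum)$ size and is solvable in $\poly(\epsilon^{-1},\typenum,\valuenum)$ time. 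Under this margin, the strong-concavity property (Definition~\ref{def:mhr-like-distributions}.\ref{con:strong-concavity}), applied to the posterior mixture induced by $\vec{\pointbelow}_p$, forces the quantile of any price the seller finds attractive under $\valuedist_S$ to be within $\poly(\delta)$ of that of $p$. Concavity and the large-monopoly-sale-probability property (items~\ref{con:concavity} and~\ref{con:monopoly-sale-prob}) then imply that the price the seller actually posts yields sale probability, revenue, and consumer surplus within $\poly(\delta)$ of those at $p$.

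\paragraph{Closing the loop and main obstacle.} It remains to bound the objective loss of the robustification. For this I would take the true optimal segmentation (supported on at most $\valuenum$ points by Lemma~\ref{lem:bayesian-finite-support}) and perturb each support point $\vec{\point}_p$ slightly into the interior of its region $\area_p$, rebalancing with a small mass at $\centroid$ or at an equal-revenue point $x^S$ to preserve the centroid constraint~\eqref{eqn:centroid_constraint}. Properties~\ref{con:concavity} and~\ref{con:revenue-welfare-ratio} (concavity and small revenue--welfare gap) bound the per-unit-mass objective cost of this reshuffling by $\poly(\delta)$, so the robustified empirical LP has optimum within $\poly(\delta,\valuenum)$ of the true optimum. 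Combining this with the seller-price analysis of the previous paragraph, the returned segmentation achieves objective within $\poly(\delta,\valuenum)$ of optimal on the true distributions; taking $\samplenum = \poly(\epsilon^{-1},\typenum,\log\valuenum)$ drives this below $\epsilon$. The main obstacle is carrying out the robustification analysis quantitatively: showing that a polynomial margin simultaneously (a)~absorbs both the sampling noise and the seller's belief noise and (b)~admits a feasible solution whose value is close to the unrobustified optimum. The MHR-like axioms are precisely what makes this balance quantifiable, and I would expect most of the technical work to go into the strong-concavity bookkeeping on mixtures of posterior distributions.
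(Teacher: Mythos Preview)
Your proposal has a genuine gap in the robustification step. You write that ``the strong-concavity property (Definition~\ref{def:mhr-like-distributions}.\ref{con:strong-concavity}), applied to the posterior mixture induced by $\vec{\pointbelow}_p$, forces the quantile of any price the seller finds attractive under $\valuedist_S$ to be within $\poly(\delta)$ of that of $p$.'' But the strong-concavity axiom is a property of a single MHR-like distribution, and the posterior at a segment is a \emph{mixture} of such distributions. As the paper itself stresses (Figure~\ref{fig:plateau} and the surrounding discussion), a mixture of MHR-like distributions need not be MHR-like and can have an essentially flat revenue curve: two prices with quantiles close to $0$ and $1$ can both be within $O(\delta)$ of optimal revenue. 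In that regime your $\Theta(\sqrt{\delta})$ revenue margin buys you no quantile control whatsoever, so the seller may still post a price with drastically different sale probability and consumer surplus. The same plateau phenomenon breaks your feasibility argument in the last paragraph: if the true optimal segment sits on such a plateau, no small perturbation ``into the interior of $\area_p$'' creates the margin your LP requires, so you cannot certify that the robustified LP loses only $\poly(\delta)$.

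The paper's fix is qualitatively different and is where the real work lies. Rather than imposing margins in the LP, it first solves the ordinary LP on (projected) MHR-like empirical distributions and then post-processes each segment: a structural lemma (Lemma~\ref{lem:robustify-find-type-main}) shows that for any significant segment there exists a \emph{single type} $\type^*$ whose individual MHR-like curve already has a large revenue gap between $p^*$ and any price with much smaller quantile. Nudging the segment toward the vertex $\vec{\point}_{\type^*}$ by $\epsilon_I/\typenum$ then transfers this gap to the mixture, with only an $O(\epsilon_I)$ change in the mixture itself. The MHR-like axioms are applied only to per-type distributions, never to the mixture, and the proof of the structural lemma (a probabilistic argument over types plus a case analysis on where the type's monopoly price falls) is the technical heart of the result. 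Your plan, as written, skips precisely this ingredient.
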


\subsection{Robustification: Motivation and Definition}
\begin{figure*}
    \centering
    \begin{subfigure}{.43\textwidth}
        \includegraphics[width=\textwidth]{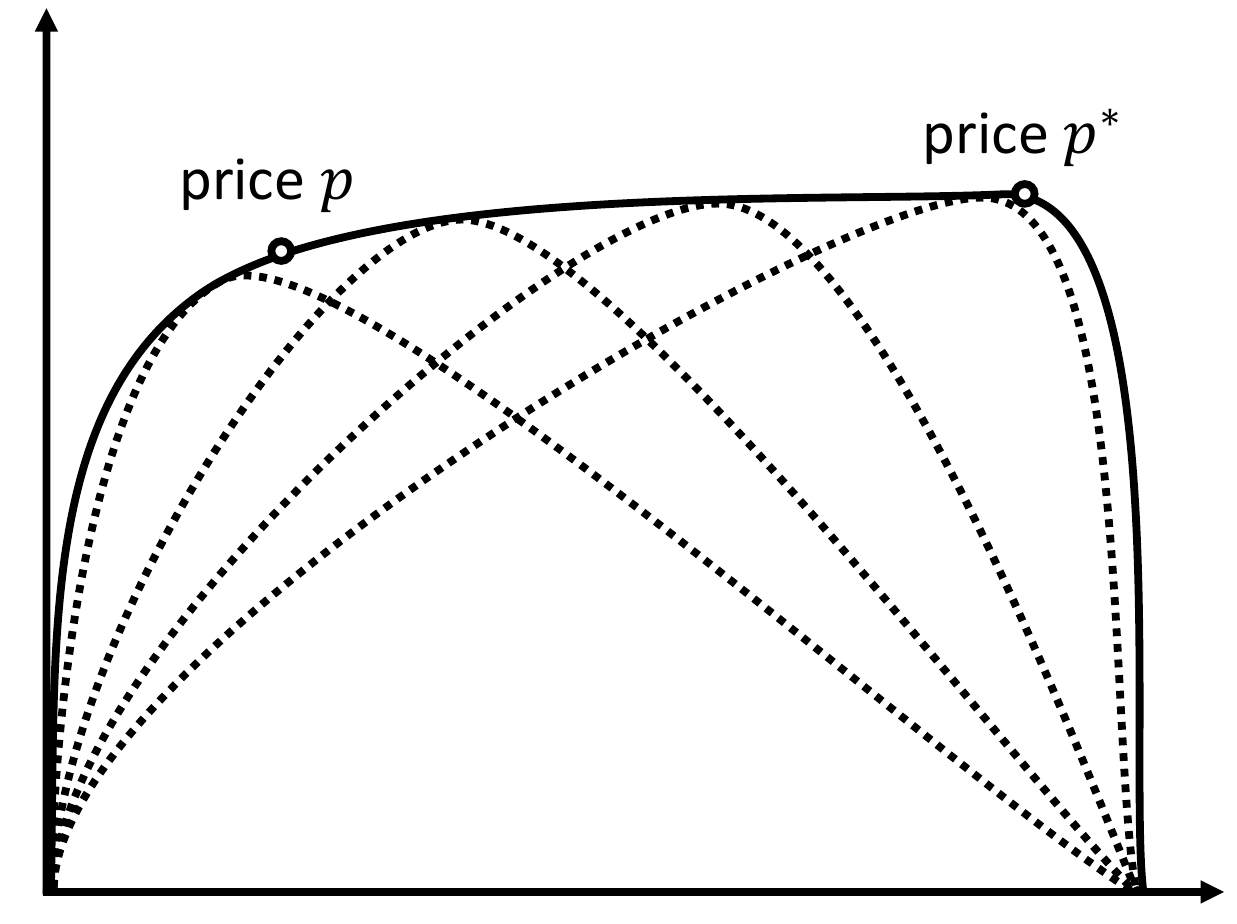}
        \caption{True distribution}
        \label{fig:plateau-true}
    \end{subfigure}
    \begin{subfigure}{.43\textwidth}
        \includegraphics[width=\textwidth]{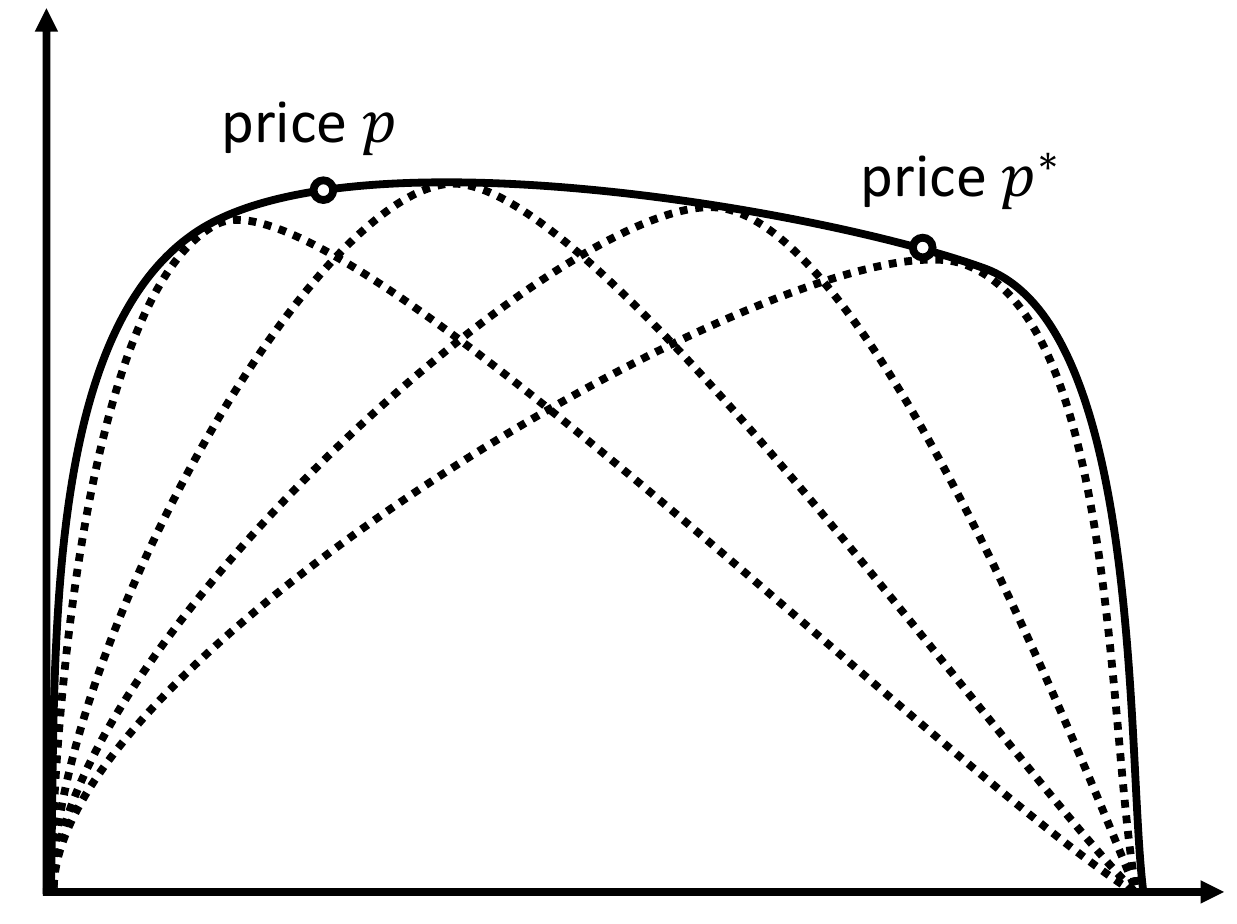}
        \caption{Seller's belief}
        \label{fig:plateau-seller}
    \end{subfigure}
    \caption{Plateau Example. In a revenue curve in the quantile space, the $x$ and $y$ coordinates are the quantile of a price and its revenue respectively. On the left, the solid curve is the revenue curve of a segment w.r.t.\ the true distributions; the dotted curves are those of the type distributions mixed in the segment. On the right are the counterparts w.r.t.\ the seller's beliefs. Prices $p^*$ and $p$ are monopoly prices of the segment w.r.t.\ the true distributions and the seller's beliefs respectively.}
    \label{fig:plateau}
\end{figure*}
Recall the thought experiment in Section~\ref{sec:intro}.
Consider a more powerful intermediary who has exact knowledge of the true distributions;
the seller, however, still acts according to her approximately accurate beliefs.
Further, recall the example where the problem remains nontrivial even when the intermediary has more power; 
we give more details below.
Consider a segment for which there are two prices $p^*$ and $p$, such that $p^*$ is the monopoly price with a quantile close to $1$, while $p$ gets close-to-optimal revenue with a quantile close to $0$. 
Even though a single MHR-like distribution cannot have such a plateau, a mixture of MHR-like distributions can.\footnote{In fact, every distribution on $[\valuenum]$ is a mixture of MHR-like distributions, because point masses are MHR-like.}
See Figure~\ref{fig:plateau-true} for an illustrative example.
If the intermediary includes this segment, 
the seller's belief may overestimate the revenue of $p$ 
and/or underestimate that of $p^*$ and thus 
deduce that $p$ is the monopoly price instead of $p^*$.
See Figure~\ref{fig:plateau-seller}.
The resulting social welfare may be much smaller than what the intermediary expects from the true distributions.


Intuitively, we would like to convert the optimal segmentation w.r.t.\ the true distributions into a more robust version, such that for any approximately accurate beliefs that the seller may have, the resulting social welfare and revenue are both close to optimal.
As mentioned in Section~\ref{sec:intro}, we will refer to this procedure as \emph{robustification}, and the result as the \emph{robustified segmentation}.

In the following definition, the subscripts of $\epsilon_S$ and $\epsilon_I$ indicate they are the additive errors that the seller and intermediary are aiming for, respectively.
Further, it states that the robustified segmentation must keep all segments in the original version ($\segmentset^r \supseteq \segmentset^*$).
We ignore insignificant segments due to technical difficulties in achieving the stated properties for them,%
\footnote{If the expected value is tiny in the first place, all prices are $\epsilon_S$-optimal. Hence, we cannot achieve robustness.}
and that their roles in the revenue and social welfare are negligible.
For any significant segment, the first two conditions state that its weight and mixture of types are preserved approximately; 
the third condition gives the desirable robustness against the uncertainties in the seller's behavior.

\begin{Definition}[Robustified Segmentation]
    \label{def:robust-segmentation}
    Suppose that (1) $(\segmentset^*, \segmentmap^*)$ is a segmentation, represented by $\vec{\point}^*_\segment$ and weight $w^*_\segment=  \measure(\vec{\point}^*_\segment)$, $\forall~\segment \in \segmentset^*$; and (2) $p^*_\segment$ is an optimal price w.r.t.\ $\valuedist(\vec{\point}^*_\segment)$,  $\forall~\segment \in \segmentset^*$. 
    For any $\epsilon_I \ge \epsilon_S > 0$, $(\segmentset^r, \segmentmap^r)$ is an $(\epsilon_I, \epsilon_S)$-robustified segmentation, represented by $\vec{\point}^r_\segment$ and weight $w^r_\segment$ with $\segmentset^r \supseteq \segmentset^*$, if for any $\segment \in \segmentset^*$, either $\segment$ is insignificant in that $\E_{v \sim \valuedist(\vec{\point}^*_\segment)} [v] < \epsilon_I$, or: 
    \begin{enumerate}
        \item (Weight preservation) $w^r_\segment \ge (1 - \epsilon_I) \cdot w^*_\segment$;
        \item (Mixture preservation) $\| \vec{\point}^*_\sigma - \vec{\point}^r_\sigma \|_1 \le \epsilon_I$; and
        \item (Robustness) no $\epsilon_S$-optimal price w.r.t.\ $\valuedist(\vec{\point}^r_\sigma)$ has a quantile smaller than that of $p^*_\segment$ by $\epsilon_I$.
    \end{enumerate}
\end{Definition}

The next lemma shows that the technical conditions in the definition of robustified segmentation indeed lead to robust bounds in terms of both social welfare and revenue and, by induction, their linear combinations.
The proof follows from straightforward calculations and therefore is deferred Appendix \ref{app:sampleproof}.

\begin{lemma}
    \label{lem:robust-segmentation-performance-main}
    For any prices $p^r_\segment$'s that are $\epsilon_S$-optimal w.r.t.\ segments $\segment \in \segmentset^r$ in the robustified segmentation, we have the following in terms of social welfare and revenue: 
    \begin{align*}
        &    
        \textstyle
        \sum_{\segment \in \segmentset^r} w^r_\segment \cdot \SW \big( p^r_\segment, \valuedist(\vec{\point}^r_\segment) \big) 
        \ge 
          \sum_{\segment \in \segmentset^*} w^*_\segment \cdot \SW \big( p^*_\segment, \valuedist(\vec{\point}^*_\segment) \big) - O\big(\epsilon_I\big)
        ~, \\
        &    
        \textstyle
        | \sum_{\segment \in \segmentset^r} w^r_\segment \cdot \Rev \big( p^r_\segment, \valuedist(\vec{\point}^r_\segment) \big) 
        - 
         \sum_{\segment \in \segmentset^*} w^*_\segment \cdot \Rev \big( p^*_\segment, \valuedist(\vec{\point}^*_\segment) \big) | \le O\big(\epsilon_I\big) 
        ~.
    \end{align*}
\end{lemma}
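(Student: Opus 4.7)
The plan is to reduce the statement to a per-segment comparison on significant segments of $\segmentset^*$, handle insignificant and residual mass separately, and use a global expected-value budget to absorb all bookkeeping slack.

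First, I partition $\segmentset^*$ into significant segments $\mathcal{A}$ and insignificant segments $\mathcal{B}$ (those with $\E_{\valuedist(\vec{\point}^*_\segment)}[v] < \epsilon_I$). For any $\segment \in \mathcal{B}$, both $\SW\bigl(p^*_\segment, \valuedist(\vec{\point}^*_\segment)\bigr)$ and $\Rev\bigl(p^*_\segment, \valuedist(\vec{\point}^*_\segment)\bigr)$ are bounded above by $\E[v] < \epsilon_I$, so after summing against $w^*_\segment \le 1$ the $\mathcal{B}$-segments contribute at most $\epsilon_I$ to the original-side welfare and revenue. Hence it suffices to compare the robustified sums with the $\mathcal{A}$-portion of the originals, up to $O(\epsilon_I)$ slack.

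Next, for each significant $\segment \in \mathcal{A} \subseteq \segmentset^r$ I chain the three properties. Since $\vec{\point} \mapsto \Rev(p, \valuedist(\vec{\point}))$ and $\vec{\point} \mapsto \SW(p, \valuedist(\vec{\point}))$ are linear with coefficients in $[0,1]$, mixture preservation gives, for every fixed $p$,
\[
\bigl| \Rev(p, \valuedist(\vec{\point}^r_\segment)) - \Rev(p, \valuedist(\vec{\point}^*_\segment)) \bigr| \le \|\vec{\point}^r_\segment - \vec{\point}^*_\segment\|_1 \le \epsilon_I,
\]
and the same bound for welfare. For welfare, robustness guarantees that the quantile of $p^r_\segment$ w.r.t.\ $\valuedist(\vec{\point}^r_\segment)$ is at least that of $p^*_\segment$ minus $\epsilon_I$, so moving from $p^*_\segment$ to $p^r_\segment$ on the robustified distribution loses at most $\epsilon_I$ mass over values in $[0,1]$, hence at most $\epsilon_I$ in welfare; composing yields $\SW(p^r, \valuedist(\vec{\point}^r)) \ge \SW(p^*, \valuedist(\vec{\point}^*)) - O(\epsilon_I)$. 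For revenue, the mixture estimate together with the optimality of $p^*$ on $\valuedist(\vec{\point}^*)$ and the $\epsilon_S$-optimality of $p^r$ on $\valuedist(\vec{\point}^r)$ (and $\epsilon_S \le \epsilon_I$) give the two-sided bound $|\Rev(p^r, \valuedist(\vec{\point}^r)) - \Rev(p^*, \valuedist(\vec{\point}^*))| \le O(\epsilon_I)$.

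Finally, I upgrade these per-segment estimates to aggregate bounds via weight preservation and a global budget. The key identity is $\sum_{\segment \in \segmentset^r} w^r_\segment \E_{\valuedist(\vec{\point}^r_\segment)}[v] = \sum_{\segment \in \segmentset^*} w^*_\segment \E_{\valuedist(\vec{\point}^*_\segment)}[v]$, since both sides equal $\sum_t \centroid_t \E_{\valuedist(t)}[v]$. Combined with $w^r_\segment \ge (1-\epsilon_I) w^*_\segment$ and mixture preservation, the $\mathcal{A}$-portion of the robustified expected value matches that of the original up to $O(\epsilon_I)$; hence the residual mass in $\segmentset^r \setminus \mathcal{A}$ carries expected value at most $\sum_{\mathcal{B}} w^*_\segment \E^*_\segment + O(\epsilon_I) = O(\epsilon_I)$. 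Using $\Rev \le \SW \le \E[v]$, the residual welfare and revenue are both $O(\epsilon_I)$, and the same budget bounds any excess weight assigned to significant segments with high revenue. Putting the pieces together yields the welfare lower bound and the two-sided revenue bound. The main obstacle is the two-sided revenue estimate: Definition~\ref{def:robust-segmentation} places no direct constraints on segments in $\segmentset^r \setminus \segmentset^*$ or on how weight may shift among significant, insignificant, and new segments, so without the global expected-value identity a reallocation could seemingly inflate revenue by $\Theta(1)$; invoking that identity together with $\Rev \le \E[v]$ is exactly what confines all such deviations to $O(\epsilon_I)$.
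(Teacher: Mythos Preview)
Your argument is correct and, for the welfare bound and the revenue lower bound, essentially coincides with the paper's: both partition into significant and insignificant segments, use mixture preservation to swap $\vec{\point}^*_\segment$ for $\vec{\point}^r_\segment$ at cost $O(\epsilon_I)$, use robustness to swap $p^*_\segment$ for $p^r_\segment$ in welfare, and use $\epsilon_S$-optimality for the revenue lower bound.

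The genuine difference is in the revenue \emph{upper} bound. The paper does not argue from Definition~\ref{def:robust-segmentation} alone; it imports facts from the specific construction in Algorithm~\ref{alg:robustify-segmentation}, namely that $w^r_\segment = (1-\epsilon_I)w^*_\segment \le w^*_\segment$ for every $\segment\in\segmentset^*$, that $\vec{\point}^r_\segment=\vec{\point}^*_\segment$ on insignificant segments, and hence that $\sum_{\segment\in\segmentset^r\setminus\segmentset^*}w^r_\segment\le\epsilon_I$. With these, the upper bound is immediate. Your route instead stays at the level of the definition and exploits the centroid constraint: both segmentations conserve total expected value, so the $\mathcal{A}$-portion of $\sum w^r_\segment\E^r_\segment$ is within $O(\epsilon_I)$ of $\sum_{\mathcal{A}} w^*_\segment\E^*_\segment$, which (combined with the lower bound $w^r_\segment\ge(1-\epsilon_I)w^*_\segment$) forces both the residual expected value on $\segmentset^r\setminus\mathcal{A}$ and the excess $\sum_{\mathcal{A}}(w^r_\segment-w^*_\segment)_+\E^*_\segment$ to be $O(\epsilon_I)$; then $\Rev\le\E[v]$ converts this into the desired revenue cap. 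This is a cleaner match to the lemma as stated, since it proves the claim for \emph{any} segmentation satisfying Definition~\ref{def:robust-segmentation}, whereas the paper's proof tacitly covers only the output of its own algorithm. The price you pay is that the ``same budget bounds the excess weight'' step needs one more line than your sketch suggests (splitting $w^r_\segment-w^*_\segment$ into positive and negative parts and bounding each against $\E^*_\segment$), but all the ingredients you list suffice.
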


\subsection{Robustification: Algorithm}
\label{sec:robustification-algorithm}

This subsection introduces an algorithm that finds such an $(\epsilon_I, \epsilon_S)$-robustified segmentation in polynomial time for any sufficiently large $\epsilon_I$, i.e.:
\begin{equation}
    \label{eqn:robustify-epsilons}
    \epsilon_I \ge \epsilon_S^{\frac{1}{6}} \typenum^{\frac{2}{3}} \log^{\frac{1}{6}} \valuenum = \tilde{O} \left( \samplenum^{-\frac{1}{12}} \typenum^{\frac{2}{3}} \right)
    ~,
\end{equation}

\begin{lemma}
    \label{lem:robustify}
    There is an algorithm that computes in polynomial time an $(\epsilon_I, \epsilon_S)$-robustified segmentation, for any $\epsilon_I$ and $\epsilon_S$ that satisfy Eqn.~\eqref{eqn:robustify-epsilons}.
\end{lemma}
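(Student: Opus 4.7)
My plan is to construct the robustified segmentation by perturbing each significant segment of the optimal one so as to inject extra ``curvature'' into its revenue curve near the monopoly price. The goal is to widen the revenue gap between $p^*_\segment$ and any lower-quantile price to more than $\epsilon_S$, so that no such lower-quantile price can appear $\epsilon_S$-optimal under the seller's approximately accurate beliefs. The perturbations must remain small to satisfy the weight- and mixture-preservation conditions of Definition~\ref{def:robust-segmentation}, and the centroid constraint $\int \vec{\point}\, d\measure = \centroid$ must be maintained globally across all segments.

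First I would iterate over the significant segments $\segment \in \segmentset^*$ with $\E_{v \sim \valuedist(\vec{\point}^*_\segment)}[v] \ge \epsilon_I$, computing each segment's true optimal price $p^*_\segment$ and quantile $q^*_\segment$. The failure mode to eliminate is the plateau: a price $p'$ with quantile $q' \le q^*_\segment - \epsilon_I$ whose revenue is within $\epsilon_S$ of the optimum. Although the revenue curve of $\valuedist(\vec{\point}^*_\segment)$ is concave by Condition~1 of MHR-like distributions, it can be nearly flat over a wide interval when mixing types with different monopoly prices, since Condition~2 (strong concavity near the monopoly price) is not preserved under mixtures. I would then replace $\vec{\point}^*_\segment$ by $\vec{\point}^r_\segment := (1-\delta)\vec{\point}^*_\segment + \delta \vec{\eta}_\segment$, where $\vec{\eta}_\segment$ is a convex combination of types engineered so that its induced value distribution has monopoly price close to $p^*_\segment$ and retains strong concavity near $q^*_\segment$. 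The parameter $\delta$ is chosen so that $\delta \le \epsilon_I$ (for mixture preservation) while the induced quadratic curvature $\Omega(\delta \cdot \epsilon_I^2)$ dominates $\epsilon_S$, ensuring every dangerous $p'$ loses enough relative revenue to fall outside the $\epsilon_S$-optimal set.

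To maintain the centroid, I would collect the required $\delta$ fraction of weight from each segment into a common pool and fund the $\vec{\eta}_\segment$ injections from it, then absorb any residual mismatch into a small number of auxiliary correction segments of total weight $O(\epsilon_I)$, which are insignificant by construction and do not affect the performance guarantees of Lemma~\ref{lem:robust-segmentation-performance-main}. The main obstacle I anticipate is establishing the specific trade-off in~\eqref{eqn:robustify-epsilons}: the $\epsilon_S^{1/6}$ exponent suggests a three-way balancing among the perturbation weight $\delta$, the quantile gap $\epsilon_I$, and the curvature strength of $\vec{\eta}_\segment$, further compounded by a $\log^{1/6} \valuenum$ union-bound cost over the $\valuenum$ candidate prices and a $\typenum^{2/3}$ cost for projecting the perturbations onto the type simplex while respecting the centroid. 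Making these pieces fit together cleanly requires a careful global analysis rather than a per-segment one, since the $\vec{\eta}_\segment$ injections share a common pool and the centroid constraint couples every segment to every other.
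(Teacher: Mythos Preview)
Your high-level plan matches the paper's: perturb each significant segment as $\vec{\point}^r_\segment = (1-\delta)\vec{\point}^*_\segment + \delta\,\vec{\eta}_\segment$, then repair the centroid with low-weight auxiliary segments. The centroid repair is in fact simpler than you fear---after scaling every weight by $1-\epsilon_I$, one just adds a segment at each vertex of the type simplex carrying the residual mass (Algorithm~\ref{alg:robustify-segmentation}), and the robustness analysis is per-segment (Lemma~\ref{lem:robustify-single-segment}) rather than globally coupled.

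The real gap is the choice of $\vec{\eta}_\segment$. You propose a mixture whose induced distribution has monopoly price near $p^*_\segment$ and strong concavity there, so that injected curvature $\Omega(\delta\,\epsilon_I^2)$ beats $\epsilon_S$. Two problems. First, you give no argument that such a mixture exists: strong concavity (Condition~\ref{con:strong-concavity} of Definition~\ref{def:mhr-like-distributions}) holds for individual MHR-like types, not for convex combinations, and nothing guarantees any mixture of the available types peaks at $p^*_\segment$. Second, even granting existence, strong concavity of $\valuedist(\vec{\eta}_\segment)$ bounds the revenue drop as a function of the quantile gap \emph{measured in $\valuedist(\vec{\eta}_\segment)$}, whereas robustness concerns prices whose quantile is $\epsilon_I$ below $q^*_\segment$ \emph{measured in the mixture $\valuedist(\vec{\point}^r_\segment)$}; these two quantile gaps are unrelated, so the $\Omega(\delta\,\epsilon_I^2)$ estimate is unfounded. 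The paper instead proves a structural result (Lemma~\ref{lem:robustify-find-type-main}): a \emph{single type} $\type^*$ always exists with $\Rev\big(p,\valuedist(\type^*)\big) < \Rev\big(p^*,\valuedist(\type^*)\big) - \epsilon_S\typenum/\epsilon_I$ for every dangerous $p$, with no constraint whatsoever on $\type^*$'s own monopoly price. Its proof is by contradiction, via a weighted average over the intermediate prices with weights $\alpha_i = 1/(i\log\valuenum)$ (Lemma~\ref{lem:robustify-beat-union-bound}) together with a case analysis on where $p(\type)$ lies relative to $[p^*,\bar p]$; the bottleneck case $p(\type)<p^*$ is precisely what forces the exponent $1/6$ in~\eqref{eqn:robustify-epsilons}. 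This structural lemma is the missing idea in your proposal, and without it the perturbation direction $\vec{\eta}_\segment$ is undefined.
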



\subsubsection{Proof Sketch of Lemma~\ref{lem:robustify}}
\label{sec:proof-robustify-main}

%
\begin{figure*}
    \centering
    \begin{subfigure}{.43\textwidth}
        \includegraphics[width=\textwidth]{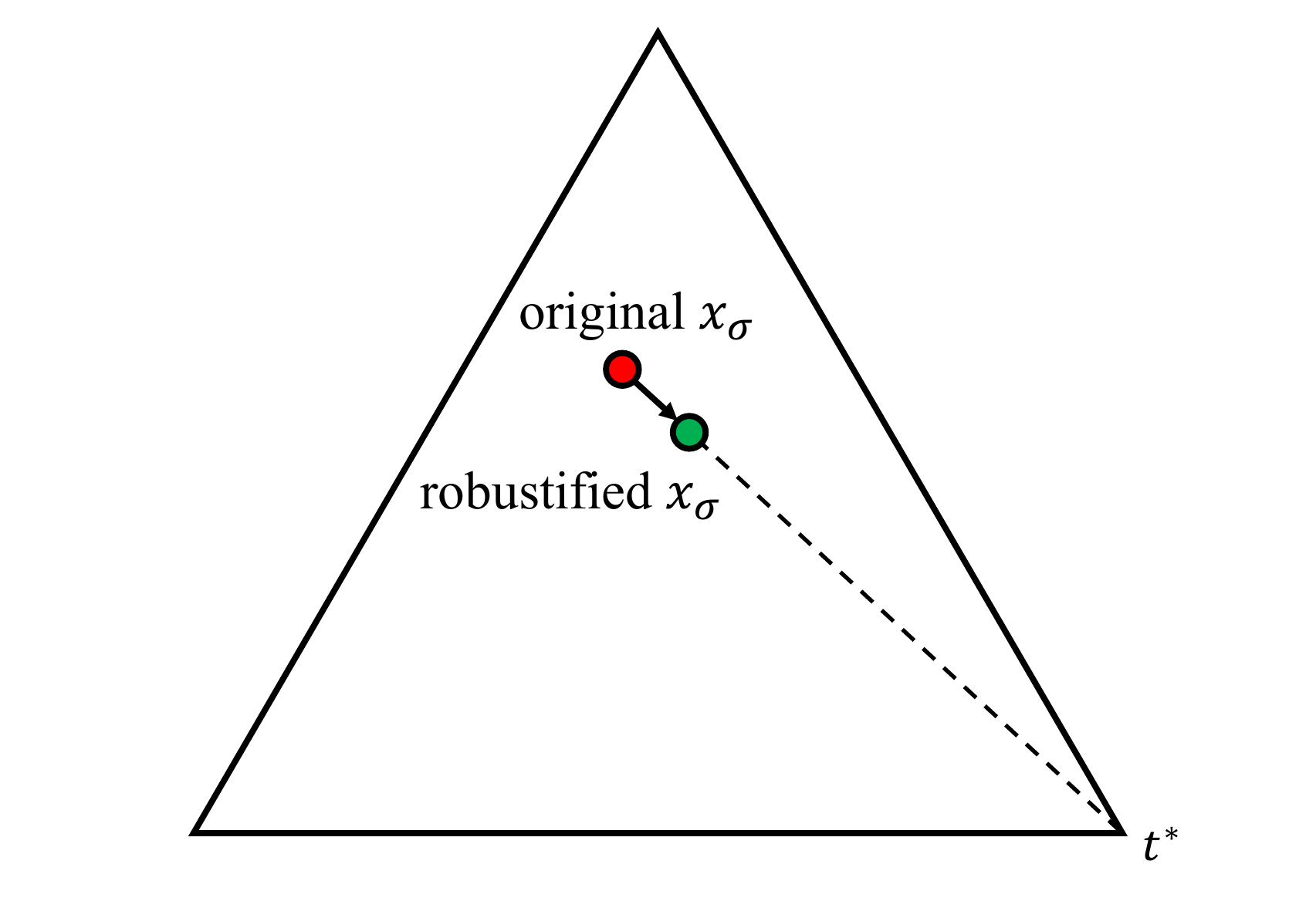}
        \caption{Robustify a significant segment}
        \label{fig:robustify-simple-1}
    \end{subfigure}
    \begin{subfigure}{.43\textwidth}
        \includegraphics[width=\textwidth]{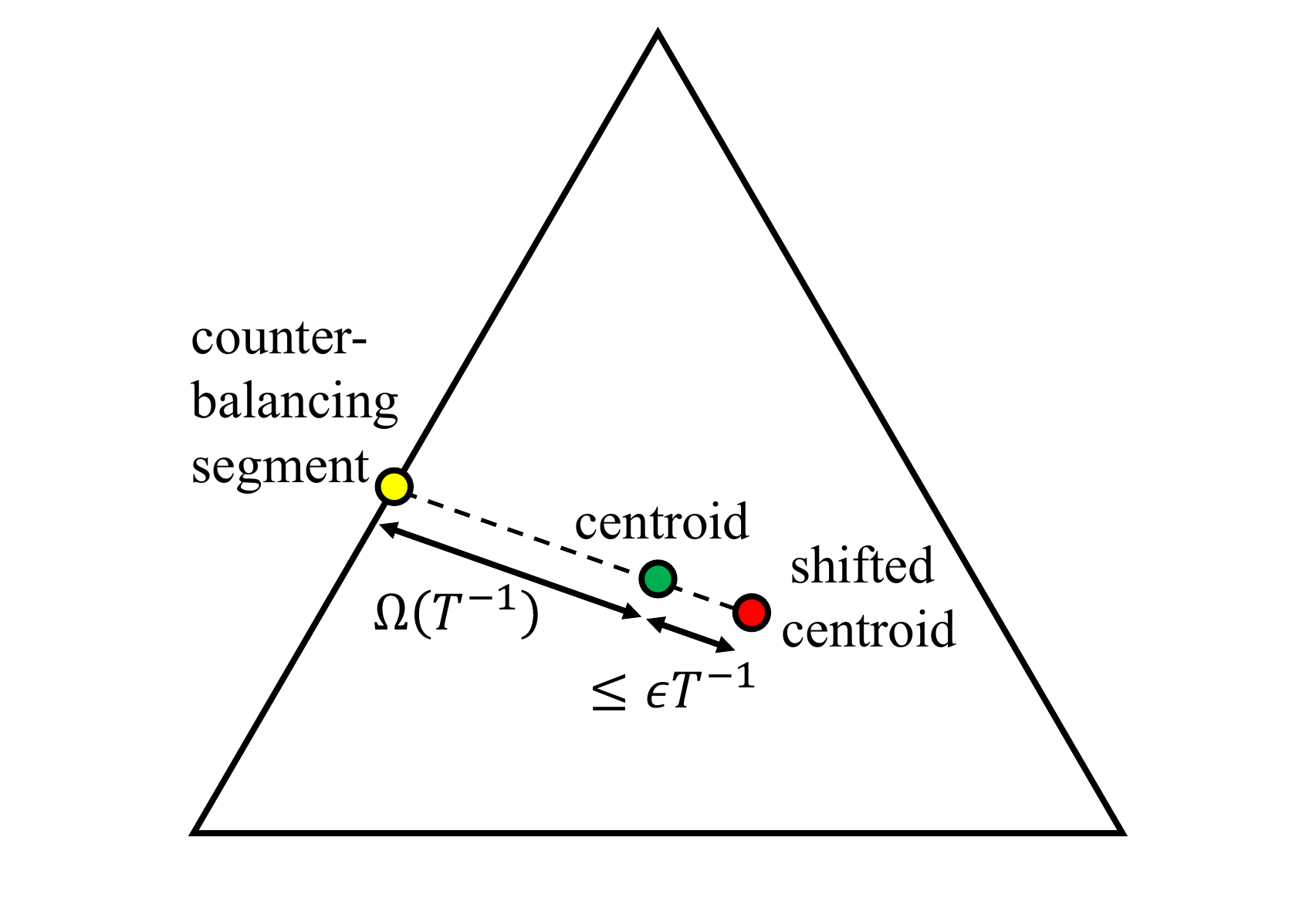}
        \caption{Counterbalance to restore the centroid}
        \label{fig:robustify-simple-2}
    \end{subfigure}
    \caption{Robustification. }
    \label{fig:robustify-simple}
\end{figure*}
\paragraph{Step 1:}
Robustify the significant segments one by one, ignoring the centroid constraint.
Any significant segment $\segment$ is represented by a point $\vec{\point}_\segment$ and weight $w_\segment = \mu(\vec{\point}_\segment)$, whose intended price is $p^*$.
In the simplex view, we want to move slightly  away from $\vec{\point}_\segment$ 
such that we end up far from all regions $\area_p$ 
where price $p$ gives a small consumer surplus.
How do we find which direction to move towards?  
(Since we are in high dimensions, we cannot rely on geometric intuition.)

The choice of direction to move towards relies on a structural result about the mixtures of MHR-like distributions stated as Lemma \ref{lem:robustify-find-type-main}.
The lemma promises that there exists a type $\type^*$ such that 
for the distribution $\valuedist(\type^*)$, 
for prices whose quantiles are less than that of $p^*$ by at least $\epsilon_I$, 
there is a revenue gap of $ \tfrac{\epsilon_S \typenum}{\epsilon_I}$. 
Once we prove existence, 
we can find such a type by enumerating over all types and checking if the property holds. 

In the simplex view, see Figure~\ref{fig:robustify-simple-1}.
We want to move $\vec{\point}_\segment$(red point) towards the vertex that corresponds to type $\type^*$(to the green point); 
we want to be at $(1-\tfrac {\epsilon_I} {T})\vec{\point}_\segment + \tfrac {\epsilon_I} {T} \vec{\point}_{\type^*} $.
To do this, decrease the probability of mapping each type to $\segment$ by an $1 - \frac{\epsilon_I}{\typenum}$ factor;
then, increase the probability of mapping $\type^*$ to $\segment$ additively by $\frac{w_\segment \epsilon_I}{\typenum}$ to restore the original weight.
Clearly, this satisfies the \emph{mixture preservation} condition.

Let $p$ be any price whose quantile is smaller 
than that of $p^*$ by at least $\epsilon_I$. 
The revenue gap between $p$ and $p^*$ for $\valuedist(\type^*)$ is $ \tfrac{\epsilon_S \typenum}{\epsilon_I}$, and 
we moved towards $\vec{\point}_{\type^*} $ by 
$\tfrac {\epsilon_I} {T} $, 
therefore 
the revenue gap between $p$ and $p^*$ 
for the mixture is at least $\epsilon_S.$
As a result, $p$ cannot be an $\epsilon_S$-optimal price in the resulting segment.
Thus, we have the \emph{robustness} condition.

\paragraph{Step 2:}
See Figure~\ref{fig:robustify-simple-2}.
We will add a counterbalancing segment(yellow) to restore the centroid.
After the first step, the centroid may be shifted from its intended location, i.e., the middle of the simplex, by up to $\frac{\epsilon_I}{\typenum}$.
Consider the line that crosses the intended centroid and the shifted one. 
Add a counterbalancing segment at its intersection with the boundary of the simplex on the opposite side of shifted centroid, with an appropriate weight that restores the centroid.
The weight is only $O(\epsilon_I)$ because the distance between the intended centroid and the counterbalancing segment, in fact, any point on the boundary of the simplex in general, is at least $\Omega(\frac{1}{T})$ by basic geometry.

Finally, the total weight may now exceed $1$ by up to $O(\epsilon_I)$.
Normalize the weights of all segments to restore a total weight of $1$.
It decreases the weights of the segments by at most $1 - O(\epsilon_I)$ and therefore satisfies the \emph{weight preservation} condition.

We will show the formal algorithm and analysis in Appendix~\ref{app:robustify-algorithm}.

\subsubsection{Structural Lemma and the Proof Sketch}
\label{sec:proof-robustify-find-type-main}

\begin{lemma}
	\label{lem:robustify-find-type-main}
	For any segment $\vec{\point} \in \simplex$ that is significant in the sense that $\E_{v \sim \valuedist(\vec{\point})} [ v ] \ge \epsilon_I$, and its corresponding monopoly price $p^*$, there is a type $\type^* \in \typeset$ such that for any price $p$ whose quantile w.r.t.\ $\valuedist(\vec{\point})$ is smaller than that of $p^*$ by $\epsilon_I$, we have:
	\[
	\Rev \big( p, \valuedist(\type^*) \big) < \Rev \big( p^*, \valuedist(\type^*) \big) - \tfrac{\epsilon_S \typenum}{\epsilon_I}
	~.
	\]
\end{lemma}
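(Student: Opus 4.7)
My plan is to prove Lemma~\ref{lem:robustify-find-type-main} in three stages: reduce the universal quantifier over bad prices to a single canonical bad price, lower-bound the mixture's monopoly revenue, and then identify $\type^*$ via a quantile-averaging argument combined with strong concavity.

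\textbf{Stage 1 (Reduction to a critical bad price).} Let $q^*(p)$ denote the quantile of $p$ with respect to $\valuedist(\vec{\point})$ and let $\tilde{p}$ be the smallest price with $q^*(\tilde{p}) \leq q^*(p^*) - \epsilon_I$; any other bad price $p$ satisfies $p \geq \tilde{p}$. I would additionally require $\type^*$ to have individual monopoly price $p^*_{\type^*} \leq p^*$. Under this constraint, property~1 (concavity of the revenue curve in quantile) combined with the monotonicity of quantile in price forces $\Rev(\cdot,\valuedist(\type^*))$ to be non-increasing on $[p^*,1]$, so a revenue-gap bound at $\tilde{p}$ lifts automatically to every $p \geq \tilde{p}$. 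The existence of some type with $p^*_\type \leq p^*$ follows from the monopoly optimality of $p^*$: were every $p^*_\type > p^*$, then for any discrete price strictly between $p^*$ and $\min_\type p^*_\type$, property~2 applied to each $\valuedist(\type)$ would make each individual revenue strictly increase, contradicting the mixture monopoly of $p^*$.

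\textbf{Stage 2 (Lower-bounding the mixture monopoly revenue).} I would show $\Rev(p^*,\valuedist(\vec{\point})) \geq \tfrac{\epsilon_I}{e\typenum}$. For any $\type_0 \in \typeset$, the mixture monopoly property together with property~4 of $\valuedist(\type_0)$ gives
\[
\Rev(p^*,\valuedist(\vec{\point})) \geq \textstyle\sum_\type \point_\type \Rev(p^*_{\type_0},\valuedist(\type)) \geq \point_{\type_0} \Rev(p^*_{\type_0},\valuedist(\type_0)) \geq \tfrac{\point_{\type_0}}{e}\E_{v \sim \valuedist(\type_0)}[v].
\]
Summing over $\type_0$ and dividing by $\typenum$ yields $\Rev(p^*,\valuedist(\vec{\point})) \geq \tfrac{1}{e\typenum}\E_{v \sim \valuedist(\vec{\point})}[v] \geq \tfrac{\epsilon_I}{e\typenum}$, since the segment is significant.

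\textbf{Stage 3 (Selecting $\type^*$ at the critical bad price).} The linearity identity $\sum_\type \point_\type (q_\type(p^*) - q_\type(\tilde{p})) = q^*(p^*) - q^*(\tilde{p}) \geq \epsilon_I$ shows that the weighted mean of the individual quantile drops from $p^*$ to $\tilde{p}$ is at least $\epsilon_I$; hence some type in the subset identified in Stage~1 simultaneously exhibits a substantial individual quantile drop $q_{\type^*}(p^*) - q_{\type^*}(\tilde{p})$ and a monopoly revenue no smaller than what Stage~2 identifies for a ``heavy'' $\type_0$. Combining the quantile drop with property~2 (strong concavity near $\valuedist(\type^*)$'s monopoly) yields $\Rev(p^*,\valuedist(\type^*)) - \Rev(\tilde{p},\valuedist(\type^*)) = \Omega\bigl((q_{\type^*}^* - q_{\type^*}(\tilde{p}))^2\bigr) \cdot \Rev(p^*_{\type^*},\valuedist(\type^*))$. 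Plugging in the Stage~2 bound then produces the required $\tfrac{\epsilon_S \typenum}{\epsilon_I}$ provided $\epsilon_I$ satisfies \eqref{eqn:robustify-epsilons}; the $\typenum^{2/3}$ exponent there is consistent with compounding an averaging loss of $1/\typenum$ for the type choice, a quadratic loss from strong concavity, and the $1/\typenum$ loss in the mixture revenue lower bound.

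\textbf{Main obstacle.} The delicate point is Stage~3. The type maximizing the individual quantile drop need not be one for which $p^*$ is close to the individual monopoly, and conversely the type with $p^*$ nearest its own monopoly may contribute a negligible individual quantile drop at $\tilde{p}$. Reconciling both conditions into one $\type^*$ while tracking enough $\typenum$-factors to match~\eqref{eqn:robustify-epsilons} likely requires a case split according to how close $p^*_\type$ lies to $p^*$ in quantile space, and is where the bulk of the technical work will live.
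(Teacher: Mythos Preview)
Your Stages~1 and~2 essentially match the paper: the reduction to the smallest bad price $\bar p$ (your $\tilde p$) and the bound $\Rev(p^*,\valuedist(\vec{\point}))\ge\Omega(\epsilon_I/\typenum)$ both appear there (the latter is Lemmas~\ref{lem:robustify-optimal-SW-lower-bound}--\ref{lem:robustify-optimal-revenue-lower-bound}). One remark: your extra restriction $p^*_{\type^*}\le p^*$ is unnecessary for the reduction. Concavity alone suffices: if $q<\bar q(\type)<q^*(\type)$ then concavity of the revenue curve in quantile space gives
\[
R(q)\le R(q^*(\type))+\tfrac{q^*(\type)-q}{q^*(\type)-\bar q(\type)}\big(R(\bar q(\type))-R(q^*(\type))\big)<R(q^*(\type))-\epsilon_R
\]
whenever $R(\bar q(\type))<R(q^*(\type))-\epsilon_R$, since the coefficient exceeds~$1$.

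The genuine gap is Stage~3, and your ``main obstacle'' diagnosis is accurate but the proposed fix is not enough. Your averaging produces a type with large individual quantile drop $\Delta_\type=q_\type(p^*)-q_\type(\bar p)$, but converting that into the required revenue gap via strong concavity needs both (i) $p^*_\type\le p^*$ (otherwise the gap can have the wrong sign) and (ii) a lower bound on $R(\type)$ --- and Stage~2 only bounds the \emph{mixture's} revenue, not any individual $R(\type)$. These conditions need not coexist: a type whose mass concentrates in $(p^*,\bar p)$ can absorb essentially all of the mixture's quantile drop while having $p^*_\type\in(p^*,\bar p)$, where strong concavity says nothing about the sign of $\Rev(p^*,\valuedist(\type))-\Rev(\bar p,\valuedist(\type))$. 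A case split on the proximity of $p^*_\type$ to $p^*$ does not by itself manufacture a single type meeting all three requirements.

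The paper takes a different route: it argues by contradiction. Assuming the gap fails for \emph{every} type at $\bar p$, it uses a weighted-sum trick with weights $\alpha_i=1/(i\log\valuenum)$ over all prices between $p^*$ and $\bar p$ (Lemma~\ref{lem:robustify-beat-union-bound}, building on Lemma~\ref{lem:robustify-one-to-many}) to replace a naive $\valuenum$-factor union bound by a $\log\valuenum$ loss, and concludes that with positive probability a type $\type\sim\vec\point$ exhibits a ``large plateau'': revenue roughly constant over a quantile interval of width $\Omega(\epsilon_I^2/\typenum)$ at height $\Omega(\epsilon_I/\typenum)$ (Lemmas~\ref{lem:robustify-condition-two} and~\ref{lem:robustify-quantile-gap}). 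A final case analysis on the location of $p(\type)$ relative to $[p^*,\bar p]$ shows such a plateau violates strong concavity (Lemma~\ref{lem:robustify-contradiction}). The weighted-sum lemma and the plateau-width argument are the non-obvious ingredients your outline is missing; the $\log^{1/6}\valuenum$ factor in~\eqref{eqn:robustify-epsilons} comes precisely from the former.
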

This is technically the most challenging part of the proof. 
We will show the full proof in Appendix~\ref{sec:robustify-find-type}.
By concavity of the revenue curves of MHR-like distributions, it suffices to consider the inequality when $p$ is the smallest price whose quantile w.r.t.\ $\valuedist(\vec{\point})$ is smaller than that of $p^*$ by $\epsilon_I$. 
Let this price be $\Bar{p}$.

Recall the plateau example in Figure~\ref{fig:plateau}.
From the picture, it is tempting to pick the type that corresponds to the ``right-most'' dotted revenue curve, as it has the desirable shape that the revenue rapidly decreases when the price increases from $p^*$.
There are several problems with this approach. 
First, the concept of ``right-most'' revenue curve is underdefined. 
Is it the one with the smallest monopoly price? 
Or the one with the largest monopoly sale probability? 
Second, even if we find a type whose revenue curve has the desirable shape, it still may not prove Lemma~\ref{lem:robustify-find-type-main}.
For example, it may not have a large enough optimal revenue in the first place and, thus, the RHS of the inequality in the lemma is negative.

Instead, we will prove the lemma by contradiction.
Intuitively, the contradiction will be that 
there is a type $t$ such that the revenue curve 
of $\valuedist(\type)$ has a large plateau; 
this is not possible for MHR-like distributions. 
The assumption to the contrary guarantees that
the revenue of $p^*$ is not much above that of any price between $\bar{p}$ and $p^*$. 
The following additional conditions 
formalize the `large plateau' notion: 
\begin{enumerate}
	\item Revenue between $\bar{p}$ and $p^*$ is not much higher: $\forall~ p' \in [p^*,\bar{p}]$ , 
	\[ \textstyle \Rev \big( p', \valuedist(\type)\big) - \Rev \big( p^*, \valuedist(\type)\big) \le \tilde{O} \big( \frac{\epsilon_S T^2}{\epsilon_I^2} \big)
	.\]
	%
	\item The plateau is high, i.e., revenue of $p^*$ is large:  $\Rev \big( p^*, \valuedist(\type)\big) \ge \Omega \big( \frac{\epsilon_I}{T} \big)$. 
	\item The plateau is wide, i.e., the quantiles of $\bar{p}$ and $p^*$ differ by at least $\Omega \big( \frac{\epsilon_I^2}{T} \big)$.
\end{enumerate}

The rest of the subsection assumes to the contrary that the inequality in the lemma fails to hold for all types.
Then, we use a probabilistic argument to show that there must be a type $\type$ 
such that the distribution $\valuedist(\type)$ satisfies the conditions mentioned above, and argue that these lead to a contradiction.

\paragraph{Probabilistic Argument.}

Consider sampling a type $\type$ according to the mixture induced by the segment.
We show that under the assumption to the contrary, the probability that 
condition 1 is violated is less than $O\big(\frac{\epsilon_I}{T}\big)$
(Lemma~\ref{lem:robustify-beat-union-bound}).
Further, we prove that the revenue of $p^*$
w.r.t. $\valuedist(\vec{\point})$, which is optimal for this distribution, is at least an $\Omega\big(\frac{1}{T}\big)$ fraction of the social welfare w.r.t.$\valuedist(\vec{\point})$
(Lemma~\ref{lem:robustify-optimal-SW-lower-bound}).
Then by the assumption that this segment is significant, this is at least $\Omega \big( \frac{\epsilon_I}{T} \big)$
(Lemma~\ref{lem:robustify-optimal-revenue-lower-bound}).
Then, by a Markov inequality type argument, there is at least an $\Omega\big(\frac{\epsilon_I}{T}\big)$ probability that Condition 2 is satisfied
(Lemma~\ref{lem:robustify-condition-two}).
Putting together, there is a positive chance that we sample a type $\type$ that satisfies the first two conditions.
Finally, we finish the argument by showing that the first two conditions actually imply the third one
(Lemma~\ref{lem:robustify-quantile-gap}).

\paragraph{Contradiction.}
The proof is a case by case analysis, so 
we present the bottleneck case which forces 
the choice we made in Eqn. \eqref{eqn:robustify-epsilons}. 
This is when the monopoly price $p(\type)$ of type $\type$ is smaller than both $\bar{p}$ and $p^*$. 
A complete proof that includes the other cases are deferred to 
Appendix~\ref{sec:robustify-find-type} (Lemma~\ref{lem:robustify-contradiction}).

The concavity, and strong concavity near monopoly price, of MHR-like distributions, 
along with the fact that both $\bar{p}$ and $p^*$ are larger than the monopoly price,
imply that the revenue gap is at least the revenue of $p^*$ times the square of the quantile gap between the prices.
Further by the second and third conditions above, of having large revenue and large quantile gap, the revenue gap between prices $\bar{p}$ and $p^*$ is at least:
\[
\Omega \bigg( \frac{\epsilon_I}{T} \bigg) \cdot \Omega \bigg( \frac{\epsilon_I^2}{T} \bigg)^2 = \Omega \bigg( \frac{\epsilon_I^5}{T^3} \bigg)
\]
This is greater than $ \frac{\epsilon_S T}{\epsilon_I}$ by our choice of $\epsilon_I \ge \epsilon_S^{\frac{1}{6}} T^{\frac{2}{3}} \log^{\frac{1}{6}} V$ 
in Eqn. \eqref{eqn:robustify-epsilons}.

\subsection{Proof of Theorem~\ref{thm:sample-complexity-mhr-like}: Project, Optimize, and Robustify}
\label{sec:sample-complexity-mhr-like-proof}

\begin{algorithm*}[t]
    \caption{Learn a (Robust) Segmentation from Samples}
    \label{alg:learn-segmentation-from-samples-main}
    \begin{enumerate}
        \item Construct empirical distributions $\empiricaldist(\type)$'s, $\type \in \typeset$, from samples.
        \item Find MHR-like empirical distributions $\widetilde{\empiricaldist}(\type)$'s, $\type \in \typeset$, such that the Kolmogorov-Smirnov distances between them and the corresponding empirical distributions are small:
                $
                d_{KS} \big( \widetilde{\empiricaldist}(\type), \empiricaldist(\type) \big) \le \epsilon_S
                $.
        \item Find optimal segmentation $(\segmentset^*, \segmentmap^*)$ w.r.t.\ MHR-like empirical distributions $\vec{\widetilde{\empiricaldist}}$.
        \item Construct the robustified segmentation $(\segmentset^r, \segmentmap^r)$ and return it.
    \end{enumerate}
\end{algorithm*}
Finally, we show how to use the robustification technique to design an algorithm, presented as Algorithm~\ref{alg:learn-segmentation-from-samples-main}, that learns a (robust) $O(\epsilon_I)$ segmentation in the sample complexity model.

\paragraph{Algorithm.}
%
Similar to the existing works on the sample complexity of mechanism design, the algorithm starts by constructing the empirical distributions.
Then, we \emph{project} them back to the space of MHR-like distributions w.r.t.\ the Kolmogorov-Smirnov distance $d_{KS}$, i.e., the maximum difference in the quantile of any value. 
The feasibility of this step comes from the fact that the true distributions are MHR-like and satisfy the inequality.
We explain in Appendix~\ref{app:find-nearby-mhr} how to compute in polynomial time approximate projections that relax the RHS of the inequality by a constant factor;
other constants in our analysis need to be changed accordingly but the bounds stay the same asymptotically. 
Further, we \emph{optimize} the segmentation according to the MHR-like empirical distributions.
Finally, we \emph{robustify} the resulting segmentation using Lemma~\ref{lem:robustify}.

\paragraph{Analysis.}
%
Note that for any type $\type$, the distances between the true distribution $\valuedist(\type)$ and the seller's belief $\valuedist_S(\type)$, between $\valuedist(\type)$ and the empirical distribution $\empiricaldist(\type)$, and between $\empiricaldist(\type)$ and the MHR-like empirical distribution $\widetilde{\empiricaldist}(\type)$ are bounded by $\epsilon_S$.
Hence, the distance between the seller's belief $\valuedist_S(\type)$ and the MHR-like distribution $\widetilde{\empiricaldist}(\type)$ is at most $3 \epsilon_S$ by the triangle inequality and, thus, the same conclusion holds replacing types with mixtures induced from the segments.
Therefore, for any segment in the segmentation chosen by the algorithm, the seller's monopoly price w.r.t.\ her beliefs is a $6\epsilon_S$-optimal price w.r.t.\ the MHR-like empirical distributions.
By Lemma~\ref{lem:robust-segmentation-performance-main}, the performance of the algorithm is an $O(\epsilon_I)$-approximation comparing with the optimal w.r.t.\ the MHR-like empirical distributions.

It remains to show that the optimal w.r.t.\ the MHR-like empirical distributions is an $O(\epsilon_I)$-approximation to the optimal w.r.t.\ the true distributions.
To do that, it suffices to find a good enough segmentation achieving this approximation.
For this we once again resort to Lemma~\ref{lem:robustify}, in particular, the existence of an $(\epsilon_I,\epsilon_S)$-robustified segmentation $(\segmentset^r, \segmentmap^r)$ for the optimal segmentation w.r.t.\ the true distributions $\valuedist(\type)$'s.
Note that the MHR-like empirical distributions $\widetilde{\empiricaldist}(\type)$'s are at most $O(\epsilon_S)$ away from the corresponding true distributions $\valuedist(\type)$'s, by triangle inequality.
Therefore, running $(\segmentset^r, \segmentmap^r)$ on the MHR-like distributions, with a seller who posts the monopoly price w.r.t.\ the MHR-like distributions, gives an $O(\epsilon_I)$-approximation by Lemma~\ref{lem:robust-segmentation-performance-main}.

\section{Bandit Model}
\label{sec:bandit}

In the bandit model, the intermediary interacts with the seller and the buyer repeatedly for $\samplenum$ rounds for some positive integer $\samplenum$, with the buyer's type-value pair freshly sampled in each round.
The goal is to maximize the cumulative objective during all $\samplenum$ rounds.
There are a large variations of models depending on the modeling assumptions. 
Next, we explain our choice.

\paragraph{Intermediary's Information:}
The intermediary does not know the value distributions at the beginning and, therefore, must learn such information through the interactions in order to find a good enough segmentation.
Further, the intermediary observes in each round only the purchase decision of the buyer, but not her value.
This is similar to the bandit feedback in online learning and hence the name of our model.
We remark that the alternative model where the intermediary can observe the values, which corresponds to full-information feedback in online learning, easily reduces to the sample complexity model as the intermediary may simply run the algorithm in the sample complexity model using the bids in previous rounds as the samples.

Since the intermediary can observe the buyer's type in each round, she can easily learn the type distribution through repeated interactions.
To simplify the discussions, we will omit this less interesting aspect of the problem and will assume that the type distribution is publicly known.
Following the treatment in previous models, we further assume that it is a uniform distribution.

\paragraph{Buyer's Behavior:}
We assume the buyer is myopic and, therefore, buys the item in each round if and only if the price posted is at most her value.
In other words, the buyer does not take into account that her behavior in the current round may influence how the intermediary and the seller acts in the future.
This challenge of non-myopic buyers was partly addressed in the online auction problem by \cite{HLW18}.
Their techniques, however, do not directly apply to our problem.

\paragraph{Seller's Information:}
We assume that, like the intermediary, the seller does not have any information about the value distributions of the buyer at the beginning, and must learn such information through bandit feedback. 
With this assumption, we will investigate how to encourage the seller to explore on the intermediary's behalf.
What makes it challenging is that the seller's objective (revenue) and the intermediary's objective (e.g., social welfare) may not be aligned.


\paragraph{Seller's Behavior:}
Any algorithm by the intermediary must rely on some assumptions on the seller's behavior to get a non-trivial performance guarantee.
Informally, we need the seller to pick an (approximately) optimal price in terms of revenue when there is enough information for finding one; 
there is not much we can do if the seller simply ignores any information and picks prices randomly.
On the other hand, we also need the seller to explore at a reasonable rate in order to learn the value distributions.
If the seller could have other sources of information 
which allow him to estimate the distribution accurately, he may severely limit his exploration on prices whose confidence intervals suggest high-potential (and high uncertainty).
Our assumption must disallow such strategies and ensure that  the seller learns the distributions
only via observing the buyer's actions. 
\emph{What are the mildest behavioral assumptions (on the seller) that allow the intermediary to have a non-trivial guarantee in bandit model?}

Note that the seller herself faces an online learning problem with bandit feedback.
Our model is driven by the exploration-exploitation dilemma in her viewpoint.
First, we introduce the upper confidence bound (UCB) and the lower confidence bound (LCB) of the quantile of any value $v$ and any type $\type$ given past observations in the form of (type, price, purchase decision)-tuples. 
\begin{enumerate}
    \item 
    For any value $v \in \valueset$ and any type $\type \in \typeset$, suppose there are $\samplenum(v,\type)$ past observations with type $\type$ and price $v$, among which the buyer purchases the item in $\samplenum^+(v,\type)$ observations.
    Then, for some constant $C > 0$ that depends on the desired confidence level, let:
    \begin{align*}
        \widetilde{U}(v,\type) 
        &
        = \tfrac{\samplenum^+(v,\type)}{\samplenum(v,\type)} + \sqrt{\tfrac{C}{\samplenum(v,\type)}} \\
        \widetilde{L}(v,\type)
        &
        = \tfrac{\samplenum^+(v,\type)}{\samplenum(v,\type)} - \sqrt{\tfrac{C}{\samplenum(v,\type)}}
    \end{align*}
    \item
    Noting that quantiles are monotone, we define the UCB and LCB as follows:
    %
    \begin{align*}
        U(v,\type)
        &
        = \min_{v' \le v} \widetilde{U}(v',\type) &
        L(v,\type)
        &
        = \max_{v' \ge v} \widetilde{L}(v',\type) 
    \end{align*}
    \item
    This further induces the UCB and LCB of the quantile of value $v$ w.r.t.\ each segment $\vec{\point} \in \simplex$:
    \begin{align*}
        U(v,\vec{\point}) 
        &
        = \sum_{\type\in[\typenum]} \point_\type \cdot U(v,\type) \\
        L(v,\vec{\point})
        &
        = \sum_{\type\in[\typenum]} \point_\type \cdot L(v,\type)
    \end{align*}
\end{enumerate}

For some target average regret $0 < \epsilon_S < 1$ of the seller, we say that she \textbf{exploits} in a round if the segment $\vec{\point}$ and her price $p$ satisfy that:
\[
    U(p,\vec{x}) \ge \max_{p' \in \valueset} U(p',\vec{x}) - \epsilon_S
    ~.
\]
Otherwise, we say that she \textbf{explores}.
We assume that the seller is an $\epsilon_S$-canonical learner in the sense that she exploits in all but at most an $\epsilon_S$ fraction of the rounds.

Among others, we give two example algorithms that satisfy this definition.
First, the Upper-Confidence-Bound (UCB) algorithm satisfies this with $\epsilon_S = 0$.
Further, consider the following simple Explore-then-Commit (ETC) algorithm, with $\epsilon_S = \tilde{O} \big( m^{-\frac{1}{3}} \typenum \valuenum \big)$.
The seller explores in the first $\epsilon_S \samplenum = \tilde{\Omega} \big( \typenum \valuenum \epsilon_S^{-2} \big)$ rounds by posting random prices.
Since a price-type pair shows up with probability $\frac{1}{\typenum \valuenum}$ in each of these rounds, she learns the quantile of every price $p \in \valueset$ w.r.t.\ the value distribution $\valuedist(\type)$ of every $\type \in \typeset$ up to an additive error of $\epsilon_S$ .
Then, in the remaining rounds, she can pick any price that is not obviously suboptimal in the sense that its UCB is smaller than the LCB of some other price.

\subsection{Algorithm}

In each round, the algorithm seeks to place the intermediary in a win-win situation by maintaining a set of optimistic hypothetical value distributions for the types, together with a robustified version of the optimal segmentation w.r.t.\ the hypothetical distributions.
If the seller indeed posts a price that is consistent with our optimistic hypothesis, we use the analysis from the previous section (Thm.~\ref{thm:sample-complexity-mhr-like}) to show that the objective in this round is close to optimal.
Otherwise, if the seller posts a price that is inconsistent with our optimistic hypothesis, we argue that there must be a sufficiently large gap between its UCB and LCB and, thus, the intermediary gets some useful new information.

There is a caveat, however, when the algorithm constructs the robustified segmentation:
it needs to replace $\epsilon_S$ with some slightly larger parameter $\epsilon_M$.
In particular, let $\epsilon_M$ be such that if we define $\epsilon_I$ using Eqn.~\eqref{eqn:robustify-epsilons}, replacing $\epsilon_S$ with $\epsilon_M$, we have: $\epsilon_I m = \typenum \valuenum \epsilon_M^{-3}$.
Solving it gives that $\epsilon_M = \Theta \big( m^{-\frac{6}{19}} \textrm{poly}(\typenum, \valuenum) \big)$ and $\epsilon_I = \Theta \big( m^{-\frac{1}{19}} \textrm{poly}(\typenum, \valuenum) \big)$. 

We show that for $\epsilon_S$-canonical learners with a sufficiently small $\epsilon_S$, which is satisfied by both aforementioned examples, we can get sublinear regret.

\begin{algorithm*}[t]
    \caption{Segmentation algorithm in the bandit model (in each round)}
    \label{alg:bandit}
    \begin{enumerate}
        \item 
        Let $\valuedistset(t)$ be the set of MHR-like distributions with support $\valueset$ such that the quantile of each value $v \in \valueset$ is between the corresponding UCB and LCB.
        \item
        Let $\valuedist^*(t) \in \valuedistset(t)$, $t \in \typeset$, be such that $\Opt(\vec{\valuedist}^*)$ is maximized.
        \item
        Find $(\segmentset^*, \segmentmap^*)$, the optimal segmentation w.r.t.\ $\vec{\valuedist}^*$, via the algorithm in Section~\ref{sec:bayesian}.
        \item 
        Construct $(\segmentmap^r, \segmentset^r)$, a robustified version of $(\segmentset^*, \segmentmap^*)$, via Algorithm~\ref{alg:robustify-segmentation}, using $\epsilon_M$ in place of $\epsilon_S$.
    \end{enumerate}
\end{algorithm*}


\begin{theorem}
    \label{thm:bandit}
    Algorithm~\ref{alg:bandit} gets at least $\Opt - O \big(m^{-\frac{1}{19}} \cdot \poly(\valuenum, \typenum) \big)$ per round on average, provided that the seller is an $\epsilon_S$-canonical learner with $\epsilon_S \le O(\epsilon_M) = O \big( m^{-\frac{6}{19}} \cdot \poly(\valuenum, \typenum) \big)$. 
\end{theorem}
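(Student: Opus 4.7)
My plan is to show that the per-round regret is $O(\epsilon_I) = O(m^{-1/19}\cdot\poly(\valuenum,\typenum))$ by classifying rounds based on the seller's action and separately bounding each class. First I would apply a Hoeffding and union bound to the UCB/LCB definitions to show that, for $C$ sufficiently large, the true distributions satisfy $\valuedist(\type)\in\valuedistset(\type)$ throughout all $\samplenum$ rounds with high probability. Conditioned on this event, optimism of the algorithm's choice of $\vec{\valuedist}^*$ gives $\Opt(\vec{\valuedist}^*) \ge \Opt(\vec{\valuedist}) = \Opt$ in every round, so it suffices to compare the realized per-round objective to $\Opt(\vec{\valuedist}^*)$ rather than to $\Opt$ directly.

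Next I would partition rounds into explore and exploit rounds via the $\epsilon_S$-canonical assumption. Explore rounds are at most $\epsilon_S \samplenum$ in number and contribute $O(\epsilon_S \samplenum)=O(\epsilon_I \samplenum)$ to the total regret since per-round objective is at most $1$. I would then subdivide exploit rounds into \emph{good} ones, where the seller's posted price $p$ is $O(\epsilon_M)$-optimal with respect to $\vec{\valuedist}^*$ on the realized segment $\vec{\point}$, and \emph{bad} ones otherwise.

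For good exploit rounds, Algorithm~\ref{alg:bandit} constructs $(\segmentset^r,\segmentmap^r)$ as an $(\epsilon_I,\epsilon_M)$-robustification of the optimal segmentation for $\vec{\valuedist}^*$, so Lemma~\ref{lem:robust-segmentation-performance-main} applied with $\vec{\valuedist}^*$ in the role of the true distributions yields a per-round objective of at least $\Opt(\vec{\valuedist}^*) - O(\epsilon_I) \ge \Opt - O(\epsilon_I)$. Summed over all good exploit rounds this contributes at most $O(\epsilon_I \samplenum)$ regret, matching the target bound.

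The main obstacle is bounding the number of bad exploit rounds. In such a round the seller's chosen price $p$ has UCB within $\epsilon_S$ of the maximum UCB on the segment yet is $\Omega(\epsilon_M)$-suboptimal under $\vec{\valuedist}^*$; since $\vec{\valuedist}^*\in\valuedistset$ by construction, this discrepancy forces some price $p'\in\valueset$ and some type $\type$ with $\point_\type = \Omega(1/\typenum)$ to have UCB--LCB gap at least $\Omega(\epsilon_M)$ at $(p',\type)$. Because the quantile confidence widths scale as $\sqrt{C/\samplenum(p',\type)}$, any fixed $(p',\type)$ can remain wide-gap for at most $\tilde O(\epsilon_M^{-2})$ observations of that pair, and in each bad round the culprit pair is sampled with probability $\Omega(1/\typenum)$ conditional on the seller posting $p'$. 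A martingale concentration argument then bounds the number of bad exploit rounds by $\tilde O(\typenum\valuenum\epsilon_M^{-3})$, which by the design equation $\epsilon_I \samplenum = \typenum\valuenum\epsilon_M^{-3}$ again contributes only $O(\epsilon_I \samplenum)$ regret. The hard part is formalizing the ``certifies a wide-gap pair'' step: the seller's UCB ranking is a linear functional of the per-type quantile bounds, and translating a global ranking discrepancy into a wide gap at a specific $(p',\type)$ requires a careful type-by-type decomposition weighted by $\vec{\point}$ --- this is where the factor $\Omega(1/\typenum)$ enters and forces the $\epsilon_M^{-3}$ rather than the naive $\epsilon_M^{-2}$ scaling.
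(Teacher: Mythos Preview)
Your overall decomposition matches the paper's, but there is a genuine gap in the ``good exploit rounds'' case. When you invoke Lemma~\ref{lem:robust-segmentation-performance-main} with $\vec{\valuedist}^*$ in the role of the underlying distributions, what you actually obtain is that the expected objective \emph{as measured under $\vec{\valuedist}^*$} is at least $\Opt(\vec{\valuedist}^*)-O(\epsilon_I)$. But the per-round objective you need to lower bound is measured under the \emph{true} distributions $\vec{\valuedist}$. Knowing $\valuedist(\type)\in\valuedistset(\type)$ does not by itself make these two close: in a round where the confidence intervals are still wide at the posted price, the true sale probability can be substantially smaller than the optimistic one, and then the realized welfare and revenue are both much lower than Lemma~\ref{lem:robust-segmentation-performance-main} predicts. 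The paper closes this gap with a third case (their Case~2): if the seller's price is $O(\epsilon_M)$-optimal for $\vec{\valuedist}^*$ yet the true sale probability falls short of the $\vec{\valuedist}^*$-sale-probability by more than $\epsilon_I$, then since both quantiles lie in $[L(p,\vec{\point}),U(p,\vec{\point})]$ the confidence gap at $(p,\vec{\point})$ is at least $\epsilon_I>\epsilon_M$, so this round is again a major exploration and is absorbed into the same $\tilde O(\typenum\valuenum\epsilon_M^{-3})$ count. You need this extra case; without it your ``good'' bound does not control the true objective.

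A smaller issue in your bad-round analysis: the wide gap you need is at the \emph{posted} price $p$ on the realized segment, not at some arbitrary $p'$, since only observations at $(p,\type)$ shrink the relevant confidence width. Moreover, your requirement that the culprit type have mass $\Omega(1/\typenum)$ is not guaranteed by the robustified segmentation. The paper instead argues at the segment level: if $U(p,\vec{\point})-L(p,\vec{\point})\ge\epsilon_M$, then by linearity the expected per-type gap is $\ge\epsilon_M$, so with probability $\ge\epsilon_M/2$ the realized type $\type$ has $\widetilde U(p,\type)-\widetilde L(p,\type)\ge\epsilon_M/2$; this Markov-style step is what produces the extra $\epsilon_M^{-1}$ factor and avoids any assumption on $\point_\type$.
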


If the seller explores in a round such that the corresponding UCB and LCB differs by not only $\epsilon_S$ but by at least $\epsilon_M$, we call it a \emph{major exploration}.
We first upper bound the number of rounds that involve such major explorations in the following lemma.

\begin{lemma}
    \label{lem:exporation-bound}
    The expected number of rounds that are major explorations is at most $\tilde{O}(\typenum \valuenum \epsilon_M^{-3})$.
\end{lemma}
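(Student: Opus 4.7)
The plan is to charge each major exploration round to a ``useful observation'' of some price--type pair and then apply a simple counting argument on the total number of possible useful observations.

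The starting point is the defining inequality for a major exploration: in such a round, with realized segment $\vec{x}$ and posted price $p$, the UCB--LCB gap is at least $\epsilon_M$. I would first translate this aggregated gap into a statement about individual type-level gaps. Using that $U(p,\type) \leq \widetilde{U}(p,\type)$ and $L(p,\type) \geq \widetilde{L}(p,\type)$ (from the monotonizing min and max in the definitions), and that $U$ and $L$ are linear in $\vec{x}$, I obtain
\begin{equation*}
    \epsilon_M \;\leq\; U(p,\vec{x}) - L(p,\vec{x}) \;\leq\; \sum_{\type} x_\type \bigl(\widetilde{U}(p,\type) - \widetilde{L}(p,\type)\bigr) \;=\; \sum_{\type} x_\type \cdot 2\sqrt{C/\samplenum(p,\type)},
\end{equation*}
where $\samplenum(p,\type)$ is the number of past rounds in which price $p$ was posted to a realized type $\type$.

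Next, I would carve out the subset $S = \{\type : \samplenum(p,\type) < 16 C / \epsilon_M^2\}$ of ``under-sampled'' types and show by a threshold argument that $\sum_{\type \in S} x_\type \geq \epsilon_M/4$: contributions from outside $S$ account for at most $\epsilon_M/2$ of the sum, while contributions from inside $S$ are bounded above by $2 \sum_{\type \in S} x_\type$ using the crude bound $\widetilde{U} - \widetilde{L} \leq 2$. By Lemma~\ref{thm:simplex-view}, the realized type in a round with segment $\vec{x}$ is distributed as $\vec{x}$, so with conditional probability at least $\epsilon_M/4$ the round produces a fresh observation of some $(p,\type)$ with $\samplenum(p,\type) < 16C/\epsilon_M^2$; I would call any such observation \emph{useful}.

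With this charging scheme in hand, the counting step is immediate. By the tower property, the expected number of useful observations summed over all major exploration rounds is at least $K \cdot \epsilon_M/4$, where $K$ is the quantity to bound. On the other hand, for each fixed pair $(p,\type)$ the number of useful observations ever produced is at most $16C/\epsilon_M^2$, since once $\samplenum(p,\type)$ crosses that threshold (regardless of whether it was incremented during an exploration or an exploitation round) no further observation of the pair qualifies as useful. Summing over the $\typenum \valuenum$ pairs, rearranging, and absorbing the logarithmic confidence constant $C$ into $\tilde{O}$ yields $K \leq \tilde{O}(\typenum \valuenum \epsilon_M^{-3})$. The main obstacle, in my view, is simply pinning down which UCB and LCB the $\epsilon_M$-gap refers to in the definition of a major exploration; under the reading that the posted price's own confidence interval is wider than $\epsilon_M$, the argument above applies directly, and if instead the defining gap is the optimistic regret $\max_{p'} U(p',\vec{x}) - L(p,\vec{x}) \geq \epsilon_M$, a short case split writing this as $\bigl(U(p^*,\vec{x}) - L(p^*,\vec{x})\bigr) + \bigl(L(p^*,\vec{x}) - L(p,\vec{x})\bigr)$ for $p^* = \arg\max_{p'} U(p',\vec{x})$ reduces each sub-case to the same threshold argument with either $p$ or $p^*$ in the role of the posted price, at the cost only of constant factors.
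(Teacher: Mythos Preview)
Your proof is correct and follows essentially the same approach as the paper: both translate the segment-level UCB--LCB gap of $\epsilon_M$ into an $\Omega(\epsilon_M)$ conditional probability that the realized type has $\widetilde{U}(p,\type)-\widetilde{L}(p,\type)\ge\Omega(\epsilon_M)$ (equivalently $\samplenum(p,\type)=O(\epsilon_M^{-2})$), and then bound the total number of such events by $\typenum\valuenum\cdot O(\epsilon_M^{-2})$. Your explicit threshold set $S$ and the closing discussion of the definitional ambiguity are slightly more detailed than the paper's terse reverse-Markov argument, but the content is the same.
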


\begin{proof}
    Every time that the seller makes a major exploration on some price $p$ in a round, say, in response to a segment represented by a point $\vec{\point} \in \simplex$, the gap between the UCB and the LCB is at least $\epsilon_M$.
    Then, the expected gap between $\widetilde{U}(p, \type)$ and $\widetilde{L}(p, \type)$ is also at least $\epsilon_M$ when type $\type$ is sampled according to $\vec{\point}$.
    This implies that, with probability at least $\frac{\epsilon_M}{2}$, the realized type $\type$ actually has a gap of at least $\frac{\epsilon_M}{2}$ between $\widetilde{U}(p, \type)$ and $\widetilde{L}(p, \type)$.
    Note that for any type $\type$, and any price $p$, this cannot happen by more than $O \big(\epsilon_M^{-2}\big)$ times by the definitions of $\widetilde{U}(p, \type)$ and $\widetilde{L}(p, \type)$.
    Hence, the expected number of times that the seller explores cannot exceed $O \big( \typenum \valuenum \epsilon_M^{-3} \big)$ times.
\end{proof}

We now prove Theorem~\ref{thm:bandit}.
\paragraph{Case 1: Seller picks an undesirable price.}
Suppose the seller fails to pick a price that is at least $14\epsilon_M$-optimal w.r.t.\ the optimistically chosen distributions $\vec{\valuedist}^*$.
Instead, she chooses a price $p$.
Then, either she is not exploiting in the sense of the definition in Section~\ref{sec:bandit}, which cannot happen in more than an $\epsilon_S$ fraction of the rounds, or the UCB of $p$ is at least the maximum UCB among all prices less $\epsilon_S$.
This is larger than the expected revenue induced from the optimistically chosen distribution $\vec{\valuedist}^*$ by at least $14 \epsilon_M - \epsilon_S > \epsilon_M$, by the assumption that $p$ is not $14\epsilon_M$-optimal.
Note that the latter is weakly larger than the LCB.
Hence, we conclude that the UCB and LCB differs by at least $\epsilon_M$, which means that this is a major exploration that cannot happen in more than an $m^{-1} \cdot O \big(\typenum \valuenum \epsilon_M^{-3}\big) = O(\epsilon_I)$ fraction of the rounds (Lemma~\ref{lem:exporation-bound}).

\paragraph{Case 2: Sale probability is lower than expected.}
Next, consider the case when the seller picks a price that is indeed $14\epsilon_M$-optimal w.r.t.\ the optimistically chosen distributions $\vec{\valuedist}^*$, but the sale probability of the price given by the true distributions is smaller than by $\vec{\valuedist}^*$ by more than $\epsilon_I$.
In this case, note that both sale probabilities are bounded between the UCB and LCB; we again conclude that there is a gap of at least $\epsilon_I > \epsilon_M$ between them.
Hence, this is a major exploration which cannot happen in more than $O(\epsilon_I)$ fraction of the rounds.

\paragraph{Case 3: Everything goes as expected.}
Finally, consider the good case, when the seller indeed picks a price that is at least $14\epsilon_M$-optimal least 4\ w.r.t.\ the optimistically chosen distributions $\vec{\valuedist}^*$, and that the sale probability of the price given by the true distributions is at least that by $\vec{\valuedist}^*$ less $\epsilon_I$.
Then, by Lemma~\ref{lem:robust-segmentation-performance-main}, we get that the expected objective in this round is at least $\Opt - O(\epsilon_I)$.

Since the first two cases cannot happen in more than an $O(\epsilon_I)$ fraction of the rounds, the bound stated in Theorem~\ref{thm:bandit} follows.



\section{Further Related Work}
\label{sec:related}


 
The problem of price discrimination is highly related to \emph{screening} in games of asymmetric information, pioneered by \cite{Spe73, CL00}, where the less informed player moves first in hopes of combating adverse selection. In our setting, the seller wishes to screen buyers by charging different prices depending on the buyer's value.  The intermediary's segmentation allows the seller to screen more effectively.  The intermediary themselves face a signaling problem, as their choice of segmentation is effectively a signaling scheme to the seller.  

As such, our work is related to the broad literature on signaling and information design, 
 where a mediator designs the information structures available to the players in a game \cite{BM16}. 
A special case of this 
 is known as \emph{Bayesian persuasion} \cite{KG11,DX16}:  
 an informed sender (here the intermediary) sends a signal about the state of the world to a receiver (here the seller), who must take an action that determines the payoff of both parties.  The goals of the sender and receiver may not be aligned, so the sender must choose a signaling scheme such that the receiver's best response still yields high payoff for the sender.
See \cite{Dug17,BM18} for surveys on these topics. 

Our results for online learning are also related to work on iteratively learning prices~\cite{BH05,MM14,CGM15,PPV16,BDHN17,HLW18}. Both lines of work consider the seller's problem of incentive design or learning, but do not have an intermediary or market segmentation component. 
 Our model is also somewhat related to the literature on \emph{dynamic mechanism design}, which considers the incentive guarantees of multi-round auctions where the same bidders may participate in multiple rounds. \cite{BV10, KLN13,PST14} gave truthful dynamic mechanisms for maximizing social welfare and revenue. 
 
Our results  are related to recent work on incentivizing exploration in a bandit model \cite{FKKK14,MSS15,MSSW16}.  These papers typically model a myopic decision-maker in each round, and an informed non-myopic principle who can influence the decision-maker to explore rather than exploit. In our setting, the seller is myopic decision-maker who sets prices, and the intermediary can influence that decision by changing the segmentation.  The previous results do not directly apply to our setting, as an action corresponds to setting a price in the observed segment.  Hence there are exponentially many actions, so one should not hope for polynomial run time or good regret guarantees by directly applying those results. Additionally, the intermediary chooses the segmentation but the observed segment is chosen randomly, so the intermediary cannot force the seller to play any particular action.

\section{Future Work}
\label{sec:future}

We view our results as initiating a new line of work on algorithmic price discrimination under partial information. We believe there are many promising open problems left to be explored in this direction, and hope this paper inspires future work under other informational models and market environments. We now present some of the most interesting directions for future work.

\paragraph{Competitive Markets.}
This paper and \cite{bergemann2015limits} consider a monopolist seller, 
 which is a good fit for something like an ad exchange. 
In many online marketplaces the sellers are in a competitive 
 rather than a monopolistic setting. 
The products are differentiated so sellers can exert some pricing power, 
 which still incurs deadweight loss. 
It would be very interesting to extend this theory of price discrimination 
 to such competitive markets.

\paragraph{Strategic Buyers.}
When a seller uses past buyer behavior
 in the form of auction bids
 or purchase decisions
 to decide future prices, 
 and a buyer has repeated interactions with such a seller,
 the buyer may be incentivized to strategize. 
Even if each interaction in isolation is strategyproof, 
 the buyer may forgo winning an earlier auction 
 in order to get a lower price in the future. 
When each buyer represents an insignificant fraction of the entire market, 
 techniques from differential privacy can address this issue \cite{HLW18}. 
In this paper we ignore this strategic aspect and 
 assume that buyers are myopic. 
It would be very interesting to get results analogous to \cite{HLW18}, 
 since their techniques do not readily extend to our model.

\paragraph{Worst Case Model.}  
In online learning, 
 even for arbitrary sequence of inputs, 
 we can often get a regret guarantee matching 
 that for an i.i.d.\ input sequence. 
In particular this is true for a monopolistic seller 
 learning an optimal price or an optimal auction \cite{BDHN17,BH05}. 
It is tempting to conjecture that the same holds for our setting as well, 
 but we run into difficulties even modeling the problem. 
How does the seller behave in such a scenario? 
In this paper, we modeled the seller behavior 
 based on the underlying distribution. 
Defining a seller behavior model in the absence of such a distribution that is both reasonably broad 
 and allows regret guarantees in the worst case setting 
 is an interesting challenge.





\bibliographystyle{plainnat}
\bibliography{reference}

\begin{thebibliography}{37}
\providecommand{\natexlab}[1]{#1}
\providecommand{\url}[1]{\texttt{#1}}
\expandafter\ifx\csname urlstyle\endcsname\relax
  \providecommand{\doi}[1]{doi: #1}\else
  \providecommand{\doi}{doi: \begingroup \urlstyle{rm}\Url}\fi

\bibitem[Balcan et~al.(2005)Balcan, Blum, Hartline, and Mansour]{BBHM05}
Maria-Florina Balcan, Avrim Blum, Jason~D Hartline, and Yishay Mansour.
\newblock Mechanism design via machine learning.
\newblock In \emph{Proceedings of the 46th IEEE Symposium on Foundations of
  Computer Science}, pages 605--614. IEEE, 2005.

\bibitem[Balcan et~al.(2016)Balcan, Sandholm, and Vitercik]{BSV16}
Maria-Florina Balcan, Tuomas Sandholm, and Ellen Vitercik.
\newblock Sample complexity of automated mechanism design.
\newblock In \emph{Advances in Neural Information Processing Systems}, pages
  2083--2091, 2016.

\bibitem[Barlow et~al.(1963)Barlow, Marshall, Proschan, et~al.]{BMP63}
Richard~E Barlow, Albert~W Marshall, Frank Proschan, et~al.
\newblock Properties of probability distributions with monotone hazard rate.
\newblock \emph{The Annals of Mathematical Statistics}, 34\penalty0
  (2):\penalty0 375--389, 1963.

\bibitem[Bergemann and Morris(2016)]{BM16}
Dirk Bergemann and Stephen Morris.
\newblock Information design, bayesian persuasion, and bayes correlated
  equilibrium.
\newblock \emph{American Economic Review}, 106\penalty0 (5):\penalty0 586--591,
  2016.

\bibitem[Bergemann and Morris(2018)]{BM18}
Dirk Bergemann and Stephen Morris.
\newblock Information design: A unified perspective.
\newblock Cowles Foundation Discussion Paper No. 2075R3, 2018.

\bibitem[Bergemann and Valimaki(2010)]{BV10}
Dirk Bergemann and Juuso Valimaki.
\newblock The dynamic pivot mechanism.
\newblock \emph{Econometrica}, 78\penalty0 (2):\penalty0 771--789, 2010.

\bibitem[Bergemann et~al.(2015)Bergemann, Brooks, and
  Morris]{bergemann2015limits}
Dirk Bergemann, Benjamin Brooks, and Stephen Morris.
\newblock The limits of price discrimination.
\newblock \emph{American Economic Review}, 105\penalty0 (3):\penalty0 921--57,
  2015.

\bibitem[Blum and Hartline(2005)]{BH05}
Avrim Blum and Jason~D Hartline.
\newblock Near-optimal online auctions.
\newblock In \emph{Proceedings of the 16th Annual ACM-SIAM Symposium on
  Discrete Algorithms}, pages 1156--1163. Society for Industrial and Applied
  Mathematics, 2005.

\bibitem[Bubeck et~al.(2017)Bubeck, Devanur, Huang, and Niazadeh]{BDHN17}
Sebastien Bubeck, Nikhil~R Devanur, Zhiyi Huang, and Rad Niazadeh.
\newblock Online auctions and multi-scale online learning.
\newblock In \emph{Proceedings of the 18th ACM Conference on Economics and
  Computation}, pages 497--514. ACM, 2017.

\bibitem[Cai and Daskalakis(2017)]{CD17}
Yang Cai and Constantinos Daskalakis.
\newblock Learning multi-item auctions with (or without) samples.
\newblock In \emph{Proceedings of the 58th IEEE Annual Symposium on Foundations
  of Computer Science}, pages 516--527. IEEE, 2017.

\bibitem[Cesa-Bianchi et~al.(2015)Cesa-Bianchi, Gentile, and Mansour]{CGM15}
Nicolo Cesa-Bianchi, Claudio Gentile, and Yishay Mansour.
\newblock Regret minimization for reserve prices in second-price auctions.
\newblock \emph{IEEE Transactions on Information Theory}, 61\penalty0
  (1):\penalty0 549--564, 2015.

\bibitem[Cole and Roughgarden(2014)]{CR14}
Richard Cole and Tim Roughgarden.
\newblock The sample complexity of revenue maximization.
\newblock In \emph{Proceedings of the 46th Annual ACM Symposium on Theory of
  Computing}, pages 243--252, 2014.

\bibitem[Courty and Hao(2000)]{CL00}
Pascal Courty and Li~Hao.
\newblock Sequential screening.
\newblock \emph{The Review of Economic Studies}, 67\penalty0 (4):\penalty0
  697--717, 2000.

\bibitem[Devanur et~al.(2016)Devanur, Huang, and Psomas]{DHP16}
Nikhil~R Devanur, Zhiyi Huang, and Christos-Alexandros Psomas.
\newblock The sample complexity of auctions with side information.
\newblock In \emph{Proceedings of the 48th Annual ACM Symposium on Theory of
  Computing}, pages 426--439, 2016.

\bibitem[Dhangwatnotai et~al.(2015)Dhangwatnotai, Roughgarden, and Yan]{DRY15}
Peerapong Dhangwatnotai, Tim Roughgarden, and Qiqi Yan.
\newblock Revenue maximization with a single sample.
\newblock \emph{Games and Economic Behavior}, 91:\penalty0 318--333, 2015.

\bibitem[Dughmi(2017)]{Dug17}
Shaddin Dughmi.
\newblock Algorithmic information structure design: A survey.
\newblock \emph{SIGecom Exchanges}, 15\penalty0 (2):\penalty0 2--24, 2017.

\bibitem[Dughmi and Xu(2016)]{DX16}
Shaddin Dughmi and Haifeng Xu.
\newblock Algorithmic bayesian persuasion.
\newblock In \emph{Proceedings of the 48th Annual ACM Symposium on Theory of
  Computing}, pages 412--425, 2016.

\bibitem[Elkind(2007)]{Elk07}
Edith Elkind.
\newblock Designing and learning optimal finite support auctions.
\newblock In \emph{Proceedings of the 18th Annual ACM-SIAM Symposium on
  Discrete Algorithms}, pages 736--745, 2007.

\bibitem[Frazier et~al.(2014)Frazier, Kempe, Kleinberg, and Kleinberg]{FKKK14}
Peter Frazier, David Kempe, Jon Kleinberg, and Robert Kleinberg.
\newblock Incentivizing exploration.
\newblock In \emph{Proceedings of the 15th ACM Conference on Economics and
  Computation}, pages 5--22, 2014.

\bibitem[Gonczarowski and Nisan(2017)]{GN17}
Yannai~A. Gonczarowski and Noam Nisan.
\newblock Efficient empirical revenue maximization in single-parameter auction
  environments.
\newblock In \emph{Proceedings of the 49th Annual ACM Symposium on Theory of
  Computing}, pages 856--868, 2017.

\bibitem[Gonczarowski and Weinberg(2018)]{GW18}
Yannai~A Gonczarowski and S~Matthew Weinberg.
\newblock The sample complexity of up-to-$\varepsilon$ multi-dimensional
  revenue maximization.
\newblock In \emph{Proceedings of the 59th IEEE Annual Symposium on Foundations
  of Computer Science}, pages 416--426. IEEE, 2018.

\bibitem[Guo et~al.(2019)Guo, Huang, and Zhang]{GHZ19}
Chenghao Guo, Zhiyi Huang, and Xinzhi Zhang.
\newblock Settling the sample complexity of single-parameter revenue
  maximization.
\newblock In \emph{Proceedings of the 51st annual ACM Symposium on Theory of
  Computing}, 2019.

\bibitem[Hartline and Taggart(2019)]{HT16}
Jason Hartline and Samuel Taggart.
\newblock Sample complexity for non-truthful mechanisms.
\newblock In \emph{Proceedings of the 20th ACM Conference on Economics and
  Computation}, 2019.

\bibitem[Hartline et~al.(2008)Hartline, Mirrokni, and Sundararajan]{HMS08}
Jason Hartline, Vahab Mirrokni, and Mukund Sundararajan.
\newblock Optimal marketing strategies over social networks.
\newblock In \emph{Proceedings of the 17th International Conference on World
  Wide Web}, pages 189--198. ACM, 2008.

\bibitem[Huang et~al.(2018{\natexlab{a}})Huang, Liu, and Wang]{HLW18}
Zhiyi Huang, Jinyan Liu, and Xiangning Wang.
\newblock Learning optimal reserve price against non-myopic bidders.
\newblock In \emph{Advances in Neural Information Processing Systems}, pages
  2038--2048, 2018{\natexlab{a}}.

\bibitem[Huang et~al.(2018{\natexlab{b}})Huang, Mansour, and
  Roughgarden]{HuangMR/2018/SICOMP}
Zhiyi Huang, Yishay Mansour, and Tim Roughgarden.
\newblock Making the most of your samples.
\newblock \emph{SIAM Journal on Computing}, 47\penalty0 (3):\penalty0 651--674,
  2018{\natexlab{b}}.

\bibitem[Kakade et~al.(2013)Kakade, Lobel, and Nazerzadeh]{KLN13}
Sham Kakade, Ilan Lobel, and Hamid Nazerzadeh.
\newblock Optimal dynamic mechanism design and the virtual-pivot mechanism.
\newblock \emph{Operations Research}, 64\penalty0 (4):\penalty0 837--854, 2013.

\bibitem[Kamenica and Gentzkow(2011)]{KG11}
Emir Kamenica and Matthew Gentzkow.
\newblock Bayesian persuasion.
\newblock \emph{American Economic Review}, 101\penalty0 (6):\penalty0
  2590--2615, 2011.

\bibitem[Mansour et~al.(2015)Mansour, Slivkins, and Syrgkanis]{MSS15}
Yishay Mansour, Aleksandrs Slivkins, and Vasilis Syrgkanis.
\newblock Bayesian incentive-compatible bandit exploration.
\newblock In \emph{Proceedings of the 16th ACM Conference on Economics and
  Computation}, pages 565--582, 2015.

\bibitem[Mansour et~al.(2016)Mansour, Slivkins, Syrgkanis, and Wu]{MSSW16}
Yishay Mansour, Aleksandrs Slivkins, Vasilis Syrgkanis, and Zhiwei~Steven Wu.
\newblock Bayesian exploration: Incentivizing exploration in bayesian games.
\newblock In \emph{Proceedings of the 17th ACM Conference on Economics and
  Computation}, pages 661--661, 2016.

\bibitem[Medina and Mohri(2014)]{MM14}
Andres~M Medina and Mehryar Mohri.
\newblock Learning theory and algorithms for revenue optimization in second
  price auctions with reserve.
\newblock In \emph{Proceedings of the 31st International Conference on Machine
  Learning}, pages 262--270, 2014.

\bibitem[Morgenstern and Roughgarden(2016)]{MR16}
Jamie Morgenstern and Tim Roughgarden.
\newblock Learning simple auctions.
\newblock In \emph{Conference on Learning Theory}, pages 1298--1318, 2016.

\bibitem[Morgenstern and Roughgarden(2015)]{MR14}
Jamie~H Morgenstern and Tim Roughgarden.
\newblock On the pseudo-dimension of nearly optimal auctions.
\newblock In \emph{Advances in Neural Information Processing Systems}, pages
  136--144, 2015.

\bibitem[Paes~Leme et~al.(2016)Paes~Leme, Pal, and Vassilvitskii]{PPV16}
Renato Paes~Leme, Martin Pal, and Sergei Vassilvitskii.
\newblock A field guide to personalized reserve prices.
\newblock In \emph{Proceedings of the 25th International Conference on World
  Wide Web}, pages 1093--1102, 2016.

\bibitem[Pavan et~al.(2014)Pavan, Segal, and Toikka]{PST14}
Alessandro Pavan, Ilya Segal, and Juuso Toikka.
\newblock Dynamic mechanism design: {A} {M}yersonian approach.
\newblock \emph{Econometrica}, 82\penalty0 (2):\penalty0 601--653, 2014.

\bibitem[Roughgarden and Schrijvers(2016)]{RS16}
Tim Roughgarden and Okke Schrijvers.
\newblock Ironing in the dark.
\newblock In \emph{Proceedings of the 17th ACM Conference on Economics and
  Computation}, pages 1--18. ACM, 2016.

\bibitem[Spence(1973)]{Spe73}
Michael Spence.
\newblock Job market signaling.
\newblock \emph{The Quarterly Journal of Economics}, 87\penalty0 (3):\penalty0
  355--0374, 1973.

\end{thebibliography}

\appendix

\section{Further Examples}
\label{sec:example}


\subsection{Benefit of Segmentation}
We now reproduce an example from \cite{bergemann2015limits} 
that shows how a segmentation can eliminate deadweight loss.

\begin{Example}
The value set $\valueset = \{1,2,3\}$ is identical to the type set $\typeset$, and each distribution $\valuedist(\type)$ is a pointmass at $t$. The prior distribution over values/types is uniform. The monopoly reserve of the uniform prior distribution is  $2$.  When the seller does not segment the market and posts the monopoly reserve price, revenue is $4/3$, consumer surplus is $1/3$, and the deadweight loss is $1/3$. 
\end{Example}

Consider instead the following segmentation with $\segmentset = \{\sigma_1,\sigma_2,\sigma_3\}$ where $\segmentmap(1) = (1,0,0)$; $\segmentmap(2) = (1/3,1/6,1/2)$; and $\segmentmap(3) = (2/3,1/3,0)$.  Recall that $\segmentmap(t)$ is the distribution of signals sent by the intermediary upon observing type $t$.  This signaling scheme generates three market segments, one corresponding to each of $\sigma_1,\sigma_2,\sigma_3$. The seller can compute the conditional distribution of values within each segment, and will post the monopoly price for that distribution. Within segment $\sigma_1$, the distribution of values is: $x_1$ with probability 1/2, $x_2$ with probability 1/6, and $x_3$ with probability 1/3. This happens to be the equal revenue distribution on values $\{1,2,3\}$.  Within segment $\sigma_2$, only buyers of types 2 or 3 will be present, since a buyer of type $1$ will never be mapped into this segment. The conditional distribution of values in this segment is: $x_2$ with probability 1/3 and $x_3$ with probability 2/3, which also happens to be the equal revenue distribution on values $\{2,3\}$. Type $2$ is the only type with positive probability of being mapped to segment $\sigma_3$, so the value distribution in segment $\sigma_3$ is a point mass on value 2.

Since each market segment has an equal revenue distribution of values, the seller can maximize his profit by posting any price in the support of that distribution. For the sake of this example, we will assume the seller breaks ties by posting the lowest optimal price.\footnote{ This can be strictly enforced using arbitrarily small perturbations.} In market segment $\sigma_1$, the seller will post price $p=1$, which will generate revenue of $1$ and consumer surplus of $(1/6)\times 1 + (1/3)\times 2 = 5/6$. In $\sigma_2$, the seller will post price $p=2$, which will generate revenue of $2$ and consumer surplus of $(2/3)\times 1 = 2/3$.  Finally, in market segment $\sigma_3$, the seller will post price $p=2$, which will generate revenue of $2$ and consumer surplus of $0$.  

Given the prior distribution of values, one can also compute the probability of the intermediary generating each market segment.  In this case, segment $\sigma_1$ is drawn with probability $2/3$, segment $\sigma_2$ is drawn with probability $1/6$, and segment $\sigma_3$ is drawn with probability $1/6$. Combining these, we see that the expected revenue of this segmentation is $(2/3)\times 1 + (1/6)\times 2 + (1/6)\times 2 = 4/3$, and the consumer surplus is $(2/3)\times (5/6) + (1/6)\times (2/3) + (1/6)\times 0 = 2/3$. This segmentation has zero deadweight loss, which means that full economic efficiency is achieved.


\subsection{Noisy Types}
\label{sec:noisy-types}

Now consider the above example with noise.
This example has been briefly discussed in Section~\ref{sec:bayesian}.
This subsection includes more details.

\begin{Example}
\label{example:noisy-types}
Given a noise level of $1-z$, 
each type $\type \in \typeset$ corresponds to the distribution 
$\valuedist(\type)$ given by 
$$
\Pr_{\val \sim \valuedist(t)} [\val] = \twopartdef{ z }{\text{ if } \val = \type,  } { \tfrac{1-z}{2}}{\text{otherwise.}}
$$
\end{Example}

Note that when $z=1/3$, all $\valuedist(t)$ equal the uniform prior, and no further (non-trivial) segmentation is possible. At the other extreme, when $z=1$ each $\valuedist(t)$ is a pointmass at $t$, which corresponds to Example 1. For $z=0.49$, it turns out that market segmentation cannot help at all. 
Any segmentation will result in the seller always setting the monopoly reserve price of $2$, and the result is the same as no segmentation! As in Example 1, revenue from no segmentation is $4/3$, consumer surplus is $1/3$, and the deadweight loss is $1/3$. 

On the other hand, when $z=0.8$ segmentation is possible, but it is no longer possible to achieve full economic surplus due to the noisy types.  If the intermediary implemented the same segment map as in Example 1, the resulting conditional value distributions in each segment would no longer be equal revenue distributions because types are no longer perfectly correlated with values.  If the intermediary perfectly segments the market by types using the deterministic segment map $\segmentmap(t)=\sigma_t$, this will result in revenue $2-\tfrac{4(1-z)}{3}=1.7333$, consumer surplus $\tfrac{2(1-z)}{3}=0.1333$, and deadweight loss $\tfrac{2(1-z)}{3}=0.1333$.  Note that under the noiseless setting of Example 1, this segmentation would have allowed the seller to perfectly price discriminate, resulting in revenue $2$, consumer surplus $0$, and deadweight loss $0$.  These changes in economic outcome are a direct result of the fact that types are only a noisy signal of the buyer's value.  For example in market segment $\sigma_3$, the seller will still set monopoly price $p=3$, but only a $(1-z)$-fraction of the segment will have value $3$ and purchase the item.

\subsection{Impossibility of \cite{bergemann2015limits} Style Characterization}
Recall the simplex view and Figure~\ref{fig:simplex}.
In the setting of \cite{bergemann2015limits}, 
the segmentation only consists of points in $X_{p^*}$ 
where $p^*$ is the monopoly price of the prior distribution 
(which is the blue region $X_2$ in Figure \ref{fig:simplex1}). 
As mentioned earlier, \cite{bergemann2015limits} assume that ties for monopoly price are broken in favor of the lowest price.
This implies that even though the market is now segmented, 
the price in each segment is at most $p^*$.
In other words, segmentation can only lower prices!  

One could hope for a similar characterization even in the noisy signal case: 
after all we can still write $\centroid$ as a convex combination of 
points in the intersection of $X_{p^*}$ (the blue region) and 
$\simplex(\typeset)$ (the convex hull of the type distributions, depicted as the triangles in the interior of the simplex in Figure \ref{fig:simplex1}),
as can be seen in Figure \ref{fig:simplex2}. 
Unfortunately we show that this is not without loss of generality
in the following example. 
In particular, the example shows that restricting to such segments 
may lead to a strictly smaller social welfare than otherwise.

\begin{figure*}
	\centering
	\begin{subfigure} {.4\textwidth}
		\includegraphics[width=\textwidth]{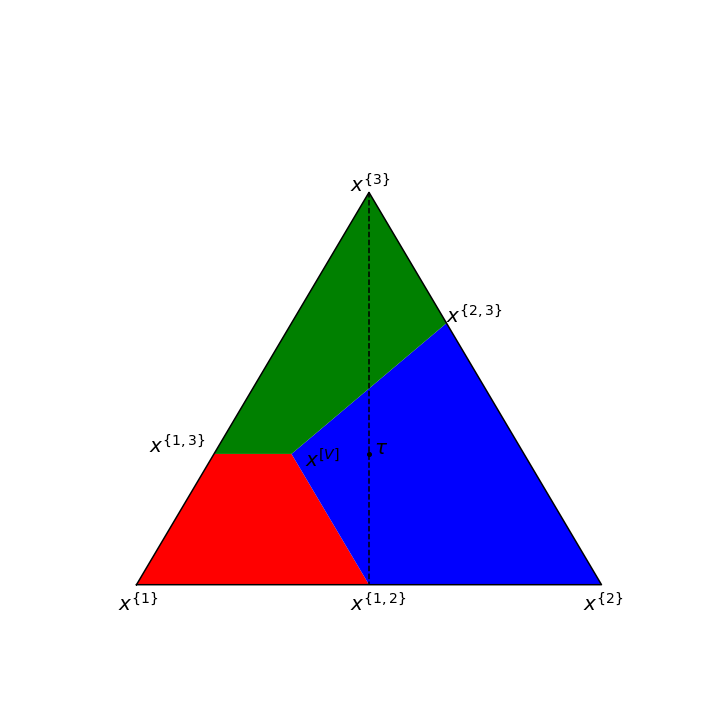}
		\caption{The two types correspond to the points $x^{\{1,2\}}$ and $x^{\{3\}}$. 
			The line joining them is the convex hull $\simplex(\typeset)$ from which we can pick our segments. }
		\label{fig:simplex3} 
	\end{subfigure}
	\begin{subfigure}{.47\textwidth}
		\centering
		\includegraphics[width=\textwidth]{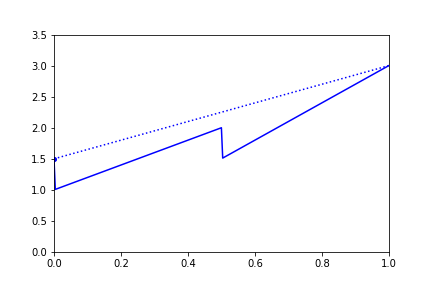}
		\caption{The solid line represents the social welfare as you move along the line from $x^{\{1,2\}}$ to $x^{\{3\}}$. The dotted line corresponds to the social welfare of the corresponding convex combination of the two end points}
		\label{fig:simplex4} 
	\end{subfigure}	
	\caption{Example in Appendix A.3}
\end{figure*}

\begin{Example}
In this example, there are two types,  
corresponding to the points $x^{\{1,2\}}$ and $x^{\{3\}}$,
i.e., type 1 is the uniform (equal revenue) distribution over ${\{1,2\}}$, and 
type 2 is the point mass on 3. 
The distribution over types is still uniform, 
which once again gives the same prior distribution $\centroid$ as before. 
\end{Example}

The difference now is that we can only pick points from the line joining these two points, as depicted in Figure \ref{fig:simplex3}. 
Figure \ref{fig:simplex4} shows the social welfare obtained 
as we move along this line. 
The point mass on the left corresponds to  $x^{\{1,2\}}$, 
for which we assume that the seller picks a price of 1. 
The left segment corresponds to the blue region $X_2$, 
and the right to the green region $X_3$.
Using only the blue region means using either 
the point mass on the left or 
the left segment. 
The dotted line in Figure \ref{fig:simplex4} shows the social welfare obtained from 
taking a convex combination of the end points. This corresponds to the segmentation where the intermediary 
simply reveals the type that he observes. 
As can be seen from the figure, this is strictly better than restricting to points in $X_2$.

Nonetheless, we can add this as an additional constraint
if so desired (at some loss in the objective). 
Our algorithm extends to handle this easily: 
just skip iterating over $X_p$ for prices $p > p^*$.

\subsection{Unbounded Sample Complexity in the General Case} 
So far we have assumed that the prior distribution is given to us as input 
 and is common knowledge to all players. 
How does this happen? 
What if you only have samples from the distribution? 
How many samples do you need in order to get within 
 an $\epsilon$ of the optimum? 
These questions have been studied under `sample complexity of auction design' 
 quite intensively in the last few years. 
(See Section \ref{sec:related} for  details on this line of work.)
Only recently, the optimal sample complexity of single item auctions 
 has been resolved \cite{GHZ19}. 
 In this paper, we consider the sample complexity of market segmentation.

Unlike auction design, here the seller sets the price
 to maximize revenue but 
 the intermediary's objective may be something different, 
 such as consumer surplus. 
For an equal revenue distribution, statistically
 the samples will indicate that the high price is revenue-optimal 
 with a significant probability. 
This is still true even for distributions that are ``close to'' equal revenue where the low price is strictly revenue-optimal. Higher prices correspond to lower social welfare, since the buyer is less likely to purchase the good;
thus the segmentation based on the samples could have 
 a much smaller consumer surplus compared to 
 the optimal segmentation for the distribution. 
This is particularly problematic because as saw earlier, 
 the optimal segmentation only picks segments with equal revenue distributions. 
(Recall that the vertices of the colored regions correspond to 
 equal revenue distributions.)
We demonstrate this via the following example. 

\begin{Example}
Consider the distribution on $V = \{ 1,2\}$ where 
 the probability of seeing $1$ is $0.5 + \delta,$
 for some very small $\delta>0$.
We will make the segmentation problem trivial: 
 there is only one type, i.e., the intermediary observes no signal. 
The monopoly price is 1, and consumer surplus is therefore $0.5 - \delta$; 
 this is trivially the optimal consumer surplus 
 we can obtain via segmentation.  
\end{Example}

When drawing multiple samples from this distribution, 
 there is a constant probability of seeing see more 2s than 1s. 
In this case, the seller sets a monopoly price of 2, 
 leading to a consumer surplus of 0. 
For any bounded function $f$ of $\epsilon$, 
 we can set $\delta$ small enough such that 
 with $f(\epsilon)$ samples, 
 this happens with probability $> \epsilon$; 
 we cannot therefore hope to get to within $\epsilon$ of the optimum. 
This example shows that we need a stronger assumption on the input distributions, 
 as compared to those for single item auctions. 
The standard assumptions there are regularity and boundedness, 
 both of which are satisfied by the distribution in the example above.

In the above example, we assumed that 
 the seller sets the monopoly price on the empirical distribution 
 on the samples. 
This is not necessarily an accurate assumption. 
If the seller follows the literature on the 
 sample complexity of single item auctions, 
 he should consider a robust or a regularized version of the empirical distribution. 
The seller might also have some additional sources of information 
 that allow him get an even more accurate estimate. 
We make a mild assumption on the seller behavior: 
 that his beliefs are at least as accurate as 
 the intermediary's estimate from the samples.
For a formal definition and more detailed discussion, 
 see Section \ref{sec:sample}.

 
Given the discussion so far, it may be tempting to assume 
 that the type distributions are far from the boundaries of $\area_\val$s. 
This is too strong when you consider larger value ranges: 
 two prices, say $p$ and $p+1$, may have almost identical quantiles and, thus, revenues 
 (which means the distribution close to the boundary 
 between $\area_p$ and $\area_{p+1}$), 
 but this is not a problem if they both give similar consumer surpluses. 
We make a milder assumption, which is  
 a discrete version of the monotone hazard rate (MHR) assumption.\footnote{
	MHR distributions have non-decreasing hazard rate $\tfrac{f(x)}{1-F(x)}$, where $f$ and $F$ are the pdf and the cdf of the distribution respectively.}
Here, the naive way to generalize MHR to discrete distributions is
 to use the functional form. 
This doesn't seem to work; 
we instead require the distribution to have some of the properties
 that continuous MHR distributions are known to satisfy. 
Once again, a detailed discussion on this with formal statements 
 are in Section \ref{sec:sample}. 

\section{Missing Proofs in Section~\ref{sec:bayesian}}
\label{app:bayesian-proof}

\noindent\textbf{Proof of Lemma~\ref{thm:simplex-view}}
{\em From Segmentations to Distributions.}
Given any segmentation $(\segmentset, \segmentmap)$, consider the  set $\area$ that is the union of the following points.
For any segment $\segment \in \segmentset$, let there be a point $\vec{\point}(\segment) \in \simplex(\typeset)$ such that:
\[
\point_\type(\segment) = \frac{\Pr_{\typedist}[t] \cdot \Pr[\segmentmap(t) = \segment]}{\Pr_{\segmentdist}[\segment]}
~.
\]
The probability distribution over $X$  is defined as 
$
\measure(\vec{\point}(\segment)):= \Pr_{\segmentdist}[\segment]
$.
It is easy to verify that this is indeed a probability distribution and that it satisfies the expectation constraint \eqref{eqn:centroid_constraint}. 

{\em From Distributions to Segmentations.}
Given any $(\area,\measure)$ that satisfies (\ref{eqn:centroid_constraint}), 
consider the following segmentation $(\segmentset, \segmentmap)$.
Let $\segmentset $ be such that there is some bijection between $\segmentset$ and $\area$, where $\segment \in \segmentset$ is 
uniquely mapped to $\vec{\point}(\segment) \in \area$.
For any type $\type$, let $\segmentmap(\type)$ follow a distribution given by 
\[
\Pr[\segmentmap(\type) = \segment ] = \frac{\point_\type (\segment) \measure(\vec{\point})}{\Pr_{\typedist}[\type]}
~.
\]
The pair $(\area, \measure)$ satisfies \eqref{eqn:centroid_constraint}  
 implies that $\segmentmap(\type)$ are valid probability distributions: 
\[
\forall~ \type \in \typeset, 
\sum_{\segment \in \segmentset} \Pr[\segmentmap(\type) = \segment] = \sum_{\segment \in \segmentset}\frac{ \point_\type (\segment) \measure(\vec{\point})}{\Pr_{\typedist}[\type]} = 1
~.
\]

\noindent\textbf{Proof of Lemma~\ref{lem:bayesian-finite-support}}
    Consider any optimal segmentation with an arbitrary, or even unbounded number of segments.
    We show how to transform it into a segmentation with the same objective and yet having at most one segment within each region $\area_p$, replacing the distribution conditioned on $\vec{x} \in \area_p$ by its expectation.

    Consider a specific area $\area_p$.
    Let $\mu_p = \Pr_\mu \big[ \vec{\point} \in \area_p \big]$ be the probability of realizing a segment in $\area_p$.
    Let $\vec{\point}_p = \E_\mu \big[ \vec{x} \,|\, \vec{x} \in \area_p \big]$.
    By the convexity of $\area_p$, $\vec{\point}_p$ is also in $\area_p$.
     
    Further, recall that when we restrict our domain to points $\vec{\point} \in \area_p$, we have that the revenue and consumer surplus are both linear function because the seller chooses a fix price $p$ within $\area_p$.
    Therefore, removing all segments in $\area_p$ and adding a new segment at $\vec{\point}_p$ with probability mass $\mu_x$ do not change the objective.
    Repeating this process for all areas proves the lemma.

\noindent\textbf{Proof of Theorem~\ref{thm:bayesian}}
    Consider the following more direct mathematical program that uses $\vec{\point}_p$'s and $\mu(\vec{\point}_p)$'s as the variables:
    \begin{align}
        \max & \sum_{p \in \valueset} \mu(\vec{\point}_p) \bigg( \objweight  \Rev\big(\valuedist(\vec{\point}_p)\big) +  (1-\objweight) \CS\big(\valuedist(\vec{\point}_p)\big) \bigg) \label{eq:LP-in-x} \\
\text{s.t.} \quad &\forall ~ p \in \valueset, \vec{\point}_p \in \area_p 
\text{ and }
\sum_{p \in \valueset} \mu(\vec{\point}_p) \vec{\point}_p = \centroid 
~. \nonumber
    \end{align}

    We need to show that (i) there is a one-to-one mapping between the variable space of program~\eqref{eq:LP-in-x} and that of LP~\eqref{eq:LP}, and (ii) under this mapping, the above program becomes LP~\eqref{eq:LP}.

    The mapping from $\vec{\point}_p$ and $\mu(\vec{\point}_p)$ to $\vec{\pointbelow}_p$ is already given, i.e., $\vec{\pointbelow}_p = \mu(\vec{\point}_p) \vec{\point}_p$.
    The other direction goes as follows.
    Given any $\vec{\pointbelow}_p$, let $\mu(\vec{x}_p) = \| \vec{\pointbelow}_p \|_1$ and $\vec{x}_p = \vec{z}_p / \| \vec{\pointbelow}_p \|_1$.
    Here, we use the fact that $\vec{x}_p$ lies on the probability simplex.

    Under this mapping, the objective of \eqref{eq:LP-in-x} becomes that of LP \eqref{eq:LP} due to linearity of $\Rev \big( \valuedist ( \cdot ) \big)$ and $\CS \big( \valuedist ( \cdot ) \big)$ for any fixed $p$.
    Finally, the equivalence of the constraints follow by the definition of the mapping.

\section{Missing Proofs Related to Robustification}
\label{app:robustify}

\subsection{Proof of Lemma~\ref{lem:robust-segmentation-performance-main}}
\label{app:sampleproof}


\begin{proof}
    We partition $\segmentset^*$ into $\segmentset^*_s$ and $\segmentset^*_i$, the former consists of the significant segments and the latter the insignificant ones, in the sense defined in Definition~\ref{def:robust-segmentation}.
    Similarly define $\segmentset^r_s$ and $\segmentset^r_i$.

    \medskip

    \emph{Welfare:~}
    The key lies in comparing the contributions from the significant segments, i.e., for any $\segment \in \segmentset^*_s$, we will show that:
    \begin{equation}
        \label{eqn:robust-segmentation-significant-segments-welfare}
        \SW \big( p^r_\segment, \valuedist(\vec{\point}^r_\segment) \big) \ge \SW \big( p^*_\segment, \valuedist(\vec{\point}^*_\segment) \big) - O\big(\epsilon_I\big)
        ~,
    \end{equation}

    It follows from the following inequalities:
    \begin{align*}
        &\quad\SW \big( p^r_\segment, \valuedist(\vec{\point}^r_\segment) \big)         
        = \int_{p^r_\segment}^1 v ~ d \valuedist(\vec{\point}^r_\segment) \\
        &
        \ge \int_{p^*_\segment}^1 v ~ d \valuedist(\vec{\point}^r_\segment) - \epsilon_I
        && \textrm{((3) of Definition~\ref{def:robust-segmentation})} \\
        &
        \ge \int_{p^*_\segment}^1 v ~ d \valuedist(\vec{\point}^*_\segment) - O\big(\epsilon_I\big)
        && \textrm{((2) of Definition~\ref{def:robust-segmentation})} \\[1ex]
        &
        = \SW \big( p^*_\segment, \valuedist(\vec{\point}^*_\segment) \big) - O\big(\epsilon_I\big)
        ~.
    \end{align*}

    Given Eqn.~\eqref{eqn:robust-segmentation-significant-segments-welfare}, the remaining the loss due to insignificant segments and the decrease in weights can be bounded as follows:
    \begin{align*}
        &\quad\sum_{\segment \in \segmentset^r} w^r_\segment \cdot \SW \big( p^r_\segment, \valuedist(\vec{\point}^r_\segment) \big)\\
        &\ge \sum_{\segment \in \segmentset^*_s} w^r_\segment \cdot \SW \big( p^r_\segment, \valuedist(\vec{\point}^r_\segment) \big) 
        && (\segmentset^*_s\subseteq \segmentset^* \subseteq \segmentset^r) \\
        & 
        \ge (1 - \epsilon_I) \sum_{\segment \in \segmentset^*_s} w^*_\segment \cdot \SW \big( p^r_\segment, \valuedist(\vec{\point}^r_\segment) \big) 
        && \textrm{((1) of Definition~\ref{def:robust-segmentation})} \\
        & 
        \ge \sum_{\segment \in \segmentset^*_s} w^*_\segment \cdot \SW \big( p^r_\segment, \valuedist(\vec{\point}^r_\segment) \big) - \epsilon_I
        && \textrm{(bounded values)} \\
        &
        \ge \sum_{\segment \in \segmentset^*_s} w^*_\segment \cdot \SW \big( p^*_\segment, \valuedist(\vec{\point}^*_\segment) \big) - O(\epsilon_I)
        && \textrm{(Eqn.~\eqref{eqn:robust-segmentation-significant-segments-welfare})} \\
        &
        \ge \sum_{\segment \in \segmentset^*} w^*_\segment \cdot \SW \big( p^*_\segment, \valuedist(\vec{\point}^*_\segment) \big) - O(\epsilon_I)
        ~.
        && \textrm{(insignificant segments)}
    \end{align*}

    \medskip

    \emph{Revenue:~}
    Similarly, the key is to compare the contribution from significant segments.
    For any $\segment \in \segmentset^*_s$, we have:
    \begin{align*}
        &\quad \Rev \big( p^r_\segment, \valuedist(\vec{\point}^r_\segment) \big)\\
        &\ge \Rev \big( p^*_\segment, \valuedist(\vec{\point}^r_\segment) \big) - \epsilon_S
        && \textrm{($\epsilon_S$-optimality of $p^r_\segment$)} \\
    & \ge \Rev \big( p^*_\segment, \valuedist(\vec{\point}^*_\segment) \big) - O \big(\epsilon_I\big)
        ~,
        && (\textrm{(2) of Definition~\ref{def:robust-segmentation}}; \epsilon_S\le \epsilon_I) 
    \end{align*}
    then follow almost a verbatim to the welfare counterpart, we can obtain:
    \[
        \sum_{\segment \in \segmentset^r} w^r_\segment \cdot \Rev \big( p^r_\segment, \valuedist(\vec{\point}^r_\segment) \big) 
        \ge
        \sum_{\segment \in \segmentset^*} w^*_\segment \cdot \Rev \big( p^*_\segment, \valuedist(\vec{\point}^*_\segment) \big) - O(\epsilon_I) ~.
    \]
    
    Next, we prove:
    \begin{align*}
        &\sum_{\segment \in \segmentset^*} w^*_\segment \cdot \Rev \big( p^*_\segment, \valuedist(\vec{\point}^*_\segment) \big) 
        \ge
        \sum_{\segment \in \segmentset^r} w^r_\segment \cdot \Rev \big( p^r_\segment, \valuedist(\vec{\point}^r_\segment) \big) - O(\epsilon_I) ~.
    \end{align*}
    
    Note that we also have:
    \begin{align*}
        &\quad \Rev \big( p^*_\segment, \valuedist(\vec{\point}^*_\segment) \big)\\
        &
        \ge \Rev \big( p^r_\segment, \valuedist(\vec{\point}^*_\segment) \big)
        && \textrm{(optimality of $p^*_\segment$)} \\
    & \ge \Rev \big( p^r_\segment, \valuedist(\vec{\point}^r_\segment) \big) - O \big(\epsilon_I\big)
        ~.
        && \textrm{((2) of Definition~\ref{def:robust-segmentation}})
    \end{align*}

    By our construction of $w^r_\segment$, we have $w^*_\segment\ge w^r_\segment$ for $\segment \in \segmentset^*$. Therefore,
    \begin{equation}
    \label{eqn:sampleproof-revenue-eq1}
    \begin{aligned}
        &\sum_{\segment \in \segmentset^*_s} w^*_\segment \cdot \Rev \big( p^*_\segment, \valuedist(\vec{\point}^*_\segment) \big) 
        \ge
         \sum_{\segment \in \segmentset^*_s} w^r_\segment \cdot \Rev \big( p^r_\segment, \valuedist(\vec{\point}^r_\segment) \big) - O(\epsilon_I) ~.
    \end{aligned}
    \end{equation}

    For insignificant segments, we have $\segment\in \segmentset^*_i$ and $\vec{\point}^*_\segment=\vec{\point}^r_\segment$. Then for all $p$, by the definition of insignificant segments:
    \begin{align*}
        \Rev \big( p , \valuedist(\vec{\point}^*_\segment) \big)=\Rev \big( p , \valuedist(\vec{\point}^r_\segment) \big)  
        &
        \le \E_{v \sim \valuedist(\vec{\point}^*_\segment)} [v] \\
        & \le \epsilon_I
        ~.
    \end{align*}
    
    Then, taking summation over $\segmentset^*_i$, we get that: 
    \begin{equation}
        \label{eqn:sampleproof-revenue-eq2}
    \begin{aligned}
        &\bigg|
        \sum_{\segment \in \segmentset^*_i} w^*_\segment \cdot \Rev \big( p^*_\segment, \valuedist(\vec{\point}^*_\segment) \big) 
        - 
         \sum_{\segment \in \segmentset^*_i} w^r_\segment \cdot \Rev \big( p^r_\segment, \valuedist(\vec{\point}^r_\segment) \big) 
        \bigg|
        = O(\epsilon_I) ~.
    \end{aligned}
    \end{equation}

    Combine Equations (\ref{eqn:sampleproof-revenue-eq1}) and (\ref{eqn:sampleproof-revenue-eq2}) we can obtain:
    \begin{align*}
        \sum_{\segment \in \segmentset^*} w^*_\segment \cdot \Rev \big( p^*_\segment, \valuedist(\vec{\point}^*_\segment) \big) 
        \ge
        \sum_{\segment \in \segmentset^*} w^r_\segment \cdot \Rev \big( p^r_\segment, \valuedist(\vec{\point}^r_\segment) \big)
        - O(\epsilon_I)~.
    \end{align*}

    Therefore it remains to show that
    \begin{equation}
    \label{eqn:sampleproof-revenue-eq3}
        \sum_{\segment \in \segmentset^r\setminus \segmentset^*}w^r_\segment \cdot \Rev\big( p^r_\segment, \valuedist(\vec{\point}^r_\segment) \big)  = O(\epsilon_I) ~.
    \end{equation} 
    
    In fact we have:
    \begin{align*}
        1   &= \sum_{\segment \in \segmentset^r}w^r_\segment = \sum_{\segment \in \segmentset^*}w^r_\segment+\sum_{\segment \in \segmentset^r\setminus \segmentset^*}w^r_\segment\\
            &\ge (1-\epsilon_I)\sum_{\segment \in \segmentset^*}w^*_\segment+\sum_{\segment \in \segmentset^r\setminus \segmentset^*}w^r_\segment
            \\
            &= (1-\epsilon_I) + \sum_{\segment \in \segmentset^r\setminus \segmentset^*}w^r_\segment ~.
    \end{align*}
    
    The second line is because $w^r_\segment\ge (1-\epsilon_I)w^*_\segment, \textrm{for }\segment\in \segmentset \subseteq \segmentset^*$.
    By boundedness of values, i.e., $v\le 1$, we get Eqn.~\eqref{eqn:sampleproof-revenue-eq3}.
\end{proof}

\subsection{Proof of Lemma~\ref{lem:robustify}}
\label{app:robustify-algorithm}
We formally define how the algorithm robustify an entire segmentation in Algorithm~\ref{alg:robustify-segmentation}, building on the subroutine that robustify a single (significant) segment introduced in Algorithm~\ref{alg:robustify-single-segment}.
The feasibility of the first step of Algorithm~\ref{alg:robustify-single-segment} is by Lemma~\ref{lem:robustify-find-type-main}, whose proof is deferred to Appendix~\ref{sec:robustify-find-type}.
\begin{algorithm*}[t]
    \caption{Robustify a segmentation}
    \label{alg:robustify-segmentation}
    \begin{flushleft}
        \begin{tabular}{ll}
            \textbf{Input:} & MHR-like distributions $\valuedist(\type)$, $\type \in \typeset$; \\
            & segmentation $(\segmentset^*, \segmentmap^*)$, represented by point-weight pairs $(\vec{\point}^*_\segment, w^*_\segment)$, $\segment \in \segmentset^*$; \\
            & prices $p^*_\segment$, $\segment \in \segmentset^*$, which are optimal w.r.t.\ the corresponding $\valuedist(\vec{\point}^*_\segment)$'s. \\[1ex]
            \textbf{Output:} & Robustified segmentation $(\segmentset^r, \segmentmap^r)$, represented by $\vec{\point}^r_\segment$'s and weights $w^r_\segment$'s.
        \end{tabular}
    \end{flushleft}
    \begin{enumerate}
        \item For every segment $\segment \in \segmentset^*$:
        \begin{enumerate}
            \item If segment $\segment$ is insignificant, i.e., $\E_{v \sim \valuedist(\vec{\point}^*_\segment)} [v] < \epsilon_I$, let $\vec{\point}^r_\segment = \vec{\point}^*_\segment$.
            \item Otherwise, construct $\vec{x}^r_\segment$ using Algorithm~\ref{alg:robustify-single-segment};
            \item Decrease the weight by a $1 - \epsilon_I$ multiplicative factor, i.e., $w^r_\segment = (1-\epsilon_I) w^*_\segment$ in both cases.
        \end{enumerate}
        \item For every type $\type \in \typeset$:
        \begin{enumerate} 
            \item Add a new segment $\segment(\type) \notin \segmentset^*$ to $\segmentset^r$ such that $\vec{\point}^r_{\segment(\type)}$ is the vertex of simplex $\simplex$ that corresponds to type $\type$.
            \item Let its weight be $w^r_{\segment(\type)} = \frac{1}{\typenum} - \sum_{\segment \in \segmentset^*} w^r_\segment \cdot x^r_{\segment, \type}$.
        \end{enumerate}
    \end{enumerate}
\end{algorithm*}

\begin{algorithm*}[t]
    \caption{Robustify a (significant) segment}
    \label{alg:robustify-single-segment}
    \begin{flushleft}
        \begin{tabular}{ll}
            \textbf{Input:} & MHR-like distributions $\valuedist(\type)$, $\type \in \typeset$; \\
            & segment $\segment$, represented by $\vec{\point}^*_\segment$, such that $\E_{v \sim \valuedist(\vec{\point}^*_\segment)} [ v ] \ge \epsilon_I$; \\
            & price $p^*$, which is optimal w.r.t.\ $\valuedist(\vec{\point}^*_\segment)$. \\[1ex]
            \textbf{Output:} & Robustified version of $\segment$, represented by $\vec{\point}^r_\segment$.
        \end{tabular}
    \end{flushleft}
    \begin{enumerate}
        \item Find type $\type_\segment$ such that for any price $p$ whose quantile w.r.t.\ $\valuedist(\vec{\point}^*_\segment)$ is at most that of $p^*$ less $\epsilon_I$, we have $\Rev \big( p, \valuedist(\type_\segment) \big) < \Rev \big( p^*, \valuedist(\type_\segment) \big) - \frac{\epsilon_S\typenum}{\epsilon_I}$.
        \item Construct the robustified version as $\vec{\point}^r_\segment = \big(1-\tfrac{\epsilon_I}{\typenum}\big) \cdot \vec{\point}^*_\segment + \tfrac{\epsilon_I}{\typenum} \cdot \vec{\point}_{t_\segment}$, where $\vec{\point}_{t_\segment} \in \simplex$ has its $\type_\segment$-th entry being one and all others being zeros.
    \end{enumerate}
\end{algorithm*}


Before proving Lemma~\ref{lem:robustify} we show the following.

\begin{lemma}
    \label{lem:robustify-single-segment}
    The robustified version of $\segment$ returned by Algorithm~\ref{alg:robustify-single-segment}, represented by $\vec{\point}^r_\segment$, satisfies the last two properties stated in Definition~\ref{def:robust-segmentation}, i.e.: 
    \begin{enumerate}
        \setcounter{enumi}{1}
        \item $\| \vec{\point}^*_\sigma - \vec{\point}^r_\sigma \|_1 \le \epsilon_I$.
        \item All $\epsilon_S$-optimal prices w.r.t.\ $\valuedist(\vec{\point}^r_\segment)$ have quantiles at least that of $p^*$ less $\epsilon_I$.
    \end{enumerate}
\end{lemma}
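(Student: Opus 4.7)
The plan is to verify the two properties in turn, with the robustness property reducing to Lemma~\ref{lem:robustify-find-type-main} by a direct convex-combination computation.

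\textbf{Property (2), mixture preservation.} This will be immediate from the construction. Since
\[
\vec{\point}^r_\segment - \vec{\point}^*_\segment = \tfrac{\epsilon_I}{\typenum}\big(\vec{\point}_{\type_\segment} - \vec{\point}^*_\segment\big)
\]
and both endpoints lie on $\simplex$, we have $\|\vec{\point}^r_\segment - \vec{\point}^*_\segment\|_1 \le \tfrac{2\epsilon_I}{\typenum} \le \epsilon_I$, since the case $\typenum=1$ is trivial. So the factor-of-two slack is absorbed without issue.

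\textbf{Property (3), robustness.} The plan is to show that for any candidate price $p$ whose quantile w.r.t.\ $\valuedist(\vec{\point}^r_\segment)$ is smaller than that of $p^*$ by at least $\epsilon_I$, the revenue gap $\Rev(p^*,\valuedist(\vec{\point}^r_\segment)) - \Rev(p,\valuedist(\vec{\point}^r_\segment))$ strictly exceeds $\epsilon_S$, which rules out $p$ being $\epsilon_S$-optimal. To apply Lemma~\ref{lem:robustify-find-type-main}, I first translate the quantile gap to one w.r.t.\ $\valuedist(\vec{\point}^*_\segment)$: since $\|\vec{\point}^r_\segment-\vec{\point}^*_\segment\|_1 \le 2\epsilon_I/\typenum$, the quantile of any fixed value differs by at most $2\epsilon_I/\typenum$ between the two mixtures, which is a lower-order perturbation of the $\epsilon_I$ gap and is absorbed into the bound of Lemma~\ref{lem:robustify-find-type-main} (whose revenue gap $\epsilon_S\typenum/\epsilon_I$ is far larger than what the quantile perturbation would cost).

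Once Lemma~\ref{lem:robustify-find-type-main} is applicable with the type $\type_\segment$ produced by the algorithm, I expand revenue by linearity in the mixture:
\begin{align*}
&\Rev\big(p^*,\valuedist(\vec{\point}^r_\segment)\big) - \Rev\big(p,\valuedist(\vec{\point}^r_\segment)\big) \\
&\quad = \Big(1-\tfrac{\epsilon_I}{\typenum}\Big)\Big[\Rev\big(p^*,\valuedist(\vec{\point}^*_\segment)\big) - \Rev\big(p,\valuedist(\vec{\point}^*_\segment)\big)\Big] \\
&\qquad + \tfrac{\epsilon_I}{\typenum}\Big[\Rev\big(p^*,\valuedist(\type_\segment)\big) - \Rev\big(p,\valuedist(\type_\segment)\big)\Big].
\end{align*}
The first bracket is nonnegative by optimality of $p^*$ for $\valuedist(\vec{\point}^*_\segment)$, and the second bracket strictly exceeds $\epsilon_S\typenum/\epsilon_I$ by Lemma~\ref{lem:robustify-find-type-main}; so the total is $>\tfrac{\epsilon_I}{\typenum}\cdot\tfrac{\epsilon_S\typenum}{\epsilon_I} = \epsilon_S$. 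Hence $p$ cannot be $\epsilon_S$-optimal for $\valuedist(\vec{\point}^r_\segment)$, and the contrapositive gives property (3).

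\textbf{Expected obstacle.} The only subtlety is the translation between quantiles w.r.t.\ $\valuedist(\vec{\point}^r_\segment)$ and $\valuedist(\vec{\point}^*_\segment)$; everything else is a mechanical convex-combination calculation. The substantive content — existence of a type $\type_\segment$ yielding an $\epsilon_S\typenum/\epsilon_I$ revenue gap on $\valuedist(\type_\segment)$ — is already delivered by Lemma~\ref{lem:robustify-find-type-main}, so Lemma~\ref{lem:robustify-single-segment} is essentially a bookkeeping wrapper around it.
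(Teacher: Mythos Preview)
Your proposal is correct and follows essentially the same route as the paper: property~(2) is immediate from the convex-combination formula, and property~(3) is obtained by expanding revenue linearly in the mixture, using optimality of $p^*$ for the $\vec{\point}^*_\segment$ part and Lemma~\ref{lem:robustify-find-type-main} for the $\type_\segment$ part. The paper handles the quantile-translation step in the same hand-waved fashion you do (it simply cites property~(2) and asserts the gap carries over), so your explicit remark that the $O(\epsilon_I/\typenum)$ shift is lower-order is, if anything, slightly more careful than the original.
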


\begin{proof}
    Property (2) follows directly by the definition that:
    \[
        \vec{\point}^r_\segment = \big(1-\tfrac{\epsilon_I}{\typenum}\big) \cdot \vec{\point}^*_\segment + \tfrac{\epsilon_I}{\typenum} \cdot \vec{\point}_{t_\segment}
        ~.
    \]

    It remains to verify property (3). 
    We need to show that for any price $p$ whose quantile w.r.t.\ $\valuedist(\vec{\point}^r_\segment)$ is less than that of $p^*$ minus $\epsilon_I$, it cannot be $\epsilon_S$-optimal w.r.t.\ $\valuedist(\vec{\point}^r_\segment)$.
    
    By the definition of $\vec{\point}^r_\segment$, we have:
    \begin{align*}
        \Rev \big( p, \valuedist(\vec{\point}^r_\segment) \big)
        &= \left(1-\frac{\epsilon_I}{\typenum}\right) \cdot \Rev \big( p, \valuedist(\vec{\point}^*_\segment) \big) 
        + \frac{\epsilon_I}{\typenum} \cdot \Rev \big( p, \valuedist(\type_\segment) \big)
        ~.
    \end{align*}

    We next bound the two terms on the RHS separately.
    By that $p^*$ is optimal w.r.t.\ $\valuedist(\vec{\point}^*_\segment)$, the first term is upper bounded by:
    \[
        \Rev \big( p, \valuedist(\vec{\point}^*_\segment) \big) \le \Rev \big( p^*, \valuedist(\vec{\point}^*_\segment) \big)
        ~.
    \]

    Further, note that $p$'s quantile w.r.t.\ $\valuedist(\vec{\point}^*_\segment)$ is less than that of $p^*$ minus $\epsilon_I$, due to the assumed quantile gap w.r.t.\ $\valuedist(\vec{\point}^r_\segment)$, and that $\vec{\point}^*_\segment$ and $\vec{\point}^r_\segment$ is close (property (2)).
    Hence, by the definition of the algorithm (and Lemma~\ref{lem:robustify-find-type-main}), the second term is upper bounded by:
    \[
        \Rev \big( p, \valuedist(\type_\segment) \big) < \Rev \big( p^*, \valuedist(\type_\segment) \big) - \frac{\epsilon_S \typenum}{\epsilon_I}
        ~.
    \]
    %
    
    Putting together gives the followings:
    \begin{align*}
        \Rev \big( p, \valuedist(\vec{\point}^r_\segment) \big)
        &
        < \left(1-\frac{\epsilon_I}{\typenum}\right) \cdot \Rev \big( p^*, \valuedist(\vec{\point}^*_\segment) \big) 
        + \frac{\epsilon_I}{\typenum} \cdot \left( \Rev \big( p^*, \valuedist(\type_\segment) \big) - \frac{\epsilon_S \typenum}{\epsilon_I} \right) \\
        & 
        < \left(1-\frac{\epsilon_I}{\typenum}\right) \cdot \Rev \big( p^*, \valuedist(\vec{\point}^*_\segment) \big) 
        + \frac{\epsilon_I}{\typenum} \cdot \Rev \big( p^*, \valuedist(\type_\segment) \big) - \epsilon_S \\
        &
        = \Rev \big( p^*, \valuedist(\vec{\point}^r_\segment) \big) - \epsilon_S
        ~.
    \end{align*}
    
    Hence, such a price $p$ cannot be $\epsilon_S$-optimal w.r.t.\ $\valuedist(\vec{\point}^r_\segment)$.
\end{proof}


\begin{proof}[Proof of Lemma~\ref{lem:robustify}]
    We prove it using Algorithm~\ref{alg:robustify-segmentation}.
    Property (1) follows by the definition of the algorithm, in particular, steps (1c).
    Properties (2) and (3) follows by how the algorithm handles significant segments, i.e., step (1b), and Lemma~\ref{lem:robustify-single-segment}.

    It remains to verify that the algorithm gives a feasible segmentation. 
    First, we need to show that the weights, in particular, those of the new segments added in step (2), are non-negative.
    Second, we need to prove that centroid of the weighted point set lies in the middle of the simplex.

    Note that the second part follows from step (2b) of the algorithm, provided that the weights are non-negative. It remains to argue that:
    \[
        w^r_{\segment(\type)} = \frac{1}{\typenum} - \sum_{\segment \in \segmentset^*} w^r_\segment \cdot \point^r_{\segment, \type} \ge 0
        ~.
    \]

    It follows from a sequence of inequalities given below:
    \begin{align*}
        &\quad\sum_{\segment \in \segmentset^*} w^r_\segment \cdot \point^r_{\segment, \type}
        = \big( 1 - \epsilon_I \big) \cdot \sum_{\segment \in \segmentset^*} w^*_\segment \cdot \point^r_{\segment, \type} \\
        &
        \le \big( 1 - \epsilon_I \big) \cdot \sum_{\segment \in \segmentset^*} w^*_\segment \cdot \left( \point^*_{\segment, \type} + \frac{\epsilon_I}{T} \right)
        && \textrm{(Definition of $\vec{\point}^r_\segment$.)} \\
        &
        = \big( 1 - \epsilon_I \big) \cdot \frac{1 + \epsilon_I}{T}
        && \textrm{($\textstyle\sum w^*_\segment \point^*_{\segment, \type} = \frac{1}{T}, \sum w^*_\segment = 1$)} \\
        & 
        < \frac{1}{T}
        ~.
    \end{align*}
\end{proof}

\subsection{Proof of Lemma~\ref{lem:robustify-find-type-main}}
\label{sec:robustify-find-type}

Note that the lemma holds trivially if there is no price whose quantile w.r.t.\ $\valuedist(\vec{\point}^*_\segment)$ is at most that of $p^*$ less $\epsilon_I$.
The rest of the proof assumes this is not the case.
Also, for simplicity, in the proof we introduce a frequently used notation $\epsilon_R = \frac{\epsilon_S \typenum}{\epsilon_I}$.

Then, let $\bar{\price}$ denote the smallest price whose quantile is at most that of $p^*$ less $\epsilon_I$.
%
%
%
By the concavity of the revenue curve (Condition~\ref{con:concavity} of Definition~\ref{def:mhr-like-distributions}), it suffices to show that for price $\bar{\price}$, there exists a type $\type \in \typeset$ such that: 
\[
    \Rev \big( \bar{\price}, \valuedist(\type) \big) < \Rev \big( p^*, \valuedist(\type) \big) - \epsilon_R
    ~.
\]
Then, the inequality holds for all prices $p \ge \bar{\price}$ and, thus, the lemma follows.

The rest of the subsection assumes for contrary that Lemma~\ref{lem:robustify-find-type-main} does not hold. 
That is, we assume that for every type $\type \in \typeset$:
\[
    \Rev \big( \bar{\price}, \valuedist(\type) \big) \ge \Rev \big( p^*, \valuedist(\type) \big) - \epsilon_R
    ~.
\]

Let $p_1 = p^* + \tfrac{1}{\valuenum}, p_2 = p^* + \tfrac{2}{\valuenum}$, and so or, be the prices between $p^*$ (exclusive) and $\bar{\price}$ (inclusive).
Let $P$ denote the set of these prices.
By the concavity of the revenue curve (Condition~\ref{con:concavity} of Definition~\ref{def:mhr-like-distributions}), the above inequality implies that for any type $\type \in \typeset$, and any price $p_i \in P$:
\begin{equation}
    \label{eqn:robustify-contrary}
    \Rev \big( p_i, \valuedist(\type) \big) \ge \Rev \big( p^*, \valuedist(\type) \big) - \epsilon_R
    ~.
\end{equation}

Under this assumption we will show that there must be a type $\type$ such that $\valuedist(\type)$ satisfies the `large plateau' conditions(as mentioned in Section \ref{sec:proof-robustify-find-type-main}):
\begin{enumerate}
	\item Revenue between $\bar{p}$ and $p^*$ is not much higher: $\forall~ p \in [p^*,\bar{p}]$ , 
	\[ \textstyle \Rev \big( p, \valuedist(\type)\big) - \Rev \big( p^*, \valuedist(\type)\big) \le \tilde{O} \big( \frac{\epsilon_R \typenum}{\epsilon_I} \big)
	.\]
	%
	\item The plateau is high, i.e., revenue of $p^*$ is large:  $\Rev \big( p^*, \valuedist(\type)\big) \ge \Omega \big( \frac{\epsilon_I}{T} \big)$. 
	\item The plateau is wide, i.e., the quantiles of $\bar{p}$ and $p^*$ differ by at least $\Omega \big( \frac{\epsilon_I^2}{T} \big)$.
\end{enumerate}
Finally we argue that these conditions will lead to a contradiction.

We will first prove the following technical lemmas.


\begin{lemma}
    \label{lem:robustify-one-to-many}
    Suppose type $\type \in \typeset$ and price $p_i \in P$ satisfy that:
    \[
        \Rev \big( p_i, \valuedist(\type) \big) - \Rev \big( p^*, \valuedist(\type) \big) > 0
        ~.
    \]
    Then, for any $p_j \in P$, $j \le i$, we have:
    \begin{equation}
        \label{eqn:robustify-pairwise-relation}
        \begin{aligned}
        &\frac{1}{j} \cdot \left( \Rev \big( p_j, \valuedist(\type) \big) - \Rev \big( p^*, \valuedist(\type) \big) \right) \\
        &\quad\ge \frac{1}{i} \cdot \left( \Rev \big( p_i, \valuedist(\type) \big) - \Rev \big( p^*, \valuedist(\type) \big) \right)
        ~.
        \end{aligned}
    \end{equation}
\end{lemma}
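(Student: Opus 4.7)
My plan is to prove the lemma by contradiction, exploiting both the concavity of the revenue curve in quantile space (Condition~\ref{con:concavity} of the MHR-like definition) and the positivity hypothesis $s_i := \Rev(p_i, \valuedist(\type)) - \Rev(p^*, \valuedist(\type)) > 0$. The first step is to translate both the conclusion and the hypothesis into cleaner algebraic forms. Let $q_k$ denote the quantile of $p_k$ under $\valuedist(\type)$, and set $A_k := q^* - q_k$ and $k^* := \valuenum\, p^*$. Using $\Rev(p_k, \valuedist(\type)) = p_k q_k$ together with $p_k - p^* = k/\valuenum$, the conclusion of the lemma is equivalent to
\[
    ij(A_i - A_j) \ge k^*(iA_j - jA_i),
\]
while the hypothesis $s_i > 0$ is equivalent to $A_i < i q^* / (k^* + i)$.

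Next I invoke MHR-likeness. Concavity of the revenue curve in quantile space, applied to the chord on the triple $(q_i, q_j, q^*)$, yields after substituting $\Rev(p_k, \valuedist(\type)) = p_k q_k$ the bound $q^*(iA_j - jA_i) \le (i-j) A_i A_j$. Now I assume toward contradiction that $s_j/j < s_i/i$, i.e.\ $ij(A_i - A_j) < k^*(iA_j - jA_i)$. For this inequality to hold its right-hand side must be positive, so $r := A_j/A_i > j/i$, and in particular $ir - j > 0$. Chaining the concavity bound with the hypothesis bound on $A_i$ yields
\[
    \tfrac{k^* + i}{i}\, A_i \;<\; q^* \;\le\; \tfrac{(i-j)\, r\, A_i}{ir - j},
\]
and cross-multiplying and simplifying reduces this to $k^*(iA_j - jA_i) < ij(A_i - A_j)$, which directly contradicts the contrary assumption.

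The main obstacle to anticipate is recognizing that concavity of the revenue curve in quantile space alone is insufficient: the chord inequality only yields $s_j \ge (A_j/A_i)\, s_i$, while the target $s_j \ge (j/i)\, s_i$ requires $A_j/A_i \ge j/i$, which can fail when the quantile drops on $(p^*, p_i]$ are back-loaded. The essential observation is that the hypothesis $s_i > 0$ places a quantitative upper bound on $A_i$, and it is only by combining this bound with concavity (rather than using concavity alone) that one rules out the back-loaded regime and obtains the desired comparison.
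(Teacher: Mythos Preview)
Your proof is correct. Both your argument and the paper's rest on the same two ingredients---concavity of the revenue curve in quantile space and the positivity hypothesis $s_i>0$---and the underlying algebra is equivalent (both ultimately reduce to the identity $(i-j)j(q^*-q_i)^2\ge 0$, just parameterized differently).

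The organizational difference is worth noting. The paper first uses concavity to obtain a lower bound on $q_j$, argues that the target inequality is monotone in $q_j$, and then reduces to the extremal case where the three points $(q_i,R_i),(q_j,R_j),(q^*,R^*)$ are collinear; in that case the collinearity ratio combined (implicitly) with $R_i-R^*>0$ turns the target into a pure quantile inequality $\tfrac{1}{j}(q^*-q_j)\ge\tfrac{1}{i}(q^*-q_i)$, which is then verified directly. You instead introduce the normalized variables $A_k=q^*-q_k$ and $k^*=\valuenum p^*$, express the conclusion, the hypothesis, and the concavity chord inequality uniformly in these variables, and close the loop by contradiction. Your route is slightly more transparent about exactly where the hypothesis $s_i>0$ is consumed (the paper uses it without flagging it when dividing by $R_i-R^*$), and avoids the separate ``reduce to equality'' step; the paper's route has the geometric appeal of isolating the collinear boundary case. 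One small remark: you should note explicitly that the case $j=i$ is trivial so that $i-j>0$ is available when you form the bound $q^*\le\frac{(i-j)rA_i}{ir-j}$.
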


\begin{proof}
    For simplicity of notations, let $R_i$, $R_j$, and $R^*$ denote $\Rev \big( p_i, \valuedist(\type) \big)$, $\Rev \big( p_j, \valuedist(\type) \big)$, and $\Rev \big( p^*,\valuedist(\type) \big)$, respectively. 
    Then, we can rewrite the inequality as:
    \[
        \frac{1}{j} \big( R_j - R^* \big) \ge \frac{1}{i} \big( R_i - R^* \big)
        ~.
    \]

    Further, Let $q_i = \tfrac{R_i}{p_i}$, $q_j = \tfrac{R_j}{p_j}$, and $q^* = \tfrac{R^*}{p^*}$ be the corresponding quantiles.

    By concavity of the revenue-quantile curve of MHR-like distributions, we have:
    \[
        p_j q_j \le \frac{q_j-q_i}{q^*-q_i} p^* q^* + \frac{q^*-q_j}{q^*-q_i} p_i q_i 
        ~.
    \]

    Multiplying both sides by $q^*-q_i$, and rearranging terms, we obtain:
    \[
        q_j \big( q^* (p_j - p^*) + q_i (p_i - p_j) \big) \ge q_i q^* (p_i - p^*) 
        ~.
    \]

    By the definition of $p_i$ and $p_j$, it becomes:
    \begin{equation}
        \label{eqn:robustify-straight-line}
        q_j \ge \frac{i q_i q^*}{j q^* + (i-j) q_i}
        ~.
    \end{equation}

    Suppose we fix $i$, $j$, and $q_i$, $q^*$. 
    Then, the larger $q_j$ is, the larger the LHS of \eqref{eqn:robustify-pairwise-relation} is, while the RHS stays the same.
    Hence, it suffices to prove the lemma when the above Eqn.~\eqref{eqn:robustify-straight-line} holds with equality. 

    For such a $q_j$, the following three points $(q_i,R_i)$, $(q_j,R_j)$ and $(q^*,R^*)$ are on the same line, since the inequality is derived from the concavity of the revenue curve.
    In other words, we have:
    \[
        \frac{R_i - R^*}{R_j - R^*} = \frac{q^* - q_i}{q^* - q_j}
        ~.
    \]

    Therefore, it remains to show the following inequality when Eqn.~\eqref{eqn:robustify-straight-line} holds with equality:
    \[
        \frac{1}{j} \big( q^* - q_j \big) \ge \frac{1}{i} (q^*-q_i)
        ~,
    \]
    which is equivalent to:
    \[
        q_j \le \frac{(i-j) q^* + j q_i}{i} ~.
    \]

    Putting together, it suffices to show:
    \[
        \frac{i q_i q^*}{j q^* + (i-j) q_i} \le \frac{(i-j) q^* +j q_i}{i}
        ~,
    \]
    which is equivalent to:
    \[
        i^2 q_i q^* \le \big( j q^* + (i-j) q_i \big) \big( (i-j) q^* +j q_i \big) 
        ~.
    \]

    Rearranging terms, this is simply:
    \[
        (i-j)j \big(q^*-q_i\big)^2 \ge 0
        ~.
    \]

    So the lemma follows.
\end{proof}

The rest of the proof considers sampling a type $\type$ according to $\vec{\point}^*_\segment$, and establishes a sequence of technical claims regarding the probabilities of various events.
The main technical ingredient is the next lemma, which states that the chance of having a price in $P$ whose revenue is much higher than that of $p^*$ is small.

\begin{lemma}
    \label{lem:robustify-beat-union-bound}
    For any $\delta > 0$, $\frac{\epsilon_R \log \valuenum}{\delta}$ upper bounds:
    \[
        \Pr_{\type \sim \vec{\point}^*_\segment} \left[ \exists p_i \in P: \Rev \big( p_i, \valuedist(\type) \big) \ge \Rev \big( p^*, \valuedist(\type) \big) + \delta \right]
        ~.
    \]
    Choosing $\delta=\tilde{\Theta}\left(\frac{\epsilon_R\typenum}{\epsilon_I}\right)$ we obtain the probability that
condition $1$ is violated is less than $O(\frac{\epsilon_I}{\typenum})$.
\end{lemma}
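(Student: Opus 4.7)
The plan is to prove this via a Markov inequality applied to a carefully weighted sum, where the weights $1/i$ produce the harmonic series and hence the desired $\log \valuenum$ factor in place of the naive union bound's $\valuenum$ factor. Throughout, write $Z_i(\type) = \Rev\big(p_i, \valuedist(\type)\big) - \Rev\big(p^*, \valuedist(\type)\big)$ for $p_i \in P$.

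First I would collect two elementary facts about the random variable $Z_i$ when $\type \sim \vec{\point}^*_\segment$. On the one hand, since $p^*$ is the monopoly price of the mixture $\valuedist(\vec{\point}^*_\segment)$, linearity of revenue in the mixture gives $\E_{\type}[Z_i(\type)] \le 0$ for every $p_i \in P$. On the other hand, the contrary assumption Eqn.~\eqref{eqn:robustify-contrary} yields the pointwise lower bound $Z_i(\type) \ge -\epsilon_R$. Combining these, the non-negative variable $\max(Z_i,0)$ satisfies $\max(Z_i,0) \le Z_i + \epsilon_R$, so $\E[\max(Z_i,0)] \le \epsilon_R$.

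Next I would introduce the weighted aggregate
\[
W \;=\; \sum_{p_i \in P} \frac{\max(Z_i(\type),0)}{i},
\]
and use linearity of expectation with the harmonic identity $\sum_{i=1}^{|P|} 1/i \le \ln \valuenum + 1$ to obtain $\E[W] \le \epsilon_R (\ln \valuenum + 1) = O(\epsilon_R \log \valuenum)$.

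The crucial step, and the main leverage point of Lemma~\ref{lem:robustify-one-to-many}, is to show that the event $\{\exists\, p_i \in P : Z_i(\type) \ge \delta\}$ implies $W \ge \delta$. Indeed, let $i^*$ denote the largest index $i$ (if any) with $Z_i(\type) \ge \delta$; note $Z_{i^*} > 0$, so Lemma~\ref{lem:robustify-one-to-many} applies and yields $Z_j/j \ge Z_{i^*}/i^* \ge \delta/i^*$ for every $1 \le j \le i^*$. In particular $Z_j \ge 0$ for these $j$, so $\max(Z_j,0) = Z_j$ and
\[
W \;\ge\; \sum_{j=1}^{i^*} \frac{Z_j}{j} \;\ge\; \sum_{j=1}^{i^*} \frac{\delta}{i^*} \;=\; \delta.
\]
Finally, Markov's inequality on the non-negative variable $W$ gives
\[
\Pr\!\left[\,\exists\, p_i \in P : Z_i \ge \delta\,\right] \;\le\; \Pr[W \ge \delta] \;\le\; \frac{\E[W]}{\delta} \;\le\; \frac{\epsilon_R \log \valuenum}{\delta},
\]
up to the absorbed constant from $\ln \valuenum + 1$. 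The final sentence of the lemma then follows by plugging in $\delta = \tilde{\Theta}(\epsilon_R T/\epsilon_I)$. The only real obstacle is identifying the correct weighting: the $1/i$ weights are exactly what makes the conclusion of Lemma~\ref{lem:robustify-one-to-many} (a uniform lower bound $\delta/i^*$ on $Z_j/j$) match the harmonic bound on $\E[W]$, so both sides scale compatibly and the $\valuenum$ in the naive union bound collapses to $\log \valuenum$.
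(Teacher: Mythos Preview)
Your proof is correct and follows essentially the same approach as the paper: both use the $1/i$ weighting together with Lemma~\ref{lem:robustify-one-to-many} and a Markov-type bound. Your use of $\max(Z_i,0)$ in place of the signed sum is a slight variant that actually makes the truncation step cleaner---the paper drops the terms $j>i$ from its signed sum $\Delta_\type$ without justifying their nonnegativity, whereas your nonnegative $W$ sidesteps this entirely.
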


\begin{proof}
    For every $p_i \in P$, let $\alpha_i=\frac{1}{i\log \valuenum}$.  
    Then, multiplying both sides of Eqn.~\eqref{eqn:robustify-contrary} by $\alpha_i$ and summing over all $p_i \in P$, we have that for any type $\type \in \typeset$:
    \begin{align*}
     &\sum_{p_i \in P} \alpha_i \cdot \big( \Rev \big(p_i, \valuedist(\type) \big)-\Rev \big(p^*,\valuedist(\type)\big)\big) \\
    &\qquad  \ge - \epsilon_R \sum_{p_i \in P} \alpha_i \ge -\epsilon_R ~.
    \end{align*}

    To simplify notations in the rest of the proof, let $\Delta_t$ denote the LHS of the above inequality for a given type $\type$.
    Then, on one hand, we have:
    \[
        \E_{\type \sim \vec{\point}^*_\sigma} \big[ \Delta_t \big] 
        = \sum_{p_i \in P} \alpha_i \cdot \big( \Rev \big(p_i, \valuedist(\vec{\point}^*_\segment) \big)-\Rev \big(p^*,\valuedist(\vec{\point}^*_\segment)\big)\big)
        ~.
    \]
    It is nonpositive because of the optimality of $p^*$ w.r.t.\ $\valuedist(\vec{\point}^*_\segment)$.

    On the other hand, let $\delta' = \frac{\delta}{\log \valuenum}$, we have:
    \begin{align*}
        \E_{\type \sim \vec{\point}^*_\segment} \big[ \Delta_\type \big]
        &
        \ge \Pr_{\type \sim \vec{\point}^*_\segment} \big[ \Delta_\type \ge \delta' \big] \cdot \delta' + \min_\type \Delta_\type \\
        &
        \ge \Pr_{\type \sim \vec{\point}^*_\segment} \big[ \Delta_\type \ge \delta' \big] \cdot \delta' + (-\epsilon_R) 
        ~,
    \end{align*}
    by the definition of $\Delta_t$ and Eqn.~\eqref{eqn:robustify-contrary}.

    Putting together we get that:
    \begin{equation}
        \label{eqn:robustify-beat-union-bound-weighted-sum}
        \Pr_{\type \sim \vec{\point}^*_\segment} \big[ \Delta_\type \ge \delta' \big] \le \frac{\epsilon_R}{\delta'} = \frac{\epsilon_R \log \valuenum}{\delta}
        ~.
    \end{equation}

    The rest of the proof boils down to showing that whenever there exists $p_i \in P$ such that:
    \[
        \Rev \big( p_i, \valuedist(\type) \big) \ge \Rev \big(p^*, \valuedist(\type) \big) + \delta 
        ~, 
    \]
    it implies that $\Delta_t \ge \delta'$.
    This, together with Eqn.~\eqref{eqn:robustify-beat-union-bound-weighted-sum} would complete the proof of the lemma.




    Indeed, when such a $p_i$ exists, we have:
    \begin{align*}
        \Delta_t 
        & 
        \ge \sum_{j=1}^i \alpha_j \cdot \big( \Rev \big(p_j,\valuedist(\type) \big) - \Rev \big(p^*, \valuedist(\type) \big) \big) \\
        & 
        \ge \sum_{j=1}^{i} \alpha_j \cdot\frac{j}{i} \cdot \big( \Rev \big( p_i, \valuedist(\type) \big) - \Rev \big(p^*, \valuedist(\type) \big) \big) \\
        & 
        = \frac{1}{\log \valuenum} \cdot \big( \Rev \big( p_i, \valuedist(\type) \big) - \Rev \big( p^*, \valuedist(\type) \big) \big) \\
        & 
        \ge \frac{\delta}{\log \valuenum} 
        ~.
    \end{align*}

    The second line is by Lemma~\ref{lem:robustify-one-to-many} and the third is by the definition of $\alpha_j$'s. So the lemma follows.
%
\end{proof}

\paragraph{Some Notations}
In the following discussions, let $p(\type)$ denote the optimal price w.r.t.\ $\valuedist(\type)$, for any type $\type \in \typeset$. 
Further, let $R(\type)$ denote the corresponding optimal revenue, and similarly $R(\vec{\point})$ for mixtures.
Finally, let $q^*(\type)$ and $\bar{q}(\type)$ denote the quantiles of $p^*$ and $\bar{\price}$ w.r.t. $\valuedist(\type)$;
let $q^*$ and $\bar{q}$, i.e., without specifying a type, denote the quantiles w.r.t.\ $\valuedist(\vec{\point}^*_\segment)$.

\begin{lemma}
    \label{lem:robustify-optimal-SW-lower-bound}
    For any $\vec{\point} \in \simplex$, the optimal revenue of the corresponding mixture $\valuedist(\vec{\point})$ is lower bounded as follows:
    \[
        R(\vec{\point}) \ge \Omega \left( \frac{\E_{v \sim \valuedist(\vec{\point})}[v]}{\typenum} \right)
        ~.
    \]
\end{lemma}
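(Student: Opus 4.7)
My approach is to exploit the MHR-like property of each type distribution (Condition~\ref{con:revenue-welfare-ratio} of Definition~\ref{def:mhr-like-distributions}), together with a pigeonhole argument over the $\typenum$ types that appear in the mixture. The key observation is that, since the intermediary's optimal revenue $R(\vec{\point})$ is the maximum over all prices of the revenue on the mixture $\valuedist(\vec{\point}) = \sum_t x_t \valuedist(t)$, we may lower bound $R(\vec{\point})$ by plugging in any convenient single-type monopoly price.

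\textbf{Step 1: Decompose the welfare across types.} Write $\E_{v \sim \valuedist(\vec{\point})}[v] = \sum_{t \in \typeset} x_t \cdot \E_{v \sim \valuedist(t)}[v]$, which holds because $\valuedist(\vec{\point})$ is a convex combination of the $\valuedist(t)$'s. By averaging, there exists a type $t^\star \in \typeset$ such that
\[
    x_{t^\star} \cdot \E_{v \sim \valuedist(t^\star)}[v] \;\ge\; \tfrac{1}{\typenum} \cdot \E_{v \sim \valuedist(\vec{\point})}[v].
\]

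\textbf{Step 2: Price at the chosen type's monopoly price.} Let $p(t^\star)$ be the monopoly price of $\valuedist(t^\star)$, and let $R(t^\star) = \Rev(p(t^\star), \valuedist(t^\star))$ denote its monopoly revenue. Since the revenue of any fixed price on a mixture is the corresponding mixture of per-type revenues, and all terms are nonnegative,
\[
    R(\vec{\point}) \;\ge\; \Rev\big(p(t^\star), \valuedist(\vec{\point})\big) \;=\; \sum_{t \in \typeset} x_t \cdot \Rev\big(p(t^\star), \valuedist(t)\big) \;\ge\; x_{t^\star} \cdot R(t^\star).
\]

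\textbf{Step 3: Apply the MHR-like welfare/revenue ratio.} Because $\valuedist(t^\star)$ is MHR-like, Condition~\ref{con:revenue-welfare-ratio} of Definition~\ref{def:mhr-like-distributions} gives $R(t^\star) \ge \tfrac{1}{e} \E_{v \sim \valuedist(t^\star)}[v]$. Chaining this with Steps~1 and~2,
\[
    R(\vec{\point}) \;\ge\; \tfrac{1}{e} \cdot x_{t^\star} \cdot \E_{v \sim \valuedist(t^\star)}[v] \;\ge\; \tfrac{1}{e\typenum} \cdot \E_{v \sim \valuedist(\vec{\point})}[v],
\]
which is the claimed $\Omega\big(\E_{v \sim \valuedist(\vec{\point})}[v] / \typenum\big)$ bound.

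\textbf{Main obstacle.} There is no real technical obstacle here; the statement is essentially a one-line consequence of (i) the linearity of expected value and of revenue-at-a-fixed-price under mixtures, (ii) a pigeonhole across the $\typenum$ types to locate a type whose weighted welfare contribution is $\Omega(1/\typenum)$ of the total, and (iii) the MHR-like welfare/revenue ratio for that single type. The only subtlety worth flagging is that we must use a per-type monopoly price (not the mixture's monopoly price) so that Condition~\ref{con:revenue-welfare-ratio} can be invoked directly on $\valuedist(t^\star)$; this is legitimate because $R(\vec{\point})$ is defined as a maximum over prices.
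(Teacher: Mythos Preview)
Your proof is correct and uses essentially the same ingredients as the paper's: lower-bound the mixture's optimal revenue by the revenue of a single type's monopoly price, then invoke Condition~\ref{con:revenue-welfare-ratio} of Definition~\ref{def:mhr-like-distributions}. The only cosmetic difference is that the paper sums the inequality $\Rev(p(t'),\valuedist(\vec{\point})) \ge x_{t'}\cdot \Omega(\E_{v\sim\valuedist(t')}[v])$ over all $t'\in\typeset$ and then divides by $\typenum$, whereas you apply pigeonhole to pick out one type $t^\star$; both routes yield the same $\tfrac{1}{e\typenum}$ constant.
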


\begin{proof}
    First, we can rewrite the social welfare of distribution $\valuedist(\vec{\point})$ as follows:
    \[
        \E_{v \sim \valuedist(\vec{\point})}[v] = \sum_{\type \in \typeset} \point_\type \cdot \E_{v \sim \valuedist(\type)}[v]
        ~.
    \]

    Next, note that for any given $\type' \in \typeset$:
    \begin{align*}
        \Rev \big( p(\type'), \valuedist(\vec{\point}) \big) 
        & 
        = p(\type') \cdot \Pr_{v \sim \valuedist(\vec{\point})} \big[ v \ge p(\type') \big] \\
        & \textstyle
        = p(\type') \cdot \sum_{\type \in \typeset} \point_\type \cdot \Pr_{v \sim \valuedist(\type)} \big[ v \ge p(\type') \big] \\
        & 
        \ge p(\type') \cdot \point_{\type'} \cdot \Pr_{v \sim \valuedist(\type')} \big[ v \ge p(\type') \big] \\
        &
        = \point_{\type'} \cdot R(\type') \\
        &
        \ge \point_{\type'} \cdot \Omega \big( \E_{v \sim \valuedist(\type')} [v] \big) ~,
    \end{align*}
    where the last line is by Condition~\ref{con:revenue-welfare-ratio} of Definition~\ref{def:mhr-like-distributions}.
    Therefore, we have:
    \[
        T \cdot R(\vec{\point}) \ge \sum_{\type \in \typeset} \Rev \big( p(t), \valuedist(\vec{\point}) \big) \ge \Omega \big( \E_{v \sim \valuedist(\vec{\point})}[v] \big) 
        ~.
    \]

    Dividing both sides by $T$ proves the lemma.
\end{proof}

As a direct corollary of the above lemma, the assumption that $\E_{v \sim \valuedist(\vec{\point}^*_\segment)} [v] \ge \epsilon_I$, and that $p^*$ is optimal w.r.t.\ $\valuedist(\vec{\point}^*_\segment)$, we have the following lemma.

\begin{lemma}
    \label{lem:robustify-optimal-revenue-lower-bound} 
    The optimal revenue of distribution $\valuedist(\vec{\point}^*_\segment)$ is lower bounded as:
    \[
        R \big( \vec{\point}^*_\segment \big)=
        \Rev \big( p^*, \valuedist(\vec{\point}^*_\segment) \big) 
        \ge \Omega \left( \frac{\epsilon_I}{\typenum} \right)
        ~.
    \]
\end{lemma}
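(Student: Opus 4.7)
The statement is essentially an immediate corollary of Lemma~\ref{lem:robustify-optimal-SW-lower-bound} combined with the significance assumption on the segment $\vec{\point}^*_\segment$, so the plan is very short.

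First I would apply Lemma~\ref{lem:robustify-optimal-SW-lower-bound} with $\vec{\point} = \vec{\point}^*_\segment$, which gives
\[
R(\vec{\point}^*_\segment) \;\ge\; \Omega\!\left(\frac{\E_{v \sim \valuedist(\vec{\point}^*_\segment)}[v]}{\typenum}\right).
\]
Then I would invoke the standing assumption of the surrounding proof that the segment is significant, i.e.\ $\E_{v \sim \valuedist(\vec{\point}^*_\segment)}[v] \ge \epsilon_I$ (this is the hypothesis under which Lemma~\ref{lem:robustify-find-type-main} is being proved). Substituting this lower bound on the expected value gives $R(\vec{\point}^*_\segment) \ge \Omega(\epsilon_I/\typenum)$.

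Finally, I would note that by definition $p^*$ is the optimal price with respect to $\valuedist(\vec{\point}^*_\segment)$, so $R(\vec{\point}^*_\segment) = \Rev(p^*, \valuedist(\vec{\point}^*_\segment))$, which yields the claimed inequality. There is really no obstacle here: the only (mild) thing to be careful about is making sure the constants hidden inside the $\Omega(\cdot)$ of Lemma~\ref{lem:robustify-optimal-SW-lower-bound}, which come from Condition~\ref{con:revenue-welfare-ratio} of Definition~\ref{def:mhr-like-distributions} (the $1/e$ factor relating monopoly revenue and expected value), carry through unchanged, since $\valuedist(\vec{\point}^*_\segment)$ is a mixture of MHR-like distributions and the argument in Lemma~\ref{lem:robustify-optimal-SW-lower-bound} only used this property per type.
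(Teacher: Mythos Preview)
Your proposal is correct and matches the paper's approach exactly: the paper states this lemma as a direct corollary of Lemma~\ref{lem:robustify-optimal-SW-lower-bound}, the significance assumption $\E_{v \sim \valuedist(\vec{\point}^*_\segment)}[v] \ge \epsilon_I$, and the optimality of $p^*$ with respect to $\valuedist(\vec{\point}^*_\segment)$. Your extra remark about the constants is fine but unnecessary, since the $\Omega(\cdot)$ in Lemma~\ref{lem:robustify-optimal-SW-lower-bound} already absorbs them.
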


\begin{lemma}
    \label{lem:robustify-condition-two}
    There's at least an $\Omega(\frac{\epsilon_I}{\typenum})$ probability that:
    \[
        \Rev(p^*,\valuedist(\type)) \ge \Omega(\frac{\epsilon_I}{\typenum})~.
    \]
\end{lemma}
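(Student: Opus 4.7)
}
The plan is to apply a reverse Markov-style argument to the random variable $X := \Rev(p^*, \valuedist(\type))$ when $\type$ is sampled according to $\vec{\point}^*_\segment$. The two key inputs are (i) an \emph{expectation lower bound} on $X$ coming from Lemma~\ref{lem:robustify-optimal-revenue-lower-bound}, and (ii) an \emph{upper bound} on $X$ coming from the assumption that values are scaled to $(0,1]$, so revenue is at most $1$.

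First, I would observe that by linearity of revenue in the type mixture,
\[
\E_{\type \sim \vec{\point}^*_\segment}\!\big[\Rev(p^*, \valuedist(\type))\big] \;=\; \Rev(p^*, \valuedist(\vec{\point}^*_\segment)) \;\ge\; c\cdot \tfrac{\epsilon_I}{\typenum}
\]
for some absolute constant $c>0$, invoking Lemma~\ref{lem:robustify-optimal-revenue-lower-bound}. Next, set the target threshold $\alpha := \tfrac{c}{2}\cdot \tfrac{\epsilon_I}{\typenum}$, which is exactly half of this expectation lower bound.

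Then I split the expectation into the events $\{X<\alpha\}$ and $\{X\ge \alpha\}$, using the trivial bounds $X<\alpha$ on the first event and $X\le 1$ on the second:
\[
c\cdot \tfrac{\epsilon_I}{\typenum} \;\le\; \E[X] \;\le\; \alpha\cdot \Pr[X<\alpha] + 1\cdot \Pr[X\ge \alpha] \;\le\; \alpha + \Pr[X\ge \alpha].
\]
Rearranging gives $\Pr[X \ge \alpha] \ge c\cdot\tfrac{\epsilon_I}{\typenum} - \alpha = \tfrac{c}{2}\cdot \tfrac{\epsilon_I}{\typenum}$, which is the required $\Omega(\epsilon_I/\typenum)$ lower bound on the probability that $\Rev(p^*,\valuedist(\type)) \ge \Omega(\epsilon_I/\typenum)$.

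There isn't really a main obstacle: the proof is a two-line reverse-Markov argument, and the only subtle point is noticing that both the threshold and the probability bound must be set proportional to the same quantity $\epsilon_I/\typenum$, so the $\Omega(\cdot)$ constants in the two ``$\Omega(\epsilon_I/\typenum)$'' appearances of the lemma are in fact different constants derived from $c$. The only external ingredient used is Lemma~\ref{lem:robustify-optimal-revenue-lower-bound}; in particular, we do not need any MHR-like structural properties beyond what is already packaged there.
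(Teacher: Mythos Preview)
Your proposal is correct and is essentially the same reverse-Markov/averaging argument the paper uses: both combine the expectation lower bound from Lemma~\ref{lem:robustify-optimal-revenue-lower-bound} (via linearity of revenue in the mixture) with the trivial upper bound $X\le 1$ to conclude the desired probability bound. Your version just makes the constants and the split of the expectation more explicit than the paper's two-line sketch.
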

\begin{proof}
In fact by Lemma \ref{lem:robustify-optimal-revenue-lower-bound} we have:
\[
    \Rev \big( p^*, \valuedist(\vec{\point}^*_\segment) \big)=
    \E_{\type\sim\vec{\point}^*_\segment}[\Rev \big( p^*, \valuedist(\type) \big)]\ge
    \Omega \left( \frac{\epsilon_I}{\typenum} \right) ~.
\]
This will be violated if $\Rev(p^*,\valuedist(\type)) \ge \Omega(\frac{\epsilon_I}{\typenum})$ does not have at least an $\Omega(\frac{\epsilon_I}{\typenum})$ probability.
\end{proof}


Next, we show a conditional lower bound on the quantile gap between price $\bar{\price}$ and $p^*$.
\begin{lemma}
    \label{lem:robustify-quantile-gap}
    Suppose type $\type \in \typeset$ satisfies condition $1$ and $2$, i.e.:
    \[
        \forall p\in [p^*,\bar{p}], 
        \Rev \big(\price, \valuedist(\type) \big) - \Rev \big( p^*, \valuedist(\type) \big)  
        \le
        \tilde{O} \bigg( \frac{\epsilon_R\typenum}{\epsilon_I} \bigg)
        ~,
    \]
    and
    \[
        \Rev\big( p^*, \valuedist(\type)\big) 
        \ge 
        \Omega(\frac{\epsilon_I}{\typenum})
    \]
    Then this type satisfies condition $3$:
    \[
        q^*(t) - \bar{q}(t) \ge \Omega \big( \frac{\epsilon_I^2}{\typenum}\big)
        ~.
    \]
\end{lemma}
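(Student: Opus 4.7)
}

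The plan is to pin down $\Rev(\bar{\price}, \valuedist(\type))$ tightly around $\Rev(p^*, \valuedist(\type))$, then convert this revenue-gap bound into the desired quantile-gap bound by simple algebra. Condition~1 at $p=\bar{\price}$ yields the upper bound $\Rev(\bar{\price}, \valuedist(\type)) \le \Rev(p^*, \valuedist(\type)) + \tilde O(\epsilon_R T/\epsilon_I)$, while the standing contrary assumption Eq.~\eqref{eqn:robustify-contrary} at $p=\bar{\price}$ gives the matching lower bound $\Rev(\bar{\price}, \valuedist(\type)) \ge \Rev(p^*, \valuedist(\type)) - \epsilon_R$. Combining these, $|p^* q^*(t) - \bar{\price}\,\bar q(t)| \le \tilde O(\epsilon_R T/\epsilon_I)$.

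Next, I would use the identity
\[
\bar{\price}\bigl(q^*(t) - \bar q(t)\bigr) \;=\; (\bar{\price} - p^*)\,q^*(t) \;-\; \bigl(\bar{\price}\,\bar q(t) - p^* q^*(t)\bigr),
\]
which rearranges to
\[
q^*(t) - \bar q(t) \;\ge\; \frac{(\bar{\price} - p^*)\,q^*(t)}{\bar{\price}} \;-\; \tilde O\!\left(\frac{\epsilon_R T}{\epsilon_I \bar{\price}}\right).
\]
Under the parameter regime of Eqn.~\eqref{eqn:robustify-epsilons}, the second (error) term is dominated by the target $\Omega(\epsilon_I^2/T)$, so the lemma reduces to lower-bounding the main term $(\bar{\price}-p^*)\,q^*(t)/\bar{\price}$ by $\Omega(\epsilon_I^2/T)$.

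Condition 2 together with the normalization $p^* \le 1$ immediately yields $q^*(t) \ge \Rev(p^*, \valuedist(\type))/p^* \ge \Omega(\epsilon_I/\typenum)$. The remaining task, and what I expect to be the main technical obstacle, is to show the multiplicative gap $(\bar{\price} - p^*)/\bar{\price} = \Omega(\epsilon_I)$. Here I would exploit the defining property of $\bar{\price}$, namely that its quantile in the mixture $\valuedist(\vec{\point})$ is smaller than $q^*$ by at least $\epsilon_I$, together with the optimality of $p^*$ at the mixture level ($p^* q^* \ge \bar{\price}\,\bar q$) and the mixture-level significance lower bound on $q^*$ inherited from Lemma~\ref{lem:robustify-optimal-revenue-lower-bound}. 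These force $\bar{\price}$ to be appropriately separated from $p^*$ multiplicatively; the bounded-values normalization $\bar{\price} \le 1$ then converts the multiplicative separation into the additive bound needed. Combining the two lower bounds gives $(\bar{\price}-p^*)\,q^*(t)/\bar{\price} \ge \Omega(\epsilon_I)\cdot \Omega(\epsilon_I/\typenum) = \Omega(\epsilon_I^2/\typenum)$, which finishes the proof after absorbing the negligible error term.
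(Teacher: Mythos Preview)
Your overall decomposition is sound and tracks the paper's argument closely: both proofs reduce to showing the multiplicative price gap $(\bar{\price}-p^*)/\bar{\price}\ge\Omega(\epsilon_I)$ (equivalently $p^*/\bar{\price}\le 1-\Omega(\epsilon_I)$) and then combine it with $q^*(\type)\ge\Omega(\epsilon_I/\typenum)$ from Condition~2. The paper does this via the identity $p^*/\bar{\price}=\frac{\Rev(p^*,\valuedist(\vec\point^*_\segment))}{\Rev(\bar\price,\valuedist(\vec\point^*_\segment))}\cdot\frac{\bar q}{q^*}$ and bounds each factor; your additive identity is an equivalent rearrangement.

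There is, however, a genuine gap in how you propose to establish the price gap. You cite the \emph{optimality} inequality $p^*q^*\ge\bar{\price}\,\bar q$ at the mixture level, but that points the wrong way: it gives an \emph{upper} bound on $\bar{\price}$ (namely $\bar{\price}\le p^*q^*/\bar q$), not the lower bound you need. Concretely, the three ingredients you list are consistent with $\bar{\price}$ being only $p^*+1/\valuenum$ while $\bar q$ is tiny, so $(\bar{\price}-p^*)/\bar{\price}$ can be arbitrarily small. What actually forces the separation is the \emph{contrary assumption} at the mixture level, i.e.\ Eqn.~\eqref{eqn:robustify-contrary} averaged over $\type\sim\vec\point^*_\segment$, which yields $\Rev(\bar\price,\valuedist(\vec\point^*_\segment))\ge\Rev(p^*,\valuedist(\vec\point^*_\segment))-\epsilon_R$. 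Combined with Lemma~\ref{lem:robustify-optimal-revenue-lower-bound} (as a lower bound on $\Rev(p^*,\valuedist(\vec\point^*_\segment))$, not on $q^*$), this gives $\Rev(p^*,\valuedist(\vec\point^*_\segment))/\Rev(\bar\price,\valuedist(\vec\point^*_\segment))\le 1+O(\epsilon_R\typenum/\epsilon_I)$, and then $\bar q/q^*\le 1-\epsilon_I$ finishes the bound on $p^*/\bar\price$. Swap optimality for the mixture-level contrary assumption and your argument goes through.
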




\begin{proof}
    We first prove:
    \[
        \frac{\bar{q}(t)}{q^*(t)} \le 1 - \Omega \big( \epsilon_I \big)
        ~.
    \]

    By the condition in the lemma, we have:
    \[
        \bar{\price} \bar{q}(t) \le p^* q^*(t) + O \bigg(  \frac{\epsilon_R\typenum}{\epsilon_I} \bigg) \le \left( 1 + \frac{\epsilon_R\typenum^2}{\epsilon_I^2} \right) \cdot p^* q^*(t)
        ~,
    \]
    where the second inequality follows from Lemma~\ref{lem:robustify-condition-two} (noting that $p^* q^*(t) = \Rev \big( p^*, \valuedist(\type) \big)$).

    Hence, we have:
    \begin{align*}
        \frac{\bar{q}(t)}{q^*(t)} & \le \left( 1 + O\bigg(\frac{\epsilon_R\typenum^2}{\epsilon_I^2}\bigg) \right) \cdot  \frac{p^*}{\bar{\price}} \\
        & = \left( 1 + O\bigg(\frac{\epsilon_R\typenum^2}{\epsilon_I^2}\bigg) \right) \cdot \frac{\Rev \big( p^*, \valuedist(\vec{\point}^*_\segment) \big)}{\Rev \big( \bar{\price}, \valuedist(\vec{\point}^*_\segment) \big)} \cdot \frac{\bar{q}}{q^*}
        ~.
    \end{align*}

    The second term on the RHS is further bounded as follows:
    \begin{align*}
        \frac{\Rev \big( p^*, \valuedist(\vec{\point}^*_\segment) \big)}{\Rev \big( \bar{\price}, \valuedist(\vec{\point}^*_\segment) \big)}
        & 
        \le \frac{\Rev \big( p^*, \valuedist(\vec{\point}^*_\segment) \big)}{\Rev \big( p^*, \valuedist(\vec{\point}^*_\segment) \big) - \epsilon_R} \quad \textrm{(Eqn.~\eqref{eqn:robustify-contrary})} \\
        & 
        \le \left( 1 + O \left(\frac{\epsilon_R \typenum}{\epsilon_I} \right) \right)~.
        \quad \textrm{(Lemma~\ref{lem:robustify-optimal-revenue-lower-bound})}
    \end{align*}

    The third term is bounded as:
    \begin{align*}
        \frac{\bar{q}}{q^*}
        &
        \le \frac{q^* - \epsilon_I}{q^*} 
        && \textrm{(Definition $\bar{\price}$)}  \\
        &
        \le 1 - \epsilon_I
        ~.
        && \textrm{($q^* \le 1$)}
    \end{align*}
    
    By Eqn.~\eqref{eqn:robustify-epsilons}:
    \[
        \epsilon_I \ge O \left(\frac{\epsilon_R \typenum}{\epsilon_I} \right) , 
        \epsilon_I \ge O \left(\frac{\epsilon_R\typenum^2}{\epsilon_I^2} \right)
        ~,
    \]
    so we have $\frac{\bar{q}(t)}{q^*(t)} \le 1 - \Omega \big( \epsilon_I \big)$. 
    Note that $p^* q^*(\type) \ge \Omega(\frac{\epsilon_I}{\typenum})$ and $p^*\le 1$, we have $q^*\ge \Omega(\frac{\epsilon_I}{\typenum})$, then
    \[
         q^*(t) - \bar{q}(t) 
         \ge \Omega \big( \epsilon_I \big)q^*(\type) 
         \ge \Omega \big( \frac{\epsilon_I^2}{\typenum}\big)
        ~.
    \]
\end{proof}




Up to now we have proved the existence of a type $\type$ that satisfies all the $3$ `large plateau' conditions.
Let $\type$ be such a type.
The next lemma handles the cases when $p(t)$ is large, medium, or small respectively, depending on its relation with $\bar{\price}$ and $p^*$.
We show that the revenue gap between prices $p^*$ and $\bar{p}$ is at least $\epsilon_R$,
and will lead to contradiction.

\begin{lemma}
     \label{lem:robustify-contradiction}
     For type $\type$ that satisfies all the $3$ conditions, we will have:
     \[
        \Rev \big( p^*, \valuedist(\type) \big)-\Rev \big( \bar{p}, \valuedist(\type) \big) 
        =
        \Omega\left( \frac{\epsilon_S\typenum}{\epsilon_I}\right) ~.
     \]
\end{lemma}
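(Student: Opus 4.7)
The plan is to case-split on the location of the monopoly price $p(\type)$ of $\valuedist(\type)$ relative to $p^*$ and $\bar{p}$, and in each case apply the concavity (Condition~\ref{con:concavity} of Definition~\ref{def:mhr-like-distributions}) and strong concavity (Condition~\ref{con:strong-concavity}) of the MHR-like revenue curve. Write $q(\type),q^*(\type),\bar{q}(\type)$ for the quantiles of $p(\type),p^*,\bar{p}$ w.r.t.\ $\valuedist(\type)$, and $R(\type)=\Rev(p(\type),\valuedist(\type))$. By the ``large plateau'' condition 2 and optimality of $p(\type)$, $R(\type)\ge \Rev(p^*,\valuedist(\type))\ge \Omega(\epsilon_I/\typenum)$; by condition 3, $q^*(\type)-\bar{q}(\type)\ge \Omega(\epsilon_I^2/\typenum)$.

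I will first carry out the bottleneck case $p(\type)\le p^*$ outlined in Section~\ref{sec:proof-robustify-find-type-main}. Here $q(\type)\ge q^*(\type)>\bar{q}(\type)$, so strong concavity at $\bar{p}$ gives $R(\type)-\Rev(\bar{p},\valuedist(\type))\ge \tfrac{1}{4}(q(\type)-\bar{q}(\type))^2 R(\type)$. Concavity then says the chord from $(q(\type),R(\type))$ to $(\bar{q}(\type),\Rev(\bar{p},\valuedist(\type)))$ lies under the curve, and evaluating it at $q^*(\type)$ yields
\[
\Rev(p^*,\valuedist(\type))-\Rev(\bar{p},\valuedist(\type)) \ge \frac{q^*(\type)-\bar{q}(\type)}{q(\type)-\bar{q}(\type)}\cdot\bigl(R(\type)-\Rev(\bar{p},\valuedist(\type))\bigr) \ge \tfrac{1}{4}(q^*(\type)-\bar{q}(\type))^2 R(\type),
\]
where the last step uses $q(\type)\ge q^*(\type)$. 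Substituting the lower bounds on $R(\type)$ and $q^*(\type)-\bar{q}(\type)$ yields $\Omega(\epsilon_I^5/\typenum^3)$, which exceeds $\epsilon_S\typenum/\epsilon_I=\epsilon_R$ by Eqn.~\eqref{eqn:robustify-epsilons}.

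The other two cases reduce to variants of the same chord-plus-strong-concavity argument. When $p^*<p(\type)\le\bar{p}$, condition 1 additionally applies to $p(\type)$, bounding $R(\type)-\Rev(p^*,\valuedist(\type))\le \tilde{O}(\epsilon_R\typenum/\epsilon_I)$; strong concavity at $p^*$ turns this into an upper bound on $q^*(\type)-q(\type)$ which, under Eqn.~\eqref{eqn:robustify-epsilons}, is at most half of $q^*(\type)-\bar{q}(\type)$. Hence $q(\type)-\bar{q}(\type)\ge \Omega(\epsilon_I^2/\typenum)$, and rerunning the bottleneck argument on the decreasing side of $p(\type)$ yields $R(\type)-\Rev(\bar{p},\valuedist(\type))\ge \Omega(\epsilon_I^5/\typenum^3)$; subtracting the condition~1 bound gives the desired revenue gap. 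When $p(\type)>\bar{p}$, both $p^*$ and $\bar{p}$ lie on the increasing side of the curve, and the symmetric chord argument instead produces $\Rev(\bar{p},\valuedist(\type))-\Rev(p^*,\valuedist(\type))\ge \Omega(\epsilon_I^5/\typenum^3)$. This contradicts the condition~1 upper bound on the same quantity, so no type satisfying conditions 1--3 can lie in this case, and the lemma is vacuous there.

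The main obstacle is the middle case $p^*<p(\type)\le\bar{p}$, where two applications of strong concavity---on either side of $p(\type)$---must be reconciled with the coarse revenue bound from condition~1 and the fine quantile bound from condition~3; the parameter choice in Eqn.~\eqref{eqn:robustify-epsilons} is precisely what forces $q^*(\type)-q(\type)$ to be small compared with $q^*(\type)-\bar{q}(\type)$. The remaining bookkeeping---log factors from Lemma~\ref{lem:robustify-beat-union-bound} and constants from MHR-likeness---is routine.
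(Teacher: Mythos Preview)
Your case split matches the paper's, and your handling of the bottleneck case $p(\type)\le p^*$ is correct and in fact more explicit than the paper about combining concavity (the chord bound) with strong concavity.

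However, there is a genuine quantitative gap in the two cases $p(\type)>p^*$. In both of them you rely only on condition~3's quantile gap $q^*(\type)-\bar{q}(\type)\ge \Omega(\epsilon_I^2/\typenum)$. With that bound, the chord-plus-strong-concavity argument gives you at best $\Omega(\epsilon_I^5/\typenum^3)$. To contradict condition~1 you need this to exceed $\tilde{O}(\epsilon_R\typenum/\epsilon_I)=\tilde{O}(\epsilon_S\typenum^2/\epsilon_I^2)$, i.e.\ $\epsilon_I^7 \gtrsim \epsilon_S\typenum^5\log\valuenum$; but Eqn.~\eqref{eqn:robustify-epsilons} only gives $\epsilon_I^6 \gtrsim \epsilon_S\typenum^4\log\valuenum$, which is a factor $\typenum/\epsilon_I$ short. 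The same shortfall kills your middle-case claim that ``$q^*(\type)-q(\type)$ is at most half of $q^*(\type)-\bar{q}(\type)$'': that inequality would need $\sqrt{\epsilon_R}\,\typenum/\epsilon_I \lesssim \epsilon_I^2/\typenum$, i.e.\ $\epsilon_S \lesssim \epsilon_I^7/\typenum^5$, again one power of $\typenum/\epsilon_I$ beyond what \eqref{eqn:robustify-epsilons} supplies.

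The missing ingredient is Condition~\ref{con:monopoly-sale-prob} of Definition~\ref{def:mhr-like-distributions}. Whenever $p(\type)\ge p^*$ you have $q^*(\type)\ge q(\type)\ge 1/e=\Omega(1)$, and then the intermediate step in Lemma~\ref{lem:robustify-quantile-gap}, namely $\bar{q}(\type)/q^*(\type)\le 1-\Omega(\epsilon_I)$, upgrades the quantile gap to $q^*(\type)-\bar{q}(\type)\ge \Omega(\epsilon_I)$ rather than $\Omega(\epsilon_I^2/\typenum)$. With this stronger gap your argument (or the paper's simpler ``one of the two half-gaps is $\Omega(\epsilon_I)$'' split) produces $\Omega(\epsilon_I^3/\typenum)$, which \emph{does} beat $\tilde{O}(\epsilon_R\typenum/\epsilon_I)$ under \eqref{eqn:robustify-epsilons}. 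So the fix is local: invoke the large-monopoly-sale-probability property to boost the quantile gap in the two non-bottleneck cases before running your chord argument.
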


\begin{proof}
Consider the relationship between $p(\type)$ and $[p^*,\bar{p}]$.

\paragraph{(1)}$p(\type)\ge \bar{p}$. 
In this case $q(\type)\le \bar{q}(\type)\le q^*(\type)$. 
The quantile of $p(t)$ w.r.t.\ $\valuedist(\type)$ is at least $\Omega(1)$ (Condition \ref{con:monopoly-sale-prob} of Defintion~\ref{def:mhr-like-distributions}), we have that $q^*(\type) = \Omega(1)$.
Recall in Lemma~\ref{lem:robustify-quantile-gap} we obtain $q^*(t) - \bar{q}(t) \ge \Omega \big( \epsilon_I \big)q^*(\type)$. 
So $q^*(t) - \bar{q}(t) \ge \Omega \big( \epsilon_I \big)$.
Further, by $\Rev \big( p^*, \valuedist(\type) \big) \ge \Omega (\frac{\epsilon_I}{\typenum})$, we have:
    \[
        R(\type) \ge \Omega (\frac{\epsilon_I}{\typenum})
        ~.
    \]

    By that both $\bar{\price}$ and $p^*$ are no more than the monopoly price $p(\type)$, the concavity of the revenue curve, and the strong concavity near monopoly price (Conditions \ref{con:concavity} and \ref{con:strong-concavity} of Definition \ref{def:mhr-like-distributions}), we get that the revenue of $\bar{\price}$ w.r.t.\ $\valuedist(\type)$ is larger than that of $p^*$ by at least:
    \[
        \Omega \left( \frac{\epsilon_I}{\typenum} \cdot \epsilon_I^2 \right) = \Omega \left( \frac{\epsilon_I^3}{\typenum} \right) 
        \ge \epsilon_R
        ~,
    \]
    where the inequality follows by Eqn.~\eqref{eqn:robustify-epsilons}.

\paragraph{(2)}$p(\type)\in [p^*,\bar{p})$. 
Follow the same statements as (1) we have:
    \[
        q^*(t) - \bar{q}(t) \ge \Omega \big( \epsilon_I \big),\quad 
        R(\type) \ge \Omega (\frac{\epsilon_I}{\typenum})
        ~.
    \]
Note that $q^*(\type)\ge q(\type)> \bar{q}(\type)$, the gap between $q(\type)$ and either $q^*(t)$ or $\bar{q}(\type)$ (or both) is at least $\Omega(\epsilon_S)$. 
On one hand, suppose it is the former.
The strong concavity of the revenue curve near the monopoly price (Condition \ref{con:strong-concavity} of Definition \ref{def:mhr-like-distributions}) implies that:
    \begin{align*}
        \Rev \big( p(\type), \valuedist(\type) \big) - \Rev \big( p^*, \valuedist(\type) \big) 
        & 
        \ge \Omega \left( \frac{\epsilon_I}{\typenum} \cdot \epsilon_I^2 \right) \\
        & = \Omega \left( \frac{\epsilon_I^3}{\typenum} \right) 
        ~.
    \end{align*}

On the other hand, suppose it is the latter:
The strong concavity of the revenue curve near the monopoly price (Condition \ref{con:strong-concavity} of Definition \ref{def:mhr-like-distributions}) implies that:
    \begin{align*}
        \Rev \big( p(\type), \valuedist(\type) \big) - \Rev \big( \bar{\price}, \valuedist(\type) \big) 
        & 
        \ge \Omega \left( \frac{\epsilon_I}{\typenum} \cdot \epsilon_I^2 \right) \\
        & = \Omega \left( \frac{\epsilon_I^3}{\typenum} \right)
        ~.
    \end{align*}
 Both cases violate the condition $1$ of the `large plateau' notion, because $ \Omega \left( \frac{\epsilon_I^3}{\typenum} \right)\ge \tilde{O}(\frac{\epsilon_R\typenum}{\epsilon_I})$ by Eqn.~\eqref{eqn:robustify-epsilons}. 
 
\paragraph{(3)}$p(\type)< p^*$. In this case we still have:
    \[
        q^*(t) - \bar{q}(t) \ge \Omega \big( \epsilon_I \big)q^*(\type),\quad 
        R(\type) \ge \Omega (\frac{\epsilon_I}{\typenum})
        ~.
    \]
And we can only use the quantile gap $ q^*(t) - \bar{q}(t) \ge \Omega(\frac{\epsilon_I^2}{\typenum})$ since $q^*(\type)$ is not always $\Omega(1)$.
By condition \ref{con:strong-concavity} of Definition \ref{def:mhr-like-distributions}, we get that the revenue of $p^*$ w.r.t.\ $\valuedist(\type)$ is larger than that of $\bar{\price}$ by at least:
    \[
        \Omega \left( \frac{\epsilon_I}{\typenum} \cdot \left( \frac{\epsilon_I^2}{\typenum} \right)^2 \right) = \Omega \left( \frac{\epsilon_I^5}{\typenum^3} \right) > \epsilon_R 
        ~,
    \]
    where the inequality follows from Eqn.~\eqref{eqn:robustify-epsilons}.
\end{proof}

\section{Poly-time Approximate Projection to MHR-like Distributions}
\label{app:find-nearby-mhr}
\begin{algorithm*}[t]
    \caption{Find a Nearby MHR-like Distribution}
    \label{alg:find-nearby-mhr}
    \begin{flushleft}
        \begin{tabular}{ll}
            \textbf{Input:} & A distribution $\empiricaldist$ that is $\epsilon_S$-close to some MHR-like distribution $\valuedist$ in terms of $d_q$;\\
            & the optimal price $p^*$ of the underlying MHR-like value distribution. \\[1ex]
            \textbf{Output:} & An MHR-like distribution $\widetilde{\empiricaldist}$ that is $6\epsilon_S$-close to $\valuedist$ in terms of norm $d_q$.
        \end{tabular}
    \end{flushleft}
    %
    %
    \begin{enumerate}
        \item Construct a sequence of distributions as follows:
        \begin{enumerate}
            \item $\empiricaldist_1$: Subtract $\epsilon_S$ from the quantile of every value; round it to $0$ if it becomes negative.
                \label{step:empirical_1}
            %
            %
            \item $\empiricaldist_2$: Increase the quantile of $p^*$ to that in $\empiricaldist$ plus $\epsilon_S$.
                \label{step:empirical_2}
            \item $\empiricaldist_3$: Increase the quantile of each value until it meets the convex hull of the revenue curve.
                \label{step:empirical_3}
        \end{enumerate}
        \item Return $\widetilde{\empiricaldist} = \empiricaldist_3$.
    \end{enumerate}
\end{algorithm*}

\begin{figure*}[t]
    \begin{subfigure}{.24\textwidth}
        \includegraphics[width=\textwidth]{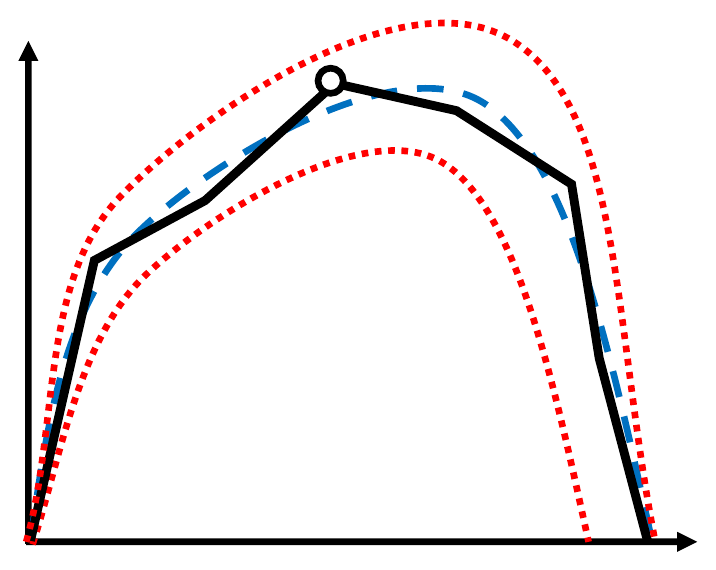}
        \caption{$\empiricaldist$ (empirical)}
        \label{fig:empiricaldist}
    \end{subfigure}
    \begin{subfigure}{.24\textwidth}
        \includegraphics[width=\textwidth]{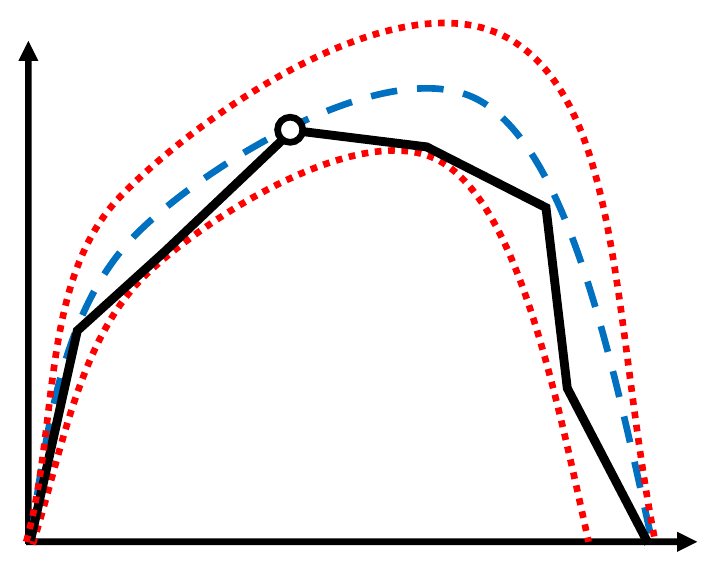}
        \caption{$\empiricaldist_1$ (dominated)}
        \label{fig:empiricaldist_1}
    \end{subfigure}
    %
    %
    \begin{subfigure}{.24\textwidth}
        \includegraphics[width=\textwidth]{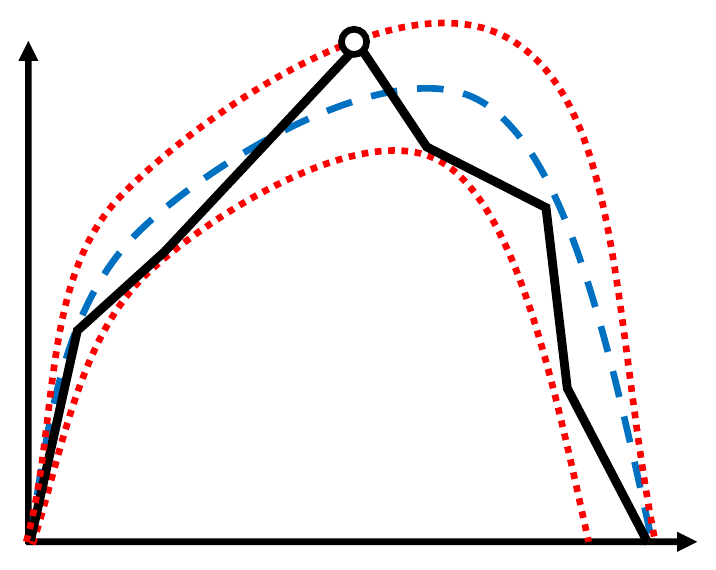}
        \caption{$\empiricaldist_2$ (strengthening $p^*$)}
        \label{fig:empiricaldist_2}
    \end{subfigure}
    \begin{subfigure}{.24\textwidth}
        \includegraphics[width=\textwidth]{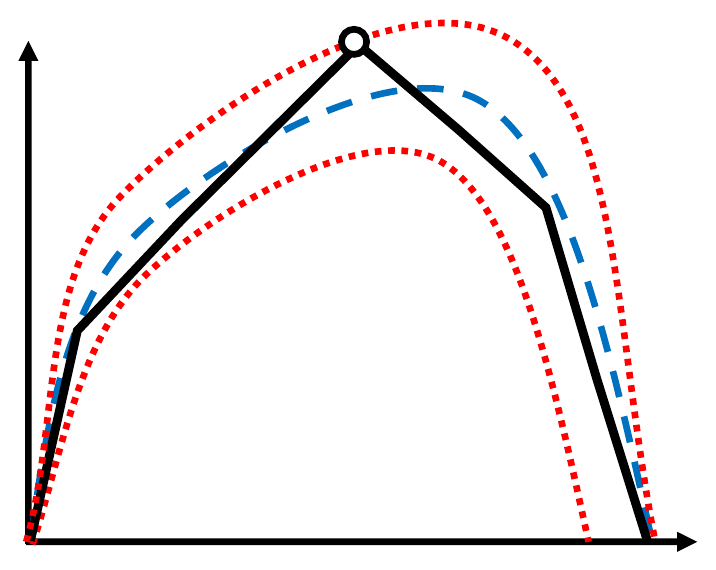}
        \caption{$\widetilde{\empiricaldist} = \empiricaldist_3$ (ironing)}
        \label{fig:empiricaldist_3}
    \end{subfigure}
    \caption{Illustrative figures (in terms of the revenue curves in the quantile space) of how Algorithm~\ref{alg:find-nearby-mhr} finds a nearby MHR-like distribution. The dashed curve is the revenue curve of the true distribution. The dotted curves are those obtain by subtracting/adding $O(\epsilon_S)$ to the quantiles of every value (rounded to $0$ or $1$ if necessary). The bold black curve corresponds to the distributions maintained by the algorithm.}
    \label{fig:find-nearby-mhr}
\end{figure*}
This subsection explains how we can find an MHR-like distribution that is $6\epsilon_S$-close to the true distribution in polynomial time, given an empirical distribution that is $\epsilon_S$-close.
Our algorithm will guess which price is the optimal one w.r.t.\ the true distribution by brute-force.
Given each guess $p^*$, we try to find a nearby MHR-like distribution conditioned on $p^*$ being optimal using Algorithm~\ref{alg:find-nearby-mhr}.
It constructs a sequence of distributions with the last one being the desired output, provided that the guess of $p^*$ is correct.

The first distribution $\empiricaldist_1$ is obtained by subtracting $\epsilon_S$ from the quantiles of all values, rounding up to $0$ if necessary.
The quantile of each value $v$ w.r.t.\ the empirical distribution $\empiricaldist$ is within a $[-\epsilon_S, \epsilon_S]$ window near that w.r.t.\ the true distribution;
in contrast, the quantile w.r.t.\ $\empiricaldist_1$ is within a $[-2\epsilon_S, 0]$ window.
Distribution $\empiricaldist_1$ is dominated by the true distribution in the sense of first-order stochastic dominance and, hence, is called the dominated empirical distribution (e.g., \cite{GHZ19, RS16}).


Then, we construct the second distribution $\empiricaldist_2$ by increasing the quantile of the conjectured monopoly price $p^*$ to ensure that it is at least as large as in the true distribution.
The purpose of this step is to ensure the strong concavity property near the monopoly price.

Finally, we run an ironing step to restore concavity. 
The result, i.e., $\empiricaldist_3$, is the final output.
See Figure~\ref{fig:find-nearby-mhr} for an illustrative picture of relations between the revenue curves of the sequence of distributions constructed by the algorithm.

\begin{lemma}
    \label{lem:robustify-mhr-like-projection}
    Given any distribution $\empiricaldist$ that is $\epsilon_S$-close to an MHR-like distribution $\valuedist$, and the monopoly price $p*$ of $\valuedist$, Algorithm~\ref{alg:find-nearby-mhr} computes in polynomial time an MHR-like distribution $\widetilde{\empiricaldist}$ that is $6\epsilon_S$-close to $\valuedist$ in terms of norm $d_q(\cdot, \cdot)$.
\end{lemma}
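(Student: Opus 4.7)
My plan is to verify three claims about the output $\widetilde{\empiricaldist}$: polynomial running time (immediate from inspection of the three steps), the distance bound $d_q(\widetilde{\empiricaldist}, \valuedist) \le 6\epsilon_S$, and the four properties of Definition~\ref{def:mhr-like-distributions}.

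First I would track quantiles through the three construction steps. Let $q(v), q_1(v), q_2(v), \tilde q(v)$ denote the quantile of value $v$ in $\valuedist, \empiricaldist_1, \empiricaldist_2, \widetilde{\empiricaldist}$ respectively. The hypothesis gives $|q_\empiricaldist(v)-q(v)|\le\epsilon_S$ for all $v$. After step (a), $q_1(v) \in [\max\{q(v)-2\epsilon_S,0\},q(v)]$, so $\empiricaldist_1$ is first-order stochastically dominated by $\valuedist$, and its revenue curve lies pointwise below that of $\valuedist$. Step (b) modifies only $p^*$, setting $q_2(p^*)\in [q(p^*),q(p^*)+2\epsilon_S]$; for $v\ne p^*$, $q_2(v)=q_1(v)$. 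Using $p^*\le 1$ and concavity of $\valuedist$'s revenue curve $R_\valuedist$ with peak at $q(p^*)$, a short calculation then shows that every point $(q_2(v),v\,q_2(v))$ of $\empiricaldist_2$'s revenue curve lies within vertical distance $2\epsilon_S$ of $R_\valuedist$.

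The main work is in the ironing step (c). Let $E$ denote the upper concave envelope of the points $(q_2(v), v\,q_2(v))$. Because $R_\valuedist+2\epsilon_S$ is concave and lies above every such point, $E(q)\le R_\valuedist(q)+2\epsilon_S$ pointwise; combining with $E\ge R_{\empiricaldist_2}$ and the dominance-based lower bound on $R_{\empiricaldist_2}$ yields $|E(q)-R_\valuedist(q)|\le O(\epsilon_S)$ throughout the envelope's domain. The ironing assigns to each value $v$ the smallest $q\ge q_2(v)$ at which the ray $y=vq$ meets $E$; since the same ray meets $R_\valuedist$ at $q=q(v)$, the $O(\epsilon_S)$ vertical gap between $E$ and $R_\valuedist$ translates into an $O(\epsilon_S)$ horizontal shift in the intersection point, giving $|\tilde q(v)-q(v)|\le 6\epsilon_S$ after absorbing the $2\epsilon_S$ shifts from steps (a) and (b).

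For the MHR-like conditions: (1) concavity holds by construction of the envelope. (2) For strong concavity near the monopoly price, I would first show $p^*$ remains the monopoly price of $\widetilde{\empiricaldist}$: $\Rev(p^*,\empiricaldist_2)\ge\Rev(p^*,\valuedist)$ by step (b), while $\Rev(p',\empiricaldist_2)=\Rev(p',\empiricaldist_1)\le\Rev(p',\valuedist)\le\Rev(p^*,\valuedist)$ for every $p'\ne p^*$ by dominance and the optimality of $p^*$ in $\valuedist$; since ironing only raises the curve where it is strictly below the envelope, $p^*$ is also the peak of $E$. The strong concavity inequality then follows from that of $\valuedist$ together with an $O(\epsilon_S)$ slack on both sides --- this is the source of the constant-factor relaxation flagged in Section~\ref{sec:sample-complexity-mhr-like-proof}. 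Conditions (3) and (4) follow because a Kolmogorov--Smirnov perturbation of $O(\epsilon_S)$ changes the monopoly sale probability and the expected value by at most $O(\epsilon_S)$, absorbed by the constants $\tfrac{1}{e}$. The main obstacle I anticipate is the quantitative ironing bound: cleanly converting the vertical gap $|E-R_\valuedist|\le O(\epsilon_S)$ into a horizontal quantile gap uniformly over $v$, including small $v$ where the ray $y=vq$ is nearly flat; the cleanest route is to exploit that $E$ and $R_\valuedist$ share the same qualitative shape (rising, peaking near $q(p^*)$, then falling), so both rays intersect each curve at controllable locations, and to appeal to the constant-factor relaxation the paper explicitly allows.
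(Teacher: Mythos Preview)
Your overall plan is reasonable, but there are two substantive gaps.

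First, and most serious, is the distance bound. Your route --- bound $|E - R_\valuedist|$ vertically, then convert to a quantile bound via the ray $y=vq$ --- does not give a uniform $O(\epsilon_S)$ horizontal shift. The conversion factor is essentially $1/|v - R'_\valuedist(q(v))|$; by concavity through the origin one always has $R'_\valuedist(q(v)) \le v$, but this difference can be arbitrarily small (whenever $R_\valuedist$ is nearly linear from the origin to $(q(v),vq(v))$), so ``nearly flat ray'' is not the only failure mode. The paper avoids this conversion entirely. For any ironed interval whose two endpoints are both values $v \ne p^*$, both endpoint revenue points lie weakly below the concave curve $R_\valuedist$, hence so does the whole chord; then the ray $y=vq$ meets the chord at some $\tilde q(v) \le q(v)$ --- no upper-bound slack at all. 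Only on an ironed interval having $p^*$ as an endpoint can $\tilde q(v)$ exceed $q(v)$, and there the paper uses an area-ratio argument: with $Q^* = (q^*,p^*q^*)$, $Q^*_+ = (q^*+2\epsilon_S,\,p^*(q^*+2\epsilon_S))$, and $Q'$ the other endpoint, one gets $\delta_q/q \le 2\epsilon_S/q^*$, and then Condition~\ref{con:monopoly-sale-prob} ($q^* \ge 1/e$) yields $\delta_q \le 2e\,\epsilon_S < 6\epsilon_S$. This is the missing step, and it is precisely where the MHR-like hypothesis on $\valuedist$ enters the closeness proof.

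Second, the lemma asserts that $\widetilde{\empiricaldist}$ is \emph{exactly} MHR-like, not approximately. Your arguments for Conditions~\ref{con:strong-concavity}--\ref{con:revenue-welfare-ratio} all introduce $O(\epsilon_S)$ slack and then invoke the ``constant-factor relaxation'' from Section~\ref{sec:sample-complexity-mhr-like-proof}; but that remark relaxes the distance bound ($\epsilon_S \mapsto 6\epsilon_S$), not the constants $\tfrac14$ and $\tfrac1e$ in Definition~\ref{def:mhr-like-distributions}. The paper obtains the conditions exactly. For strong concavity, observe that $(1-\tfrac14(q^*-q)^2)\,p^*q^* - pq$ is increasing in $q^*$ and decreasing in $q$; by construction $q^*$ weakly rises and every other quantile weakly falls in passing from $\valuedist$ to $\empiricaldist_2$, so the inequality transfers, and ironing preserves it because a concave function dominating a point set also dominates its concave envelope. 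Condition~\ref{con:monopoly-sale-prob} holds because the quantile of $p^*$ only increases. Condition~\ref{con:revenue-welfare-ratio} holds because, relative to $\valuedist$, lowering the quantiles of $v\ne p^*$ reduces welfare while keeping optimal revenue, and then raising the quantile of $p^*$ increases optimal revenue and welfare by the same amount.
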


\begin{proof}
    The running time is trivial, noting that convex hulls can be computed in polynomial time.
    It remains to show that the output is an MHR-like distribution, and is $6\epsilon_S$-close to $\empiricaldist$.

    \paragraph{MHR-like -- concavity:}
    This part follows by the definition of the algorithm~\ref{alg:find-nearby-mhr} (step~\ref{step:empirical_3}).
     
    \paragraph{MHR-like -- strong concavity near monopoly price:}
    We will show that (1) $p^*$ is the monopoly price of the final distribution $\widetilde{\empiricaldist}$, and that (2) for any price $p$ whose quantile w.r.t.\ $\widetilde{\empiricaldist}$ is $q$, we have:
    \[
        \big( 1 - \tfrac{1}{4}(q^* - q)^2 \big) \cdot \Rev \big( p^*, \widetilde{\empiricaldist} \big) - \Rev \big( p , \widetilde{\empiricaldist} \big) \ge 0
        ~,
    \]
    \[
        \big( 1 - \tfrac{1}{4}(q^* - q)^2 \big) \cdot p^* q^* - p q \ge 0
        ~.
    \]

    We first prove them for distribution $\empiricaldist_2$.
    Note that the quantiles of all values other than $p^*$ are at most their counterparts in the true distribution $\valuedist$ (step~\ref{step:empirical_1}), while the quantile of $p^*$ is at least that in $\valuedist$ (step~\ref{step:empirical_2}).
    Hence, $p^*$ is also the monopoly price for $\empiricaldist_2$.
    Further, the LHS of the above inequality is increasing in $q^*$ (fixing any $p$, $p^*$, and $q$), and decreasing in $q$ (fixing any $p$, $p^*$, and $q^*$).
    Since that the inequality holds for the true distribution, and that $q^*$ weakly increases and $q$ weakly decreases compared to the true distribution, it also holds for $\empiricaldist_2$.
    Finally, we argue that, moving from $\empiricaldist_2$ to $\widetilde{\empiricaldist} = \empiricaldist_3$, i.e., the ironing step, will not make $p^*$ suboptimal, as the highest point of the revenue curve will not be ironed.
    The inequality will also continue to hold, because the first term is a concave function of $q$ that stays the same while the second term becomes the convex hull of the counter part in $\valuedist_2$. 
    If a concave function dominates another one, it dominates its convex hull as well.

    \paragraph{MHR-like -- large monopoly sale probability:}
    It follows from the fact that the quantile, i.e., the sale probability, of $p^*$ weakly increases by our construction.

    \paragraph{MHR-like -- small revenue and welfare gap:}
    First, imagine that the quantile of $p^*$ is only increase to that in the true distribution $\valuedist$, while the quantiles of other values weakly decrease by our construction.
    Then, the optimal revenue stays the same and the social welfare weakly decreases and therefore the small gap property continues to hold.
    Then, we further increase the quantile of $p^*$ to that in our final output distribution $\widetilde{\empiricaldist}$.
    As a result, the social welfare as well as the optimal revenue increase by the same amount.
    This will not change the fact that their gap is small.

    \paragraph{$6\epsilon_S$-close:}
    This is true by definition up to distribution $\empiricaldist_2$.
    It remains to show that the ironing step will not make the quantile of some value $v$ exceeds its quantile w.r.t.\ the true distribution $\valuedist$ by more than $2\epsilon_S$.
    For any ironing that does not involved $p^*$, this is trivial: 
    since both endpoints of the ironed interval are below the revenue curve of the true distribution, which is concave, the entire ironed interval is below that as well.
\begin{figure*}[t]
        \centering
        \begin{subfigure}{.4\textwidth}
            \includegraphics[width=\textwidth]{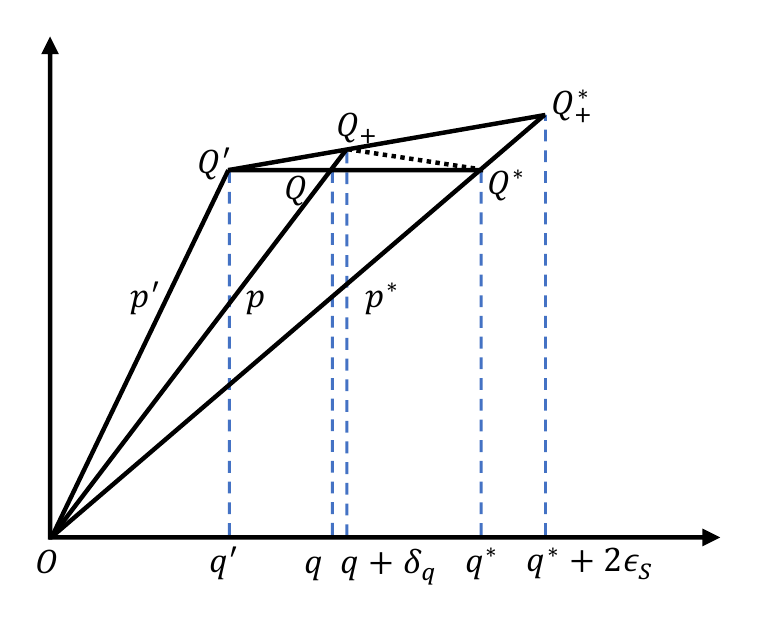}
            \caption{}
            \label{fig:find-nearby-mhr-5}
        \end{subfigure}
        \begin{subfigure}{.4\textwidth}
            \includegraphics[width=\textwidth]{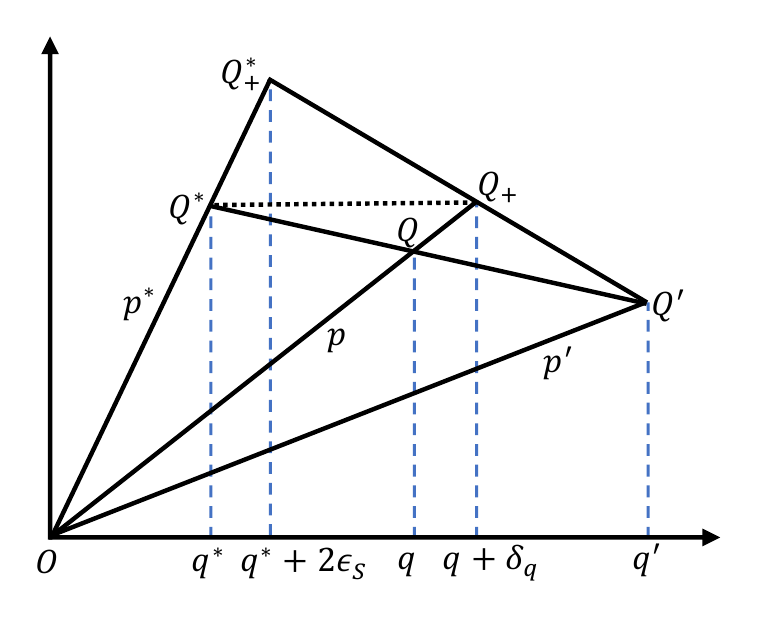}
            \caption{}
            \label{fig:find-nearby-mhr-6}
        \end{subfigure}
        \caption{Illustrative pictures for the proof of Lemma~\ref{lem:robustify-mhr-like-projection}}
    \end{figure*}
    Next, consider an ironed interval with $p^*$ being one of the endpoints. 
    Suppose $p^*$ is the right end point, i.e., there is another price $p' > p^*$.
    See Figure~\ref{fig:find-nearby-mhr-5} for an illustrative picture of the argument below.
    The quantile of $p'$ w.r.t.\ $\widetilde{\empiricaldist}$ is upper bounded by that w.r.t.\ the true distribution $\valuedist$, which we denote as $q'$;
    The quantile of $p^*$, on the other hand, is upper bounded by that w.r.t.\ the true distribution, denoted as $q^*$, plus $2\epsilon_S$ by our construction.
    Then, consider any price $p$ between $p^*$ and $p'$. 
    We can lower bound the quantile of $p$ as a function of $p'$, $p^*$, $q'$, and $q^*$, by the concavity of the revenue curve of $\valuedist$.
    We denote this quantile by $q$:
    $Q' = (q', p'q')$, $Q = (q, pq)$, and $Q^* = (q^*, p^*q^*)$ lie on the same line.
    Finally, $p$'s quantile in the final distribution, by definition, is determined by the ironing step.
    We denote this quantile by $q + \delta_q$:
    $Q' = (q', p'q')$, $Q_+ = (q+\delta_q, p (q+\delta_q))$, and $Q^*_+ = (q^*+2\epsilon_S, p^*(q^*+2\epsilon_S))$ lie on the same line.
    It suffices to bound $\delta_q$. 
    To do so, further let $O = (0, 0)$ be the origin.
    We have:
    \begin{equation}
        \label{eqn:find-nearby-mhr}
        \frac{\delta_q}{q} 
        = \frac{\mathrm{area}(Q'Q_+Q^*)}{\mathrm{area}(OQ'Q^*)} 
        \le \frac{\mathrm{area}(Q'Q^*_+Q^*)}{\mathrm{area}(OQ'Q^*)}  
        = \frac{2\epsilon_S}{q^*}
        ~.
    \end{equation}
    

    The other case, when $p^*$ is the left endpoint is almost verbatim up to the point of Eqn.~\eqref{eqn:find-nearby-mhr}.
    See Figure~\ref{fig:find-nearby-mhr-6} for an illustrative picture.
    The only catch is that we now have $q' \ge q^*$.
    Fortunately, $q'$, which is at most $1$, cannot much bigger since $q^* \ge \frac{1}{e}$ (Condition~\ref{con:monopoly-sale-prob} of Definition~\ref{def:mhr-like-distributions}).
    As a result, we get a weaker bound that $\delta_q \le e \cdot 2\epsilon_S \le 6 \epsilon_S$.
\end{proof}

\end{document}